\documentclass[11pt]{article}
\usepackage{amsmath,amsfonts,amsthm,amssymb,color}
\usepackage{thm-restate}
\usepackage[colorlinks=true]{hyperref}
\usepackage{fancybox}

\textheight 8.5in
\topmargin -0.2in
\oddsidemargin 0.20in
\textwidth 6.3in

\newtheorem{theorem}{Theorem}[section]
\newtheorem{corollary}[theorem]{Corollary}
\newtheorem{lemma}[theorem]{Lemma}

\newtheorem{fact}[theorem]{Fact}

\theoremstyle{definition}
\newtheorem{definition}[theorem]{Definition}

\newenvironment{fminipage}%
  {\begin{Sbox}\begin{minipage}}%
  {\end{minipage}\end{Sbox}\fbox{\TheSbox}}

\newenvironment{algbox}[0]{\vskip 0.2in
\noindent 
\begin{fminipage}{6.3in}
}{
\end{fminipage}
\vskip 0.2in
}

\def\expec#1#2{{\mathbb{E}}_{#1}\left[ #2 \right]}

\def\defeq{\stackrel{\mathrm{def}}{=}}

\def\abs#1{\left|#1  \right|}

\def\norm#1{\left\| #1 \right\|}

\newcommand\PPi{\boldsymbol{\Pi}}

\newcommand\ssigma{\boldsymbol{\sigma}}

\newcommand\ttau{\boldsymbol{\tau}}

\def\aa{\pmb{\mathit{a}}}
\newcommand\bb{\boldsymbol{\mathit{b}}}
\newcommand\cc{\boldsymbol{\mathit{c}}}

\newcommand\ee{\boldsymbol{\mathit{e}}}

\newcommand\pp{\boldsymbol{\mathit{p}}}

\newcommand\rr{\boldsymbol{\mathit{r}}}
\renewcommand\ss{\boldsymbol{\mathit{s}}}

\newcommand\uu{\boldsymbol{\mathit{u}}}
\newcommand\vv{\boldsymbol{\mathit{v}}}
\newcommand\ww{\boldsymbol{\mathit{w}}}
\newcommand\yy{\boldsymbol{\mathit{y}}}

\newcommand\xx{\boldsymbol{\mathit{x}}}
\newcommand\wwhat{\boldsymbol{\widehat{\mathit{w}}}}

\newcommand\wwbar{\overline{\boldsymbol{\mathit{w}}}}

\renewcommand\AA{\boldsymbol{\mathit{A}}}
\newcommand\BB{\boldsymbol{\mathit{B}}}
\newcommand\CC{\boldsymbol{\mathit{C}}}

\newcommand\EE{\boldsymbol{\mathit{E}}}

\newcommand\II{\boldsymbol{\mathit{I}}}

\newcommand\MM{\boldsymbol{\mathit{M}}}

\newcommand\PP{\boldsymbol{\mathit{P}}}
\newcommand\QQ{\boldsymbol{\mathit{Q}}}

\newcommand\RR{\boldsymbol{\mathit{R}}}
\renewcommand\SS{\boldsymbol{\mathit{S}}}

\newcommand\WW{\boldsymbol{\mathit{W}}}
\newcommand\WWbar{\overline{\boldsymbol{\mathit{W}}}}
\newcommand\VV{\boldsymbol{\mathit{V}}}

\newcommand{\expct}[2]{\ensuremath{\mathop{\text{\normalfont \textbf{E}}}_{#1}}\left[#2\right]}

\newcommand\trace[1]{{\boldsymbol{\mathit{tr}}}\left[ #1 \right] }

\newcommand\nnz{\boldsymbol{nnz}}
\newcommand\poly{\mathit{poly}}
\renewcommand{\Re}{\mathbb{R}}

\begin{document}

\title{$\ell_p$ Row Sampling by Lewis Weights}

\author{
  Michael B. Cohen\\
  M.I.T.\\
  \texttt{micohen@mit.edu}
  \and
 Richard Peng\\
  M.I.T.\\
  \texttt{rpeng@mit.edu}
}

\maketitle


\begin{abstract}
We give a simple algorithm to efficiently sample the rows of a matrix while
preserving the p-norms of its product with vectors.
Given an $n$-by-$d$ matrix $\AA$, we find with high probability and in
input sparsity time an $\AA'$ consisting of about $d \log{d}$ rescaled rows of
$\AA$ such that $\norm{\AA \xx}_1$ is close to $\norm{\AA' \xx}_1$ for all vectors $\xx$.
We also show similar results for all $\ell_p$ that give nearly optimal sample bounds 
in input sparsity time.  Our results are based on sampling by ``Lewis weights'',
which can be viewed as statistical leverage scores of a reweighted matrix.
We also give an elementary proof of the guarantees of this sampling process for $\ell_1$.
\end{abstract}

\pagenumbering{gobble}

\pagebreak

\pagenumbering{arabic}


\section{Introduction}
\label{sec:intro}
Randomized sampling is an important tool in the design of efficient algorithms.
A random subset often preserves key properties of the entire data set,
allowing one to run algorithms on a small sample.
A particularly useful instance of this phenomenon is row sampling of matrices.
For a $n \times d$ matrix $\AA$ where $n \gg d$ and any error parameter
$\epsilon > 0$, we can find $\AA'$ with a few (rescaled) rows of $\AA$ such that
\begin{equation*}
\norm{\AA \xx}_{p} \approx_{1 + \epsilon} \norm{\AA' \xx}_{p}
\end{equation*}
for all vectors $\xx \in \Re^{d}$. Here $\approx_{1 + \epsilon}$
denotes a multiplicative error between $(1 + \epsilon)^{-1}$ and $(1 + \epsilon)$.

Originally studied in statistics~\cite{Talagrand90,RudelsonV07,Tropp12},
the row sampling problem has received much attention recently in randomized numerical linear
algebra~\cite{DrineasMM06,DrineasMMW11,ClarksonW13,MahoneyM13,NelsonN12,LiMP13,CohenLMMPS14:arxiv}
and in graph algorithms as spectral sparsification~\cite{SpielmanS08:journal,BatsonSS12,KoutisLP12}.
These works led to a good understanding of row sampling for the $p = 2$ case.
If the rows of $\AA$ are sampled with probabilities proportional to their
statistical leverage scores, matrix Chernoff bounds~\cite{AhlswedeW02,RudelsonV07,Tropp12}
state that $\AA'$ with $O(d \log{d} / \epsilon^{-2})$ is a good approximation with high probability. 
Recently, Clarkson and Woodruff developed oblivious subspace embeddings
that brought the runtime of these algorithms down to input-sparsity time~\cite{ClarksonW13}.
A derandomized procedure by Batson et al.~\cite{BatsonSS12} can also
reduce the number of rows in $\AA'$ to $O(d / \epsilon^2)$, at the cost of a worse,
but still polynomial, runtime.

Substantial progress has also been made for other values of $p$~\cite{DasguptaDHKM09,SohlerW11,ClarksonDMMMW13,MahoneyM13},
leading to input-sparsity time algorithms that return samples
with about $d^{2.5}$ rows when $1 \leq p \leq 2$.
The $p = 1$ case is of particular interest due to its relation
to robust regression, or $\ell_1$-regression~\cite{Candes06}.
In this setting, the existence of samples of size $O(d \log{d})$ was
shown by Talagrand~\cite{Talagrand90} using Banach space theory.

Talagrand's result, and the best known sampling results for general
$\ell_p$ norms~\cite{BourgainLM89,TalagrandLp}, are based on a
``change of density'' construction originally due to Lewis~\cite{lewis}. This construction
assigns a weight, analogous to a leverage score, to each row; these can be used
directly as sampling probabilities.  We will refer to these weights as ``Lewis weights''.
Given their direct use as sampling probabilities, the primary algorithmic challenge
is to be able to compute, or at least approximate, these weights.  This paper provides
the first input-sparsity time algorithms, and in fact the first polynomial time algorithms,
for this problem.  That in turn leads to the first polynomial time algorithmic versions
of the Talagrand and other Lewis weight-based results.

In particular, we give a simple iterative algorithm that approximates Lewis weights
through repeated computations of statistical leverage scores.
Sampling by these approximate weights then leads to the following result:

\begin{theorem}
\label{thm:main}
Given a matrix $\AA$ and any error parameter $\epsilon > 0$,
there is a distribution of matrices $\SS$ with $O(d \log d \epsilon^{-2})$
rows and one nonzero entry per row such that with high probability
\begin{equation*}
\norm{\SS \AA \xx}_1 \approx_{1 + \epsilon} \norm{\AA \xx}_1
\qquad \text{for all } \xx \in \Re_{d}.
\end{equation*}
Furthermore, we can sample from this distribution using $O(\log \log{n})$ calls
to computing 2-approximate statistical leverage scores of matrices of the
form $\WW \AA$ where $\WW$ is a non-negative diagonal matrix.
\end{theorem}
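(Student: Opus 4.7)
The plan is to decompose Theorem~\ref{thm:main} into the two pieces it combines: (i) a sampling theorem saying that weighting rows by their $\ell_1$ \emph{Lewis weights} (with $O(\log d/\epsilon^2)$ oversampling) preserves $\norm{\AA\xx}_1$ for all $\xx$, and (ii) an algorithm producing constant-factor approximations to those weights using only $O(\log\log n)$ leverage-score calls.

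For (i), the $\ell_1$ Lewis weights are the positive vector $\ww^*$, with $\WW^*\defeq\diag(\ww^*)$, satisfying
\[
(w_i^*)^2 \;=\; \aa_i^T\bigl(\AA^T (\WW^*)^{-1}\AA\bigr)^{-1}\aa_i,
\]
equivalently $w_i^*=\tau_i((\WW^*)^{-1/2}\AA)$. A short trace calculation shows $\sum_i w_i^*=d$, so independently retaining row $i$ with probability $p_i=\min(1,\,c\epsilon^{-2}\log d\cdot w_i^*)$ and rescaling by $1/p_i$ produces an $\SS$ with expected $O(d\log d/\epsilon^2)$ nonzero rows. I would then invoke the elementary Lewis-weight sampling theorem promised in the abstract to conclude $\norm{\SS\AA\xx}_1\approx_{1+\epsilon}\norm{\AA\xx}_1$ for every $\xx$. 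Since constant-factor approximations to $\ww^*$ only inflate the oversampling constant, it suffices to produce such approximations algorithmically.

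For (ii), I would analyze the fixed-point iteration
\[
w_i^{(k+1)} \;\defeq\; \bigl(\aa_i^T (\AA^T(\WW^{(k)})^{-1}\AA)^{-1}\aa_i\bigr)^{1/2}
\;=\; \bigl(w_i^{(k)}\cdot\tau_i((\WW^{(k)})^{-1/2}\AA)\bigr)^{1/2},
\]
each step of which is one leverage-score computation on a reweighted copy of $\AA$. The key lemma is a multiplicative contraction: if $w_i^{(k)}/w_i^*\in[\alpha,\beta]$ for all $i$, then $\alpha\WW^* \pleq \WW^{(k)} \pleq \beta\WW^*$, so inverting and conjugating by $\AA$ gives
\[
\alpha (\AA^T (\WW^*)^{-1}\AA)^{-1} \;\pleq\; (\AA^T (\WW^{(k)})^{-1}\AA)^{-1} \;\pleq\; \beta (\AA^T (\WW^*)^{-1}\AA)^{-1},
\]
and taking the quadratic form at $\aa_i$ followed by a square root yields $w_i^{(k+1)}/w_i^*\in[\sqrt{\alpha},\sqrt{\beta}]$. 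The multiplicative window thus square-roots each round, halving the $\log$-window. From a polynomially-bounded start (e.g.\ $\ww^{(0)}=\one$) the window is at most $\mathrm{poly}(n)$, so $O(\log\log n)$ iterations drive it to $O(1)$. A $2$-approximate leverage-score oracle merely inflates the window by a constant per round, absorbed by the doubly-exponential convergence.

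The main obstacle is (i): the elementary proof of the Lewis-weight sampling bound. At $p=2$, a matrix Chernoff bound on $\sum_i \aa_i\aa_i^T/p_i$ closes the argument directly, but the piecewise-linear $\ell_1$ norm forces more care -- typically a chaining argument over a net on the unit ball of $\AA\xx$, or a split of each term $|\aa_i^T\xx|/p_i$ into a bounded ``light'' part handled by scalar Bernstein plus a union bound and a ``heavy'' part controlled deterministically via the identity $(w_i^*)^2=\aa_i^T(\AA^T(\WW^*)^{-1}\AA)^{-1}\aa_i$ together with $\sum_i w_i^*=d$. Once (i) is in hand, the algorithm of (ii) plugs in directly and delivers the theorem.
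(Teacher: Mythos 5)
Your decomposition matches the paper exactly: Theorem~\ref{thm:main} is indeed proved by combining the $\ell_1$ Lewis-weight sampling theorem (Theorem~\ref{thm:l1chernoff}, or Talagrand's original result) with an iterative Lewis-weight approximation procedure (Lemma~\ref{lem:computeLewis}). Your fixed-point iteration $w_i^{(k+1)} = \bigl(\aa_i^T(\AA^T(\WW^{(k)})^{-1}\AA)^{-1}\aa_i\bigr)^{1/2}$ is precisely the $p=1$ instance of the paper's $\textsc{LewisIterate}$, and your contraction argument (multiplicative window square-roots each step, via Loewner-order monotonicity of inversion and conjugation) is exactly the paper's Lemma~\ref{lem:contraction}. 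The error tolerance for the $2$-approximate oracle is handled the same way, via the geometric accumulation $\beta^{\sum_t (p/2)|p/2-1|^t}$.

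One small imprecision worth flagging: you assert that $\ww^{(0)}=\one$ gives a ``polynomially-bounded start,'' but the ratio $w_i^{(0)}/w_i^* = 1/w_i^*$ is not a priori bounded by $\mathrm{poly}(n)$ since Lewis weights can be arbitrarily small. What is actually true -- and what the paper proves as Lemma~\ref{lem:initialization} -- is that \emph{after one application} of $\textsc{LewisIterate}$ the multiplicative window is at most $n^{1/2}$ (for $p=1$). The proof requires a nontrivial step: normalize so $\AA^T(\WW^*)^{-1}\AA=\II$, and then use Cauchy--Schwarz to show $\AA^T\AA\approx_{n}\II$, from which the first iterate (a function only of $\AA^T\AA$) is within an $n^{1/2}$ factor of $\wwbar$. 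With that correction your $O(\log\log n)$ iteration count goes through exactly as the paper argues.

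Your sketch of what the $\ell_1$ sampling theorem would require (symmetrization, chaining/net arguments, split into light/heavy contributions) is in the right spirit, and you correctly identify it as the hard part to be deferred; the paper handles it via a Rademacher comparison, duality to $\ell_\infty$, and Khintchine applied to $\PPi^T\ssigma$ where $\PPi$ is the $\wwbar^{-1}$-orthogonal projection. That elementary route is more specific than your sketch but is in the same family of arguments.
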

This routine can be combined with any algorithm for computing (approximate) statistical leverage scores.
By invoking input-sparsity time routines~\cite{ClarksonW13,MahoneyM13,
NelsonN12,LiMP13,CohenLMMPS14:arxiv} and approximating intermediate
objects to a coarser granularity, we obtain algorithms that compute $\AA'$ with
$O(d \log{d} \epsilon^{-2})$ rows in $O(\nnz(\AA) + d^{\omega + \theta} )$ time.
Here $\omega$ is the matrix-multiplication exponent and $\theta > 0$ is any constant.
This is a substantial improvement over previous results.
The previous best algorithm with a comparable running time gives a sample size of
about $d^{3.66}$~\cite{LiMP13}; algorithms for obtaining samples with
$O(d^{2.5})$ rows rely on an expensive ellipsoidal rounding algorithm~\cite{DasguptaDHKM09},
and therefore have prohibitively large $\poly(d)$ terms.

We also give input-sparsity time algorithms for sampling for general $\ell_p$ norms.
However, the sample count is slightly higher for $p \in (1,2)$ and substantially higher
(around $d^{p/2}$) for $p > 2$.  Furthermore, the runtime gets substantially worse as
$p$ approaches and exceeds 4 (we  use the ellipsoid algorithm in this regime).
Nonetheless, these algorithms do much better than previous ones: they give a near-optimal
sample count, and their dependence on $d$ (though high) is of the form
$d^{p/2 (1 + \theta) + C}$
(better than could be obtained in approaches based on resparsifying the
outputs of existing algorithms).

A table summarizing our main algorithmic results is in Figure~\ref{fig:summary}.
The bounds omit logarithmic factors in the runtime, $d^{\theta}$ tradeoffs
with input sparsity terms, success probabilities, and $\epsilon$
and constant factor $p$ dependences.
All of these algorithms can be run with an $\nnz(\AA) \log n$ term and also
support a tradeoff to get pure input sparsity time.

\begin{figure}

\begin{center}

\begin{tabular}{| l | l | l | l | }
\hline
$p$ & polynomial runtime term & Sample count & Relevant Sections\\ \hline
$p = 1$ & $d^{\omega}$ & $d \log d$ &
Sections~\ref{sec:overview},~\ref{sec:compute},~\ref{sec:ellp},~and~\ref{sec:concentration}\\ \hline
$1 < p < 2$ & $d^{\omega}$ & $d \log d \log \log^2 d$ & Sections~\ref{sec:overview},~\ref{sec:compute}, and~\ref{sec:ellp}\\ \hline
$2 < p < 4$ & $d^{\omega}$ & $d^{p/2} \log d$ &
Sections~\ref{sec:overview},~\ref{sec:compute},~and~\ref{sec:ellp}\\ \hline
$2 \leq p $ & $d^{p/2+C}$ & $d^{p/2} \log d$ &
Sections~\ref{sec:overview},~\ref{sec:convex},~\ref{sec:sparseconvex}~and~\ref{sec:ellp}\\ \hline
\end{tabular}

\end{center}

\caption{Summary of Our Algorithmic Results}
\label{fig:summary}
\end{figure}

In Section~\ref{sec:ellp}, we briefly go over known statistical results~\cite{Talagrand90,
TalagrandLp,BourgainLM89} on the concentration of sampling by Lewis weights.
One issue with invoking these results is that the bounds in them do not explicitly
describe non-uniform sampling based on upper bounds of Lewis weights.
Instead, the more common form of a main theorem in these results is that
if the Lewis weights are uniformly small, uniformly sampling half the rows gives
a good approximation.
We use standard techniques (which we describe in more detail in
Appendix~\ref{sec:reduct}) to show that the latter implies the former.

Additionally, in the case of $\ell_1$, we give an alternate, elementary proof
of the concentration of the sampling process in Section~\ref{sec:concentration}.
It simplifies many components of previous
proofs~\cite{Talagrand90,Pisier99:book} while following the same basic strategy.
The bound is slightly weaker in some artificial parameter ranges, but is also stronger
in giving much sharper tails.  It is arguably even simpler than proofs of $\ell_2$ matrix Chernoff
bounds~\cite{Vershynin09,Harvey11}.  Unfortunately, the picture for general $\ell_p$ seems
to be much more difficult: published proofs for $p \not \in \{ 1, 2 \}$ use deep results from
Gaussian processes.  We consider it an open problem to find elementary proofs for these ranges.

Our results significantly improve algorithms for the well studied
$\ell_p$, and specifically $\ell_1$, row sampling problem.
We also give a version of $\ell_1$ matrix concentration bound that's
analogous to the widely-used $\ell_2$ matrix concentration bounds.
We believe these results and the simplicity of our techniques
show the usefulness of Lewis weights as an algorithmic tool
for randomized numerical linear algebra in $p$-norms.

The paper is structured as follows: in Section~\ref{sec:overview} we formalize
the matrix row sampling problem and give an overview of our result.
Our simple iterative algorithm for computing approximate Lewis weights is in Section~\ref{sec:compute};
this applies to all $p < 4$.
An alternative approach to computing Lewis weights based on convex optimization, valid for all $p \geq 2$,
is in Section~\ref{sec:convex}.
Section~\ref{sec:properties} gives properties of Lewis weights useful in some of our arguments.
An input-sparsity time algorithm using the convex optimization approach is in Section~\ref{sec:sparseconvex}.
Section~\ref{sec:ellp} summarizes the known sampling results for each range of $p$, and describes
a way to take existing proofs and obtain analyses of our simple sampling procedure.
Section~\ref{sec:concentration} gives most of our elementary proof of the validity
of $\ell_1$ sampling by Lewis weight.
Finally, Appendix~\ref{sec:properties-proofs} proves the properties from Section~\ref{sec:properties},
and Appendix~\ref{sec:reduct} gives the proof of the reduction from our sampling procedure
given in Section~\ref{sec:ellp}.


\section{Background and Overview}
\label{sec:overview}

The paper deals extensively with vectors and matrices.
The $\ell_p$-norm of a vector $\xx \in \Re^{d}$ is defined as
\begin{equation*}
\norm{\xx}_p = \left( \sum_{i = 1}^{d} \left|\xx_i\right|^{p} \right)^{1/p}. 
\end{equation*}
For a matrix $\AA$, we will use $n$ and $d$ to denote its number of
rows and columns respectively, and $\aa_i$ to denote the vector
corresponding to the $i\textsuperscript{th}$ row of $\AA$.
Note that $\aa_i$ is a {\em column} vector.
The $\ell_p$-norm of a vector $\xx$ w.r.t. $\AA$ can then be written as
\begin{equation*}
\norm{\AA \xx}_p = \left( \sum_{i = 1}^{n} \left| \aa_i^T \xx \right|^{p} \right)^{1/p}.
\end{equation*}
We will also assume our matrices are full rank, since otherwise we can
either project onto its rank space, or use pseudo-inverses accordingly.

Most of our analyses revolve around multiplicative errors.
Here we follow the approximation notation from~\cite{CohenLMMPS14:arxiv}.
For a  parameter $\alpha \geq 1$, we say two quantities $x$ and $y$ satisfy
$x \approx_{\alpha} y$ if
\begin{equation*}
\frac{1}{\alpha} x \leq y \leq \alpha x.
\end{equation*}
Note that if $x \approx_{\alpha} y$, then for any power $p$ we have
$x^{p} \approx_{\alpha^{|p|}} y^{p}$.

A cruciual definition in $\ell_2$ row sampling and matrix concentration bounds is
statistical leverage scores.
The statistical leverage score of a row $\aa_i$ is
\begin{equation*}
\ttau_{i}\left(\AA\right)
\defeq \aa_i^T \left( \AA^T \AA \right)^{-1} \aa_i
= \norm{\left( \AA^T \AA \right)^{-1/2} \aa_i}_2^2.
\end{equation*}
It can be viewed as the squared norm of the $i\textsuperscript{th}$ row after the
statistical whitening transform (see e.g.. Hyvarinen et al.~\cite{HyvarinenO00}).
Equivalently, it is also often defined as the squared row norm of the matrix of left singular vectors
of $\AA$.
The following facts about statistical leverage scores underpin their role
in $\ell_2$ row sampling.
\begin{fact}
\label{fact:tau}
\begin{enumerate}
\item \label{item:foster} (Foster's theorem~\cite{Foster53}) $\sum_{i = 1}^{n} \ttau_{i} (\AA) \leq d$,
\item \label{item:upper1} $\ttau_{i}(\AA) \leq 1$.
\item \label{item:l2concentration} ($\ell_2$ Matrix Concentration Bound)
There exists an absolute constant $C_s$ such that for any matrix $\AA$ and any
set of sampling values $\pp_i$ satisfying
\begin{equation*}
\pp_i \geq C_s \ttau_{i} \left( \AA \right) \log{d} \epsilon^{-2},
\end{equation*}
if we generate a matrix $\SS$ with $N = \sum \pp_i$ rows, each chosen
independently as the $i$\textsuperscript{th} basis vector,
times $\frac{1}{\sqrt{\pp_i}}$, with probability $\frac{\pp_i}{N}$
then with high probability we will have
$\norm{\SS \AA \xx}_2 \approx_{1 + \epsilon} \norm{\AA \xx}_2$ for all vectors $\xx$.
\end{enumerate}
\end{fact}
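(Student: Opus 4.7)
The plan is to handle the three items in order; (1) and (2) reduce to short linear-algebraic identities, whereas (3) is the substantive matrix concentration result and I would reduce it to a standard matrix Chernoff inequality for sums of independent, bounded, rank-one PSD matrices.

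For item~(1), I would exploit the cyclic property of trace, writing $\ttau_i(\AA) = \trace{(\AA^T\AA)^{-1} \aa_i \aa_i^T}$. Summing over $i$ and invoking the identity $\sum_i \aa_i \aa_i^T = \AA^T \AA$ collapses the total to $\trace{(\AA^T\AA)^{-1}\AA^T\AA} = \trace{\II_d} = d$. For item~(2), let $\UU \defeq \AA (\AA^T\AA)^{-1/2}$; then $\UU^T \UU = \II_d$, so $\PPi \defeq \UU \UU^T$ is the orthogonal projection onto the column space of $\AA$. A direct calculation gives $\ttau_i(\AA) = \norm{\UU^T e_i}_2^2 = e_i^T \PPi e_i$, the $i$-th diagonal entry of $\PPi$; since $\mathbf{0} \preceq \PPi \preceq \II_n$, every such entry lies in $[0,1]$.

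For item~(3), I would first reduce the all-$\xx$ norm statement to a Loewner-order approximation. The condition $\norm{\SS \AA \xx}_2 \Approx{1+\epsilon} \norm{\AA \xx}_2$ for all $\xx$ is equivalent to $(\SS\AA)^T(\SS\AA) \Approx{(1+\epsilon)^2} \AA^T\AA$, and whitening both sides by $\MM \defeq (\AA^T\AA)^{-1/2}$ recasts this as $\sum_{k=1}^N \YY_k \Approx{(1+\epsilon)^2} \II_d$, where the $\YY_k$ are i.i.d.\ copies of the random rank-one matrix $\YY = \frac{1}{\pp_j}\uu_j \uu_j^T$ with $\uu_j \defeq \MM \aa_j$ and the index $j$ drawn with probability $\pp_j / N$. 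Two observations drive the concentration: (a) $\expec{}{\YY} = \frac{1}{N}\sum_j \uu_j \uu_j^T = \frac{1}{N} \MM \AA^T\AA \MM = \frac{1}{N}\II_d$, hence $\expec{}{\sum_k \YY_k} = \II_d$; and (b) because $\norm{\uu_j}_2^2 = \ttau_j(\AA)$, each sample satisfies $\norm{\YY_k}_{\mathrm{op}} \leq \max_j \ttau_j(\AA)/\pp_j \leq \epsilon^2/(C_s \log d)$ deterministically. Feeding these into a standard matrix Chernoff bound (Ahlswede--Winter~\cite{AhlswedeW02}, Rudelson--Vershynin~\cite{RudelsonV07}, Tropp~\cite{Tropp12}) yields $\norm{\sum_k \YY_k - \II_d}_{\mathrm{op}} \leq O(\epsilon)$ with failure probability bounded by $2d \exp(-\Omega(\epsilon^2 / \max_k \norm{\YY_k}_{\mathrm{op}})) = d^{1 - \Omega(C_s)}$, which is $d^{-\Omega(1)}$ for $C_s$ a sufficiently large absolute constant; rescaling $\epsilon$ by a constant then gives the claimed $(1+\epsilon)$-approximation.

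The only real obstacle is item~(3), and within it the matrix Chernoff estimate itself; I would invoke it as a cited black box, though one could equally well derive the needed tail bound end-to-end via the Ahlswede--Winter moment-generating-function scheme based on the Golden--Thompson inequality. Items (1) and (2) need only the trace identity and the fact that orthogonal projections have diagonal entries in $[0,1]$.
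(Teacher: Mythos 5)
The paper states this Fact without proof, treating all three items as background with citations (Foster's theorem for (1), and Ahlswede--Winter / Rudelson--Vershynin / Tropp for (3)); there is therefore no in-paper argument to compare against. Your proof is correct and is the standard one implicit in those references: (1) follows from trace cyclicity and $\sum_i \aa_i\aa_i^T = \AA^T\AA$ (giving equality $=d$ in the full-rank case the paper assumes), (2) from identifying $\ttau_i$ as the $i$-th diagonal entry of the orthogonal projection $\AA(\AA^T\AA)^{-1}\AA^T$ onto the column space, and (3) from whitening to reduce the for-all-$\xx$ two-sided norm bound to $\lVert \sum_k \YY_k - \II_d\rVert_{\mathrm{op}} \leq O(\epsilon)$ for i.i.d.\ rank-one PSD samples with $\expec{}{\sum_k \YY_k} = \II_d$ and a deterministic operator-norm bound $\ttau_j/\pp_j \leq \epsilon^2/(C_s\log d)$, then invoking matrix Chernoff. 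The only caveat is that item (3) still rests on the cited matrix Chernoff tail bound as a black box, which is exactly where the paper's citations point; so this fills in the routine reductions the paper elides without changing the route.
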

Combining part~\ref{item:foster} and part~\ref{item:l2concentration} immediately implies that replacing
$\AA$ with $O(d \log d / \epsilon^2)$ reweighted row samples from $\AA$ can give a $(1+\epsilon)$ approximation.

$\ell_p$ spaces are more complicated and lack many useful properties of $\ell_2$.
The Lewis weight approach can be seen as a particular way to tap into the niceness of $\ell_2$
by defining, for any matrix $\AA$, a corresponding matrix $\BB$ so that $\norm{\AA \xx}_p$
is in some sense related to $\norm{\BB x}_2$.  One conceivable notion of relatedness would be
to simply minimize the maximum distortion between the norms.  This would define $\BB$ based on
the John ellipsoid for the convex body $\norm{\AA \xx}_p \leq 1$; this is essentially the technique
used in \cite{DasguptaDHKM09} and follow-up works.  However, it does not lead to tight bounds, and
the Lewis approach is different (although Section~\ref{sec:convex} reveals a similarity).

The most naive approach would seem to be to simply set $\BB = \AA$.  Here, however, one
sees that $\BB$ cannot properly capture $\AA$, since $\norm{\BB x}_2$ is not invariant under
``change of density.''  For instance, one may ``split'' a row $\aa_i$ in $\AA$ into $k$ pieces,
each equal to $k^{-1/p} \aa_i$, and $\norm{\AA \xx}_p$ will remain unchanged.
$\norm{\AA \xx}_2$, though, will change, and could be arbitrarily distorted by subdividing different
rows different amounts.

Instead, we adapt this naive approach by using a different ``density.''  We effectively assume
that a given row $\aa_i$ really represents $\ww_i$ rows, each equal to $\ww_i^{-1/p} \aa_i$.
$\ww_i$ may not be an integer, but this does not really matter; they could be viewed as
weights in a weighted $\ell_p$ norm rather than a number of copies.  When switching to $\ell_2$,
the original row $i$ of $\AA$ would still correspond to $\ww_i$ rows equal to $\ww_i^{-1/p} \aa_i$, but
this now has the same effect as one row equal to $\ww_i^{1/2 - 1/p} \aa_i$, rather than simply $\aa_i$
itself.  Putting this together, we will define $\BB = \WW^{1/2 - 1/p} \AA$.

This still leaves the question of how to actually choose the weights $\ww_i$ (the
specific change of density).  Intuitively, we want the split up rows, $\ww_i^{-1/p} \aa_i$
to be normalized in some way.  Lewis's change of density gives a simple and natural
notion of this: each of these normalized rows should have leverage score 1
(defined in terms of a $\ww_i$-weighted $\ell_2$ norm), or, more explicitly,
the $i$\textsuperscript{th} row of $\BB$ should end up with leverage score $\ww_i$.  Note that
this is a somewhat circular characterization: $\ww_i$ must match the leverage
scores of $\BB$, but $\BB$ itself depends on $\ww_i$.  The Gram matrix
of $\BB$ is $\AA^T \WW^{1 - 2/p} \AA$.  Writing this all in terms of the
original matrix $\AA$ gives:
\begin{definition}
\label{def:lewis}
For a matrix $\AA$ and norm $p$, the $\ell_p$ Lewis weights $\wwbar$ are the unique
weights such that for each row $i$ we have
\begin{equation*}
 \wwbar_i = \ttau_{i} \left( \WWbar^{1/2 - 1/p} \AA \right).
\end{equation*}
 or equivalently
\begin{equation*}
\aa_i^T \left( \AA^T \WWbar^{1 - 2/p} \AA \right)^{-1} \aa_i = \wwbar^{2/p}_i.
\end{equation*}
\end{definition}
We will make extensive use of the second formulation
since it groups the $\wwbar_i$ values on one side.
It also involves measuring the operator $\AA^T \WWbar^{1 - 2/p} \AA$ against a fixed
vector $\aa_i$, and therefore allows us to incorporate bounds on the operator.
These definitions have analogs in recent speedups of interior point methods
by Lee and Sidford~\cite{LeeS14}; for example, the requirement on the weight function
in the LP algorithm of~\cite{LeeS13a:arxiv}.

Note that for the case where $p = 2$, $\WWbar^{1/2 - 1/p}$ is the identity matrix,
so the Lewis weights are just the leverage scores.  In the case of general $p$,
it is not immediately clear that the Lewis weights must actually exist or be unique
because of the apparently circularity; existence and uniqueness was first established
by Lewis in~\cite{lewis}.  This paper also includes proofs of the existence and
uniqueness of Lewis weights, which follow directly from the arguments used to
prove our algorithms.  The first is in Section~\ref{sec:compute} (Corollary~\ref{cor:itlewis}
and applies to $p < 4$.  The second is in Section~\ref{sec:convex} (Corollary~\ref{cor:conlewis})
and gives a proof of existence for all $p$ and uniquness for $p \geq 2$ (so together, these
proofs give existence and uniqueness for all $p$); the resulting proof from that section
is a restatement of traditional proofs.

In Section~\ref{sec:concentration} (combined with Section~\ref{sec:ellp}),
we will prove the following concentration bound,
which is a variant of Proposition 2 from~\cite{Talagrand90}.
\begin{restatable}[$\ell_1$ Matrix Concentration Bound]{theorem}{lonechernoff}
\label{thm:l1chernoff}
There is an absolute constant $C_s$ such that given a matrix $\AA$
with $\ell_1$ Lewis weights $\wwbar$, for any set of sampling values $\pp_i$,
$\sum_i \pp_i = N$,
\begin{equation*}
\pp_i \geq C_s \wwbar_i \log(N) \epsilon^{-2},
\end{equation*}
if we generate a matrix $\SS$ with $N$ rows, each chosen
independently as the $i$\textsuperscript{th} standard basis vector,
times $\frac{1}{\pp_i}$, with probability $\frac{\pp_i}{N}$,
then with high probability we have
\begin{equation*}
\norm{\SS \AA \xx}_1
\approx_{1 + \epsilon} \norm{\AA \xx}_1
\end{equation*}
for all vectors $\xx \in \Re^{d}$.  In particular, with constant factor
approximations to the Lewis weights, $O(d \log(d/\epsilon) \epsilon^{-2})$
row samples suffices.
\end{restatable}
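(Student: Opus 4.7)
The plan is to exploit the Lewis change of density to translate the $\ell_1$ problem into an $\ell_2$-flavored concentration estimate, apply Bernstein's inequality to a single test vector, and then promote the pointwise estimate to a uniform one via a net together with a short bootstrap.

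Set $\BB = \WWbar^{-1/2}\AA$ (the $p = 1$ instance of the change of density), so that Definition~\ref{def:lewis} reads $\ttau_i(\BB) = \wwbar_i$ and the rows satisfy $\aa_i = \wwbar_i^{1/2}\bb_i$. First I would record the two deterministic inequalities that drive everything:
\[
|\aa_i^T\xx| \,\le\, \wwbar_i \,\norm{\BB\xx}_2,
\qquad
\norm{\BB\xx}_2 \,\le\, \norm{\AA\xx}_1 \,\le\, \sqrt{d}\,\norm{\BB\xx}_2.
\]
The first follows from Cauchy--Schwarz against the leverage score of $\bb_i$; the upper inequality in the sandwich is immediate from the first and Foster's identity $\sum_i \wwbar_i = d$ (Fact~\ref{fact:tau}), and the lower inequality is a short self-bounding computation, $\norm{\BB\xx}_2^2 = \sum_i \wwbar_i^{-1}(\aa_i^T\xx)^2 \le \norm{\BB\xx}_2 \norm{\AA\xx}_1$. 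These are precisely what the Lewis construction is engineered to produce.

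Next, for a fixed $\xx$ with $\norm{\AA\xx}_1 = 1$, write $\norm{\SS\AA\xx}_1 = \sum_{j=1}^{N} Z_j$, where $Z_j = \pp_{i_j}^{-1}|\aa_{i_j}^T\xx|$ for the $j$th sampled index $i_j$. The inequalities above together with the hypothesis $\pp_i \geq C_s\wwbar_i\log(N)\epsilon^{-2}$ give the per-sample bound $Z_j \le \epsilon^2/(C_s\log N)$ and the total variance bound $\sum_j \E Z_j^2 \le \epsilon^2/(C_s\log N)$, while $\E\sum_j Z_j = 1$. Bernstein's inequality then yields the sharp tail
\[
\Pr\!\left[\,\big|{\textstyle\sum_j} Z_j - 1\big| > \epsilon\,\right] \le 2\exp\!\left(-\Omega(C_s \log N)\right) = 2N^{-\Omega(C_s)}.
\]
To cover all $\xx$, I would use the right sandwich inequality to place the unit ball of $\norm{\AA\cdot}_1$ inside the $\sqrt d$-ball of $\norm{\BB\cdot}_2$, giving a $\delta$-net of size $(O(\sqrt d/\delta))^d$. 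Since $N = \sum_i \pp_i = \Omega(d\log(d/\epsilon)\epsilon^{-2})$, taking $C_s$ a sufficiently large constant makes the Bernstein tail beat this net size in a union bound, controlling the deviation at every net point.

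The hard step, and the one requiring the most care, is the extension from the net to arbitrary $\xx$: doing this naively needs $\SS\AA$ to be Lipschitz in the very $\norm{\AA\cdot}_1$ metric we are trying to prove is preserved, which is circular. I would close it with a one-sided bootstrap. Using the per-sample bound $|\aa_i^T\xx|/\pp_i \le (\epsilon^2/(C_s\log N))\norm{\AA\xx}_1$ (a direct consequence of the Step 1 inequalities), one shows $\norm{\SS\AA\xx}_1 \le 2\norm{\AA\xx}_1$ uniformly with high probability; this Lipschitz control is then fed back into the net argument to transfer the Bernstein bound to the full unit ball, yielding the two-sided $(1\pm\epsilon)$ approximation. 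Finally, the statement about \emph{constant factor approximations} to the Lewis weights and the refined sample count $O(d\log(d/\epsilon)\epsilon^{-2})$ follow from the sharper Bernstein tail by taking $\log(N) = \Theta(\log(d/\epsilon))$ and invoking the reduction of Section~\ref{sec:ellp} (Appendix~\ref{sec:reduct}) which allows us to replace exact Lewis weights by constant-factor upper bounds. The main technical obstacle is thus the bootstrapped extension from the net, where prior proofs resort to heavy Gaussian process machinery and where the elementary flavor of the present argument must earn its keep.
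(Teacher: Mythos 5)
Your deterministic inequalities in Step 1 are exactly right, and your Bernstein estimate for a single test vector $\xx$ is also correct as far as it goes. But the net-plus-union-bound step does not close, and the gap is not a technicality: it is the entire reason the $\ell_1$ row-sampling problem is harder than $\ell_2$.

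Here is the arithmetic. With sampling values $\pp_i \geq g\,\wwbar_i$, your per-sample upper bound and variance bound are both $O(1/g)$, so Bernstein for a single $\xx$ gives a tail of $\exp(-\Omega(g\epsilon^2))$. With $g = C_s \log(N)\epsilon^{-2}$ this is $N^{-\Omega(C_s)}$, which is only \emph{polynomial} in $N$ (equivalently, polynomial in $d/\epsilon$). On the other side, a $\delta$-net of $\{\norm{\AA\xx}_1 \leq 1\}$ in its own metric, at any constant resolution $\delta$, has size $\exp(\Theta(d))$ (your $(O(\sqrt d / \delta))^d$ count is even slightly worse). The union bound therefore requires $C_s \log N \gtrsim d$, which cannot be satisfied by an absolute constant $C_s$ and a polynomial $N$. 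If you instead set $g$ large enough for the union bound to work, you are forced to $g = \Omega(d\epsilon^{-2})$ and hence $N = \sum_i \pp_i = \Omega(gd) = \Omega(d^2 \epsilon^{-2})$, so the Bernstein-plus-net scheme natively produces a $d^2$-type sample bound, not the claimed $O(d\log(d/\epsilon)\epsilon^{-2})$. ``Take $C_s$ large enough to beat the net'' is the misstep: $C_s$ must be universal while the net is exponential in $d$. (Your bootstrap for the one-sided bound has the same issue, since it also rests on a net.)

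The paper's proof of Theorem~\ref{thm:l1chernoff} uses a different mechanism precisely to avoid paying for a net. After reducing to a Rademacher moment bound via Lemma~\ref{lem:momentreduct} and stripping the absolute values via the comparison lemma (Lemma~\ref{lem:comparison}), it bounds $\max_{\norm{\AA\xx}_1 = 1}\sum_i \sigma_i \aa_i^T \xx$ by $\norm{\PPi^T \ssigma}_\infty$ with $\PPi = \AA(\AA^T\WWbar^{-1}\AA)^{-1}\AA^T\WWbar^{-1}$ (Lemma~\ref{lem:lewlinf}): the supremum over a $d$-dimensional ball is replaced by a maximum over the $n$ coordinates, so the ``union bound'' is over $n$ items, not $\exp(\Theta(d))$ net points. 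Khintchine's inequality (Lemma~\ref{lem:momBound}) then bounds the $l$-th moment of each coordinate by $(ClU)^{l/2}$ under a uniform Lewis weight bound $U$, giving a total of $n(ClU)^{l/2}$. Choosing $l = \log(2n/\delta)$ and $U = \Theta(\epsilon^2/\log(2n/\delta))$ makes this $\epsilon^l\delta$. That substitution --- a maximum over $n$ explicit functionals, bounded via Khintchine, instead of a $d$-dimensional net --- is what earns the $d\log d$ sample count, and it is the ingredient your proposal is missing.

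Your opening identities (the Cauchy--Schwarz consequence $|\aa_i^T\xx| \leq \wwbar_i\norm{\BB\xx}_2$ and the sandwich $\norm{\BB\xx}_2 \leq \norm{\AA\xx}_1 \leq \sqrt d\,\norm{\BB\xx}_2$) are the same facts the paper implicitly relies on, and your Bernstein calculation is a fine way to see the single-point tail. But the passage to a uniform bound needs either the $\ell_\infty$-duality/Khintchine route (as in the paper) or a genuine chaining argument; a one-shot net cannot deliver the stated sample complexity.
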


The bound on sample count also follows from Fact~\ref{fact:tau}.
As a result, it remains to compute approximate $\ell_p$ Lewis weights.
We present a novel, but extremely simple, iterative scheme that can do this for all $p < 4$.
We repeatedly perform
\begin{equation*}
\ww'_i \leftarrow \left( \aa_i^T \left( \AA^T \WW^{1 - 2/p} \AA \right)^{-1} \aa_i \right)^{p/2}
\end{equation*}
using a possibly approximate algorithm for computing statistical leverage scores.
This routine resembles the use of $\ell_2$-regression to solve $\ell_1$
regression through repeated reweighting~\cite{ChinMMP13}.
Its convergence also mimics the reduction in row counts in the iterative $\ell_p$
row sampling algorithm by Li et al.~\cite{LiMP13}.
We will prove the following Lemma regarding this procedure in Section~\ref{sec:compute}
\begin{lemma}
\label{lem:computeLewis}
For any fixed $p < 4$, given a routine \textsc{ApproxLeverageScores} for computing
$\beta$-approximate statistical leverage scores of rows of matrices of the form $\WW \AA$ for
$\beta = n^{\Omega(\theta)}$, we can compute a $n^{\theta}$ approximation to
$\ell_p$-Lewis weights for $\AA$ with $O(\log(\theta))$ calls to \textsc{ApproxLeverageScores}. 
\end{lemma}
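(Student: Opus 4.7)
The plan is to view the update $T: \ww \mapsto \ww'$ with
\begin{equation*}
(T\ww)_i = \left( \aa_i^T (\AA^T \WW^{1-2/p} \AA)^{-1} \aa_i \right)^{p/2}
\end{equation*}
as a contraction on the set of positive weight vectors under the log-metric $d(\uu,\vv) = \max_i |\log(\uu_i/\vv_i)|$; the relation $\uu \approx_\alpha \vv$ is exactly $d(\uu,\vv) \le \log \alpha$. By Definition~\ref{def:lewis}, the fixed points of $T$ are precisely the $\ell_p$ Lewis weights $\wwbar$, so the task reduces to bounding the contraction factor of $T$ and accounting for the extra error introduced by the $\beta$-approximate oracle.

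The key estimate is that if $\ww \approx_\alpha \wwbar$ entrywise then $\WW^{1-2/p} \approx_{\alpha^{|1-2/p|}} \WWbar^{1-2/p}$, which lifts in the Loewner order to $\AA^T \WW^{1-2/p}\AA \approx_{\alpha^{|1-2/p|}} \AA^T \WWbar^{1-2/p}\AA$. The same factor passes through the inverse, through the quadratic form $\aa_i^T(\cdot)^{-1}\aa_i$, and through the final $p/2$ power, yielding $(T\ww)_i \approx_{\alpha^{|p/2-1|}} \wwbar_i$. Since $|p/2-1| < 1$ exactly when $p < 4$, the exact iteration is a strict contraction in $d$; Banach's fixed-point theorem then gives existence and uniqueness of $\wwbar$ for $p < 4$, matching Corollary~\ref{cor:itlewis}.

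Replacing the exact leverage-score computation by the $\beta$-approximate oracle multiplies the one-step error by a further $\beta^{p/2}$, so writing $e_k = \log \alpha_k$ for the log-error of the $k$\textsuperscript{th} iterate yields the affine recurrence
\begin{equation*}
e_{k+1} \le |p/2-1|\, e_k + (p/2)\log\beta,
\end{equation*}
whose solution is $e_k \le |p/2-1|^k e_0 + (p/2)\log\beta / (1 - |p/2-1|)$. Seeding with any weight vector whose entries and reciprocals are polynomially bounded (e.g.\ the $\ell_2$ leverage scores of $\AA$) gives $e_0 = O(\log n)$. With $\beta = n^{O(\theta)}$ the steady-state term is $O(\theta \log n)$, and taking $k = O(\log(1/\theta))$ drives the transient term below $\theta \log n$, producing an $n^{O(\theta)}$ approximation to $\wwbar$ in $O(\log(1/\theta))$ calls to \textsc{ApproxLeverageScores}, as claimed.

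The main technical obstacle is the contraction estimate itself: carefully composing the $\approx$-factors through the noncommutative chain $\ww \mapsto \WW^{1-2/p} \mapsto \AA^T \WW^{1-2/p} \AA \mapsto (\cdot)^{-1} \mapsto \aa_i^T(\cdot)\aa_i \mapsto (\cdot)^{p/2}$ and verifying that the exponents multiply to exactly $|p/2-1|$. The case $p>2$ warrants a little care because $1-2/p$ is positive and the matrix inverse flips the Loewner direction, but the $\approx_\alpha$ notation is symmetric under inversion, so this is essentially automatic. Once the contraction is in hand, the recurrence analysis, the polynomial seed bound, and the oracle-error book-keeping are all routine.
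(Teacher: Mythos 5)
Your contraction estimate, the recurrence $e_{k+1} \le |p/2-1|\,e_k + (p/2)\log\beta$, and the conclusion that $O(\log(1/\theta))$ iterations suffice all match the paper's argument (Lemma~\ref{lem:contraction}, Corollary~\ref{cor:convergence}, and the final accounting). The one genuine gap is the initialization bound: you assert that seeding with a weight vector whose entries and reciprocals are polynomially bounded (e.g.\ the $\ell_2$ leverage scores) gives $e_0 = O(\log n)$. This does not follow, because the Lewis weights $\wwbar_i$ can themselves be exponentially small (e.g.\ if a row of $\AA$ is a tiny scalar multiple of another row, its Lewis weight scales with that multiple), and $e_0 = \max_i |\log(\ww^{(0)}_i/\wwbar_i)|$ is a \emph{relative} error against $\wwbar$, not an absolute bound on $\ww^{(0)}$. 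Indeed, the leverage scores you propose as a seed also fail your stated hypothesis: their reciprocals are not polynomially bounded for such a degenerate row. So the implication ``poly-bounded seed $\Rightarrow e_0 = O(\log n)$'' is circular or false as stated.

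The conclusion you want is nonetheless true, but it requires the paper's Lemma~\ref{lem:initialization}. That lemma starts from $\ww^{(0)}_i = 1$ (whose first iterate is exactly $\tau_i(\AA)^{p/2}$, i.e.\ your leverage-score seed) and shows $\ww^{(1)}_i \approx_{\beta^{p/2} n^{|p/2-1|}} \wwbar_i$ by first normalizing so that $\AA^T\WWbar^{1-2/p}\AA = \II$, then proving the spectral bound $\AA^T\AA \approx_{n^{|1-2/p|}} \II$ via a power-mean argument over the unit-norm rows $\bb_i = \wwbar_i^{1/2-1/p}\aa_i$. This spectral estimate is the substantive content missing from your proposal; once you substitute it for your boundedness heuristic, the rest of your argument goes through unchanged.
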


Combining this Lemma with Theorem~\ref{thm:l1chernoff} or \cite{Talagrand90}, using $\theta$ proportional
to $\frac{1}{\log n}$, gives an algorithm satisfying the conditions of Theorem~\ref{thm:main}.
We may use standard techniques, as described, for instance, in \cite{CohenLMMPS14:arxiv}, to
get an input sparsity time algorithm as claimed in the introduction.
The idea is to run in two phases, first using a constant $\theta$, then resparsifying with $\theta$
proportional to $\frac{1}{\log n}$ (getting a $d^{\theta}$ dependence, rather than just $n^\theta$,
turns out to be automatic, since if $n > d^4$, the $\nnz$ term will dominate anyway).
The same process can be used for any $p \leq 4$, with sparsifier
sizes as stated in Section~\ref{sec:ellp}; however, due to the size of the intermediate sparsifier,
the polynomial dependence will be $d^{\max(\omega, p/2+1) + \theta}$.

It is worth noting that the technique in Section~\ref{sec:sparseconvex}, can also be be combined with
these methods (with Lemma~\ref{lem:computeLewis} replacing the convex optimization technique of
Theorem~\ref{thm:polyslow} from Section~\ref{sec:convex}) to give input sparsity time algorithms.
The arguments in Section~\ref{sec:sparseconvex} allow better parameters in the ranges $p \leq 2$
and $p \leq 4$ (this is discussed in more detail there), and this algorithm can support a
slightly better tradeoff between the input sparsity cost and the polynomial term.

\section{Iteratively Computing Approximate Lewis Weights}
\label{sec:compute}

Pseudocode of our algorithm is given in Figure~\ref{fig:iteration}.

\begin{figure}[ht]

\begin{algbox}

$\ww = \textsc{LewisIterate}(\AA, p, \beta, \ww)$

\begin{enumerate}
	\item For $i = 1 \ldots n$
	\begin{enumerate}
		\item Let $\tilde{\ttau_i} \approx_{\beta} \ttau_{i} \left( \WW^{1/2 - 1/p} \AA \right)$ be a $\beta$-approximation of the statistical leverage score of row $i$ in $\WW^{1/2 - 1/p} \AA$.
		\item Set $\wwhat_i \leftarrow \left( \ww_i^{2/p - 1} \tilde{\ttau}_i \right)^{p/2} \approx_{\beta^{p/2}} (\aa_i^T \left( \AA^T \WW^{1 - 2/p} \AA \right)^{-1} \aa_i)^{p/2}$.
	\end{enumerate}
	\item Return $\wwhat$.
\end{enumerate}

$\ww =\textsc{ApproxLewisWeights}(\AA, p, \beta, T)$

\begin{enumerate}
	\item Initialize $\ww_i  = 1$
	\item For $t = 1 \ldots T$
	\begin{enumerate}
		\item Set $\ww \leftarrow \textsc{LewisIterate}(\AA, p, \beta, \ww)$.
	\end{enumerate}
	\item Return $\ww$.
\end{enumerate}
\end{algbox}

\caption{Iterative Algorithm for Computing Lewis Weights}
\label{fig:iteration}

\end{figure}

In our proof, we make use of the generalization of approximations
 to the matrix setting via the Loewner partial ordering.
For two symmetric matrices $\PP$ and $\QQ$, we have $\PP \preceq \QQ$
if $\PP - \QQ$  is positive semi-definite.
We will then use $\PP \approx_{\alpha} \QQ$ to denote
\begin{equation*}
\frac{1}{\alpha} \PP \preceq \QQ \preceq \alpha \PP. 
\end{equation*}
Note that this is equivalent to $\xx^T \PP \xx \approx_{\alpha} \xx^T \QQ \xx$ for
all vectors $\xx$.
Two facts of particular importance for spectral approximation of matrices
are its composition and inversion.
\begin{fact}
\label{fact:operatorApprox}
\begin{enumerate}
\item If $\PP \approx_{\alpha} \QQ$, then for any matrix $\AA$, we also have
$\AA^T \PP \AA \approx_{\alpha} \AA^T \QQ \AA$.
\item If $\PP \approx_{\alpha} \QQ$, then $\PP^{-1} \approx_{\alpha} \QQ^{-1}$.
\end{enumerate}
\end{fact}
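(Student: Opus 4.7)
The plan is to establish both parts directly from the equivalent scalar characterization $\PP \approx_\alpha \QQ \iff \xx^T \PP \xx \approx_\alpha \xx^T \QQ \xx$ for all $\xx$, which is just the definition of the Loewner order unpacked.

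For part 1, the strategy is simply to substitute $\AA \xx$ in place of an arbitrary test vector $\yy$. Concretely, for any $\xx$, the vector $\yy = \AA \xx$ satisfies $\yy^T \PP \yy = \xx^T (\AA^T \PP \AA) \xx$ and similarly for $\QQ$. The hypothesis applied to $\yy$ then yields $\xx^T (\AA^T \PP \AA) \xx \approx_\alpha \xx^T (\AA^T \QQ \AA) \xx$ for every $\xx$, which is exactly $\AA^T \PP \AA \approx_\alpha \AA^T \QQ \AA$. No further argument is needed.

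For part 2, I assume $\PP$ and $\QQ$ are positive definite (otherwise the inverses are undefined; in the paper's usage $\PP$ and $\QQ$ are Gram-type matrices of full-rank objects, so this is fine). The cleanest route is to conjugate: starting from $\frac{1}{\alpha} \PP \preceq \QQ \preceq \alpha \PP$, apply part 1 with $\AA = \PP^{-1/2}$ to obtain $\frac{1}{\alpha} \II \preceq \PP^{-1/2} \QQ \PP^{-1/2} \preceq \alpha \II$. This means every eigenvalue of the symmetric positive definite matrix $\MM = \PP^{-1/2} \QQ \PP^{-1/2}$ lies in $[1/\alpha, \alpha]$, so every eigenvalue of $\MM^{-1} = \PP^{1/2} \QQ^{-1} \PP^{1/2}$ also lies in $[1/\alpha, \alpha]$, giving $\frac{1}{\alpha} \II \preceq \PP^{1/2} \QQ^{-1} \PP^{1/2} \preceq \alpha \II$. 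Conjugating back by $\PP^{-1/2}$ (applying part 1 once more) yields $\frac{1}{\alpha} \PP^{-1} \preceq \QQ^{-1} \preceq \alpha \PP^{-1}$, which is the desired $\PP^{-1} \approx_\alpha \QQ^{-1}$.

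Neither step has a real obstacle: part 1 is a one-line substitution, and the only subtlety in part 2 is the standard observation that inversion reverses the Loewner order for positive definite matrices, handled here by diagonalizing the conjugated matrix $\MM$. The argument goes through verbatim even if we use $\pleq$-notation instead of the symmetric $\approx_\alpha$ form, so both halves of part 2 follow from the same eigenvalue argument applied to each side of the original approximation.
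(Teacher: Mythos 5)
Your proof is correct: part 1 is the standard congruence-invariance argument via the test-vector substitution $\yy = \AA\xx$, and part 2's conjugation by $\PP^{-1/2}$ together with the observation that the eigenvalues of $\PP^{-1/2}\QQ\PP^{-1/2}$ and of its inverse both lie in $[1/\alpha,\alpha]$ is the standard route, with the positive-definiteness assumption appropriate to how the fact is used. The paper states Fact~\ref{fact:operatorApprox} without proof, treating it as a known property of the Loewner ordering, so your argument supplies exactly the justification the paper implicitly relies on; there is nothing further to compare.
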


First, we show that (for $p < 4$) $\textsc{LewisIterate}(\AA, p, 1, _)$ acts as a
contraction mapping with respect to the $\ell_\infty$ distance of the $\log$ weights:
\begin{lemma}
\label{lem:contraction}
Given $\AA$, $p$, and weight sets $\vv$ and $\ww$ such that $\vv \approx_{\alpha} \ww$,
\begin{equation*}
\textsc{LewisIterate}(\AA, p, 1, \vv) \approx_{\alpha^{|p/2 - 1|}} \textsc{LewisIterate}(\AA, p, 1, \ww).
\end{equation*}
\end{lemma}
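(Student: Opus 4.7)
The plan is to trace how the multiplicative approximation $\vv \approx_{\alpha} \ww$ propagates through each step of \textsc{LewisIterate}, using the composition and inversion properties of spectral approximation in Fact~\ref{fact:operatorApprox}. Since $\beta = 1$, the leverage scores in the algorithm are computed exactly, so
\[
\textsc{LewisIterate}(\AA, p, 1, \uu)_i = \bigl( \aa_i^T ( \AA^T \UU^{1 - 2/p} \AA )^{-1} \aa_i \bigr)^{p/2}
\]
for any input weight vector $\uu$. Thus the task reduces to tracking how $\alpha$ evolves as one raises $\uu$ to the power $1 - 2/p$, conjugates by $\AA$, inverts, takes a quadratic form against $\aa_i$, and finally raises to the power $p/2$.

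First, I would observe that $\vv \approx_\alpha \ww$ coordinatewise is equivalent to the diagonal matrix relation $\VV \approx_\alpha \WW$. Next, for any real exponent $s$, taking the $s$-th power of a coordinatewise approximation of positive diagonals yields $\VV^{s} \approx_{\alpha^{|s|}} \WW^{s}$ (this is immediate from the scalar inequality $\alpha^{-1} \vv_i \leq \ww_i \leq \alpha \vv_i$ raised to the $s$-th power, with the direction flipped when $s < 0$). Setting $s = 1 - 2/p$, we get
\[
\VV^{1 - 2/p} \approx_{\alpha^{|1 - 2/p|}} \WW^{1 - 2/p}.
\]
Applying Fact~\ref{fact:operatorApprox}.1 with $\AA$ on both sides preserves this factor, and then Fact~\ref{fact:operatorApprox}.2 gives
\[
\bigl( \AA^T \VV^{1 - 2/p} \AA \bigr)^{-1} \approx_{\alpha^{|1 - 2/p|}} \bigl( \AA^T \WW^{1 - 2/p} \AA \bigr)^{-1}.
\]
Measuring both sides against the fixed vector $\aa_i$ preserves the relation as a scalar approximation, so
\[
\aa_i^T \bigl( \AA^T \VV^{1 - 2/p} \AA \bigr)^{-1} \aa_i \approx_{\alpha^{|1 - 2/p|}} \aa_i^T \bigl( \AA^T \WW^{1 - 2/p} \AA \bigr)^{-1} \aa_i.
\]

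Finally, raising both sides to the $(p/2)$-th power multiplies the exponent of $\alpha$ by $p/2$, yielding an $\alpha^{|1 - 2/p| \cdot p/2} = \alpha^{|p/2 - 1|}$ approximation, which is precisely the claim. There is no real obstacle here: the argument is a straightforward bookkeeping of how fractional powers, conjugation, and inversion interact with the approximation partial order. The only subtlety is being careful with signs in the exponent when $p < 2$, which is handled uniformly by taking the absolute value. Note that the hypothesis $p < 4$ is not needed for this lemma itself (it only appears later to ensure $|p/2 - 1| < 1$ so that repeated iteration actually contracts), so this step works for all $p > 0$.
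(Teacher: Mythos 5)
Your proof is correct and follows the same chain of reasoning as the paper's: convert the coordinatewise approximation to a spectral approximation of the diagonal matrix raised to the power $1 - 2/p$, push it through conjugation by $\AA$ and inversion via Fact~\ref{fact:operatorApprox}, evaluate against $\aa_i$, and raise to the $p/2$ power, using the identity $|1-2/p|\cdot p/2 = |p/2 - 1|$. Your added remark that $p < 4$ is not needed for this lemma in isolation is accurate; that restriction only matters later to make the map a contraction.
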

\begin{proof}
Since $\vv \approx_{\alpha} \ww$, we have, in the matrix setting
\begin{equation*}
\VV^{1 - 2/p} \approx_{\alpha^{|1 - 2/p|}} \WW^{1 - 2/p}.
\end{equation*}
Applying both items of Fact~\ref{fact:operatorApprox} we then have
\begin{equation*}
\left( \AA^T \VV^{1 - 2/p} \AA \right)^{-1} \approx_{\alpha^{|1-2/p|}} \left( \AA^T \WW^{1 - 2/p} \AA \right)^{-1}.
\end{equation*}
Then applying the definition of operator approximation, we have, for all $i$,
\begin{equation*}
\aa_i^T \left( \AA^T \VV^{1 - 2/p} \AA \right)^{-1} \aa_i \approx_{\alpha^{|1-2/p|}} \aa_i^T \left( \AA^T \WW^{1 - 2/p} \AA \right)^{-1} \aa_i.
\end{equation*}
Taking the $p/2$ power of both sides gives the desired result.
\end{proof}
An immediate corollary of this lemma is
\begin{corollary}
\label{cor:convergence}
Given $\AA$ with $\ell_p$ Lewis weights $\wwbar$ and a set of weights $\ww$ with $\ww \approx_{\alpha} \wwbar$,
\begin{equation*}
\textsc{LewisIterate}(\AA, p, \beta, \ww) \approx_{\beta^{p/2} \alpha^{|p/2 - 1|}} \wwbar.
\end{equation*}
\end{corollary}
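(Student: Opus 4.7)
The plan is to observe that the corollary is essentially an immediate composition of Lemma~\ref{lem:contraction} with two additional ingredients: (i) the fact that $\wwbar$ is a fixed point of the exact (i.e.\ $\beta=1$) iteration, and (ii) the error introduced by using only $\beta$-approximate leverage scores in place of exact ones.

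First I would note that, by Definition~\ref{def:lewis}, the Lewis weights $\wwbar$ satisfy
\begin{equation*}
\aa_i^T \left( \AA^T \WWbar^{1 - 2/p} \AA \right)^{-1} \aa_i = \wwbar_i^{2/p},
\end{equation*}
and raising both sides to the $p/2$ power shows that $\wwbar$ is exactly what $\textsc{LewisIterate}(\AA, p, 1, \wwbar)$ would output, i.e.\ $\wwbar$ is a fixed point of one exact iteration.

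Next, I would set $\vv = \wwbar$ in Lemma~\ref{lem:contraction}. Since $\ww \approx_{\alpha} \wwbar$, the lemma yields
\begin{equation*}
\textsc{LewisIterate}(\AA, p, 1, \ww) \approx_{\alpha^{|p/2 - 1|}} \textsc{LewisIterate}(\AA, p, 1, \wwbar) = \wwbar.
\end{equation*}
This handles the contractive effect of one exact iteration on the input error $\alpha$.

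Finally, I would account for the fact that the actual iteration uses $\beta$-approximate leverage scores. Inside \textsc{LewisIterate}, the value $\wwhat_i$ is set by raising the approximate quantity $\ww_i^{2/p-1}\tilde{\ttau}_i$ to the $p/2$ power, and the pseudocode already documents that this produces a $\beta^{p/2}$-approximation of the exact quantity $(\aa_i^T(\AA^T\WW^{1-2/p}\AA)^{-1}\aa_i)^{p/2}$. Therefore
\begin{equation*}
\textsc{LewisIterate}(\AA, p, \beta, \ww) \approx_{\beta^{p/2}} \textsc{LewisIterate}(\AA, p, 1, \ww),
\end{equation*}
and composing this with the previous approximation gives the desired $\beta^{p/2}\alpha^{|p/2-1|}$ bound.

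There is really no substantive obstacle here; the main thing to be careful about is simply bookkeeping the two sources of multiplicative error (the $\beta$ from approximate leverage scores, and the $\alpha^{|p/2-1|}$ contraction from Lemma~\ref{lem:contraction}) and composing them in the right order, since entrywise $\approx_\gamma$ approximations compose by multiplying the factors. All calculations are one-line consequences of the preceding lemma and the definition.
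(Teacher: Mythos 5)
Your proof is correct and is essentially the same argument as the paper's: you apply Lemma~\ref{lem:contraction} to $\ww$ and $\wwbar$, observe that $\wwbar$ is a fixed point of $\textsc{LewisIterate}(\AA, p, 1, \cdot)$, account for the $\beta^{p/2}$ factor coming from the approximate leverage scores, and compose the two multiplicative errors. No gaps.
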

\begin{proof}
This follows from applying Lemma~\ref{lem:contraction} to $\ww$ and $\wwbar$, noting that
$\textsc{LewisIterate}(\AA, p, \beta, \ww) \approx_{\beta^{p/2}} \textsc{LewisIterate}(\AA, p, 1, \ww)$ and that
$\textsc{LewisIterate}(\AA, p, 1, \wwbar) = \wwbar$.
\end{proof}

Interestingly, another corollary is that for $p < 4$, the Lewis weights exist and are unique:
\begin{corollary}
\label{cor:itlewis}
For all $p < 4$, there exists a unique assignment of weights $\ww$ that are a fixed point of
$\textsc{LewisIterate}(\AA, p, 1, _)$, or equivalently that satisfy Definition~\ref{def:lewis}.
\end{corollary}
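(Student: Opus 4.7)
The plan is a Banach-style fixed point argument in the space of positive weight vectors $(0, \infty)^n$ equipped with the multiplicative (log-$\ell_\infty$) metric $d(\vv, \ww) := \max_i \left| \log \vv_i - \log \ww_i \right|$. Note that $\vv \approx_{\alpha} \ww$ exactly when $d(\vv, \ww) \le \log \alpha$, and this metric is just the pullback of the standard $\ell_\infty$ metric on $\Re^n$ under componentwise $\log$, so $(0,\infty)^n$ is complete under $d$. Lemma~\ref{lem:contraction} with $\beta = 1$ says precisely that the map $T(\ww) := \textsc{LewisIterate}(\AA, p, 1, \ww)$ satisfies $d(T(\vv), T(\ww)) \le c \cdot d(\vv, \ww)$ where $c := |p/2 - 1|$. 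For $p < 4$ we have $c < 1$, so $T$ is a strict contraction. The map $T$ is also continuous on $(0,\infty)^n$ since $\AA^T \WW^{1 - 2/p} \AA$ is positive definite whenever $\WW$ has positive diagonal and $\AA$ has full column rank.

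For existence, I would start at $\ww^{(0)} = \one$ and iterate $\ww^{(t+1)} = T(\ww^{(t)})$. The first iterate satisfies $\ww^{(1)}_i = \ttau_i(\AA)^{p/2}$, which is strictly positive by full rank of $\AA$ and at most $1$ by Fact~\ref{fact:tau}, so $d(\ww^{(0)}, \ww^{(1)}) < \infty$. Contractivity then yields $d(\ww^{(t)}, \ww^{(t+1)}) \le c^t d(\ww^{(0)}, \ww^{(1)})$, making $\{\ww^{(t)}\}$ a Cauchy sequence with respect to $d$. Its limit $\ww^\star$ lies in $(0, \infty)^n$ by completeness, and is a fixed point of $T$ by continuity. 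By construction, fixed points of $T$ are exactly the weight vectors satisfying Definition~\ref{def:lewis}.

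For uniqueness, observe that any fixed point $\ww$ of $T$ equals $\ttau_i(\WW^{1/2 - 1/p} \AA) \in (0, 1]$ coordinatewise, so any two fixed points $\ww, \ww'$ lie in a bounded region of $(0,\infty)^n$ and have finite $d(\ww, \ww')$. Applying Lemma~\ref{lem:contraction} $k$ times gives $d(\ww, \ww') = d(T^k(\ww), T^k(\ww')) \le c^k \, d(\ww, \ww')$, and letting $k \to \infty$ forces $\ww = \ww'$. The main subtle point, and the only place $p < 4$ is essential, is that the contraction rate $c = |p/2 - 1|$ must be strictly below $1$; everything else (starting iterate being comparable to the next, weights remaining positive in the limit, continuity of $T$) follows cleanly from the full rank of $\AA$ and the upper bound $\ttau_i \le 1$.
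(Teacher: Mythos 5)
Your proposal is correct and takes essentially the same approach as the paper: Lemma~\ref{lem:contraction} plus the Banach fixed-point theorem, with the contraction ratio $|p/2-1|<1$ exactly when $p<4$. You have simply filled in the bookkeeping the paper leaves implicit — the completeness of $(0,\infty)^n$ under the log-$\ell_\infty$ metric, positivity of the first iterate, and finiteness of the distance between any two fixed points.
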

\begin{proof}
This is just the Banach fixed point theorem applied to $\textsc{LewisIterate}(\AA, p, 1, _)$:
whenever $p < 4$, $|1-2/p| < 1$, so the iteration is a contraction mapping and has a unique
fixed point.
\end{proof}

We can also show that after one step $\ww$ is already polynomially close to $\wwbar$:
\begin{lemma}
\label{lem:initialization}
After $t = 1$ in $\textsc{ApproxLewisWeights}(\AA, p, \beta, T)$, $\ww_i \approx_{\beta^{p/2} n^{|p/2 - 1|}} \wwbar_i$.
\end{lemma}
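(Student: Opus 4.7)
The plan is to reduce to an operator approximation between $\AA^T \AA$ and $\AA^T \WWbar^{1-2/p} \AA$ and prove the latter via a Cauchy--Schwarz argument.

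When $t=1$, the input to $\textsc{LewisIterate}$ is $\ww = \one$, so $\WW^{1/2-1/p} = \II$ and $\tilde{\ttau}_i \approx_\beta \ttau_i(\AA)$, which gives $\wwhat_i \approx_{\beta^{p/2}} \ttau_i(\AA)^{p/2}$. The lemma therefore reduces to showing $\ttau_i(\AA)^{p/2} \approx_{n^{|p/2-1|}} \wwbar_i$ for every $i$. Raising to the $2/p$ power and unpacking $\wwbar_i^{2/p} = \aa_i^T(\AA^T \WWbar^{1-2/p} \AA)^{-1}\aa_i$ from Definition~\ref{def:lewis}, this pointwise claim will follow, by Fact~\ref{fact:operatorApprox}, from the operator approximation
\begin{equation*}
\AA^T \WWbar^{1-2/p} \AA \approx_{n^{|1-2/p|}} \AA^T \AA .
\end{equation*}

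To prove this, I would conjugate by $(\AA^T\AA)^{-1/2}$ and set $\PP = \sum_i \wwbar_i^{1-2/p} \uu_i \uu_i^T$ with $\uu_i = (\AA^T\AA)^{-1/2}\aa_i$. Three facts are then in play: $\|\uu_i\|^2 = \ttau_i(\AA) \leq 1$ (Fact~\ref{fact:tau}), $\sum_i \uu_i \uu_i^T = \II$, and the Lewis-weight equation rewrites as $\wwbar_i^{2/p} = \uu_i^T \PP^{-1} \uu_i$. One side of $\PP \approx_{n^{|1-2/p|}} \II$ is free from $\wwbar_i \leq 1$: when $p \geq 2$ we have $\wwbar_i^{1-2/p} \leq 1$ so $\PP \preceq \II$, and when $p \leq 2$ we have $\wwbar_i^{1-2/p} \geq 1$ so $\PP \succeq \II$. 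For the other side, the central move is the Cauchy--Schwarz inequality
\begin{equation*}
\|\uu_i\|^4 = (\uu_i^T \PP^{1/2} \PP^{-1/2} \uu_i)^2 \leq (\uu_i^T \PP \uu_i)(\uu_i^T \PP^{-1} \uu_i) = \wwbar_i^{2/p}(\uu_i^T \PP \uu_i) .
\end{equation*}
Multiplying by $\wwbar_i^{1-2/p}$ and summing yields $\trace{\PP} \leq \sum_i \wwbar_i^{1-1/p}\sqrt{\uu_i^T \PP \uu_i}$, and a second Cauchy--Schwarz (using $\sum_i \uu_i^T \PP \uu_i = \trace{\PP}$) gives $\trace{\PP} \leq \sum_i \wwbar_i^{2-2/p}$. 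A symmetric argument bounds $\trace{\PP^{-1}} = \sum_i \wwbar_i^{2/p}$, handling the range $p \geq 2$. Combined with Foster's theorem $\sum \wwbar_i = d$ and $\wwbar_i \leq 1$, a H\"older/Jensen bound on the resulting sum delivers the required power of $n$.

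The main obstacle is pinning down the sharp exponent $n^{|1-2/p|}$ in the last step. For $p = 1$ the Cauchy--Schwarz chain gives $\trace{\PP} \leq \sum_i \wwbar_i^{0} = n$ exactly, matching the claim. For $p \in (1,2)$, however, the natural H\"older bound on $\sum_i \wwbar_i^{2-2/p}$ picks up an extra factor of roughly $d^{2-2/p}$; obtaining the tight exponent likely requires leveraging more of the Lewis-weight fixed-point structure than just Foster's sum bound.
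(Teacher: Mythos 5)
Your reduction to the operator approximation $\AA^T \WWbar^{1-2/p} \AA \approx_{n^{|1-2/p|}} \AA^T \AA$ is exactly right and matches the paper, and the easy direction (from $\wwbar_i \le 1$) is handled correctly. The gap you identify at the end is real, but the diagnosis is slightly off: the fix does not require any additional Lewis-weight structure. The problem is that aggregating via $\trace{\PP}$ and $\trace{\PP^{-1}}$ throws away exactly the per-direction information you need, and cannot recover the claimed $n^{|1-2/p|}$ bound for $p \in (1,2)$ or $p > 2$ — e.g.\ $\trace{\PP^{-1}} = \sum_i \wwbar_i^{2/p}$ is already of order $d$ at $p = 2$ while the max eigenvalue is $1$, and the slack only grows away from $p = 2$.

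The paper's argument uses the same two facts you isolated — $\sum_i \uu_i \uu_i^T = \II$ and the per-row Cauchy--Schwarz bound — but applies them to a \emph{fixed test direction} rather than summing into a trace. Normalize so that $\AA^T \WWbar^{1-2/p}\AA = \II$, set $\bb_i = \wwbar_i^{1/2 - 1/p}\aa_i$ (so $\norm{\bb_i}_2^2 = \wwbar_i$), and for a unit vector $\uu$ let $c_i = \wwbar_i^{-1}(\uu^T\bb_i)^2$. Cauchy--Schwarz gives $c_i \le 1$ pointwise, the normalization gives $\sum_i \wwbar_i c_i = 1$, and the quantity to control is $\uu^T\AA^T\AA\uu = \sum_i \wwbar_i^{2/p} c_i$. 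Writing $q_i = \wwbar_i c_i$, which is a probability vector satisfying $q_i \le \wwbar_i$, this becomes $\sum_i q_i\,\wwbar_i^{2/p - 1}$; the constraint $q_i \le \wwbar_i$ lets you replace $\wwbar_i^{2/p-1}$ by $q_i^{2/p-1}$ in the favorable direction, and a power-mean bound on $\sum_i q_i^{2/p}$ over probability vectors of length $n$ produces exactly $n^{\pm(1 - 2/p)}$. This yields the two-sided $n^{|1-2/p|}$ bound for all $p$, with no extraneous $d$ factors, using only the ingredients you already had in play.
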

\begin{proof}
We first note that we may assume, without loss of generality, that $\AA^T \WWbar^{1 - 2/p} \AA = \II$.
This is because both the definition of Lewis weights and the algorithms are invariant under replacing
$\AA$ with $\AA \RR$, with $\RR$ any full-rank $d$ by $d$ matrix.

Then we claim that
\begin{equation*}
\AA^T \AA \approx_{n^{|1 - 2/p|}} I.
\end{equation*}

First, we let $\bb_i = \wwbar_i^{1/2 - 1/p} \aa_i$, so that $\aa_i = \wwbar_i^{1/p - 1/2} \bb_i$.
We note that by the definition of leverage score and Lewis weight, $\norm{\bb_i}_2 = \ww_i$.

Now, consider any unit vector $\uu$.  We have
\begin{align*}
1 &= \uu^T \uu \\
&= \uu^T \BB^T \BB \uu \\
&= \sum_i (\uu_i^T \bb_i)^2 \\
&= \sum_i \wwbar_i (\wwbar_i^{-1} (\uu_i^T \bb_i)^2).
\end{align*}

On the other hand, we have
\begin{align*}
\uu^T \AA^T \AA \uu &= \sum_i \wwbar_i^{2/p - 1} (\uu_i^T \bb_i)^2 \\
&= \sum_i \wwbar_i^{2/p} (\wwbar_i^{-1} (\uu_i^T \bb_i)^2).
\end{align*}

Furthermore, $\sum_i \wwbar_i^{-1} (\uu_i^T \bb_i)^2 \leq n$, since each term is
$\leq 1$ (as, by Cauchy-Schwarz, $(\uu_i^T \bb_i)^2 \leq \norm{\bb_i}_2^2 = \wwbar_i$).

Then the worst-case distortion would be if $\sum_i \wwbar_i^{-1} (\uu_i^T \bb_i)^2 = n$ and all $\wwbar_i$ are $\frac{1}{n}$.  In
that case, the distortion is $n^{|1 - 2/p|}$, as desired.

Finally, the $\ww$ after one step are
\begin{equation*}
\ww_i \approx_{\beta^{p/2}} (\aa_i^T (\AA^T \AA)^{-1} \aa_i)^{p/2}.
\end{equation*}
These are then at most $\beta^{p/2} n^{p/2 |1-2/p|} = \beta^{p/2} n^{|p/2 - 1|}$ off from $\wwbar_i$.
\end{proof}

Combining this initial condition with the convergence result allows us to bound
the total number of steps, giving a proof of Theorem~\ref{thm:main}.

\begin{proof}[Proof of Lemma~\ref{lem:computeLewis}]
The total multiplicative contribution of the blowups from $\beta$ is at most $\beta^{\frac{p/2}{1 - |p/2 - 1|}}$.
The contribution from the starting error is at most $n^{|p/2 - 1|^T}$.

Then if $\beta \leq n^{\theta \frac{1 - |p/2 - 1|}{p}}$ and $T \geq \frac{\log(2 / \theta)}{1 - |p/2 - 1|}$,
each provides at most $n^{\theta / 2}$ error, so the result is an $n^{\theta}$-approximation.
\end{proof}

\section{Optimization Perspective on Lewis Weights}
\label{sec:convex}

Although simple and appealing, the iterative scheme described fails
when $p \geq 4$, since $\textsc{LewisIterate}$ stops being a contraction (further,
runtime approaches infinity as $p \to 4$).  To obtain Lewis weights for general
$p$, we therefore need another approach.  Here, we will give a characterization of
Lewis weights based on the solution to an optimization problem.

This optimization problem is essentially derived from a standard
proof of the existence of Lewis weights (\cite{BanachBook}, III.B, 7).  In fact,
it is interesting to note that both of these algorithms correspond fairly
directly to proofs of the existence and uniqueness of Lewis weights.
The optimization argument given here is valid for
all $p$, but only directly leads to an efficient algorithm for $p \geq 2$
(this is unimportant, since the iterative scheme is extremely efficient
when $p < 2$).

The optimization problem we will look at is, over symmetric matrices $\MM$,
\begin{equation*}
\begin{aligned}
& \underset{\MM}{\text{maximize}} & & \det M \\
& \text{subject to}
& & \sum_i (\aa_i^T \MM \aa_i)^{p/2} \leq d, \\
&&& \MM \succeq 0.
\end{aligned}
\end{equation*}

Note that when $p \geq 2$, the region in question is convex: it is the intersection
of the (convex) positive semidefinite cone and and $\ell_{p/2}$-norm constraint on
a vector of linear functions of $\MM$ (the $\aa_i^T \MM \aa_i$).  Maximizing the
determinant is also equivalent to minimizing the convex function $-\log \det \MM$,
so it is a convex optimization problem (we will give a specific algorithm using
convex optimization tools to find approximate Lewis weights below, in Theorem~\ref{thm:polyslow}.

\begin{lemma}
The optimization problem given above attains its maximum.
Further, for any matrix $\QQ$ reaching this maximum, the weights
\begin{equation*}
\ww_i = (\aa_i^T \QQ \aa_i)^{p/2}
\end{equation*}
satisfy the definition (Definition~\ref{def:lewis}) of Lewis weights.
\end{lemma}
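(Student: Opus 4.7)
The plan is to prove existence of the maximizer by a compactness argument and then derive the Lewis-weight identity as a first-order optimality condition.

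For existence, I would first show the feasible set is nonempty (take $\MM = \epsilon \II$ for small $\epsilon > 0$, which satisfies both constraints for $\epsilon$ small enough). Next I need to argue that, within the feasible set, $\det \MM$ attains its supremum. The key point is that because $\AA$ has full column rank, there is a constant $c > 0$ with $\max_i (\aa_i^T \vv)^2 \geq c$ for every unit vector $\vv$; combined with the constraint $(\aa_i^T \MM \aa_i)^{p/2} \leq d$ for each $i$, this bounds the largest eigenvalue of any feasible $\MM$ by $c^{-1} d^{2/p}$. Hence the feasible set is compact, $\det$ is continuous, and a maximizer $\QQ$ exists. Since $\det \MM = 0$ on the boundary of the PSD cone while a strictly positive value is achievable by $\epsilon \II$, the maximizer $\QQ$ must be positive definite, so it lies in the interior of the PSD constraint. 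Also the norm constraint must be active at $\QQ$, since otherwise we could scale $\QQ$ up by $(1+\eta)$ and strictly increase $\det$.

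With $\QQ$ in the interior of $\MM \succeq 0$, I can apply a Lagrangian analysis to $\log \det \MM$ (a strictly concave function that agrees with maximizing $\det$) under the single active constraint $\sum_i (\aa_i^T \MM \aa_i)^{p/2} = d$. Using $\nabla_{\MM} \log \det \MM = \MM^{-1}$ and $\nabla_{\MM} (\aa_i^T \MM \aa_i)^{p/2} = (p/2)(\aa_i^T \MM \aa_i)^{p/2 - 1} \aa_i \aa_i^T$, the stationarity condition becomes
\begin{equation*}
\QQ^{-1} = \lambda \cdot \tfrac{p}{2} \sum_i (\aa_i^T \QQ \aa_i)^{p/2 - 1} \aa_i \aa_i^T
= \lambda \cdot \tfrac{p}{2} \AA^T \WW^{1 - 2/p} \AA,
\end{equation*}
where $\ww_i = (\aa_i^T \QQ \aa_i)^{p/2}$ so that $(\aa_i^T \QQ \aa_i)^{p/2 - 1} = \ww_i^{1 - 2/p}$.

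To pin down $\lambda$, I would take the trace of $\QQ^{-1} \QQ = \II$ on both sides: the left hand side equals $d$, while the right hand side equals $\lambda \cdot (p/2) \sum_i \ww_i^{1-2/p} (\aa_i^T \QQ \aa_i) = \lambda \cdot (p/2) \sum_i \ww_i^{1-2/p} \ww_i^{2/p} = \lambda \cdot (p/2) \sum_i \ww_i = \lambda \cdot (p/2) \cdot d$, using that the constraint is tight. Hence $\lambda = 2/p$, giving $\QQ^{-1} = \AA^T \WW^{1 - 2/p} \AA$, and therefore
\begin{equation*}
\aa_i^T (\AA^T \WW^{1 - 2/p} \AA)^{-1} \aa_i = \aa_i^T \QQ \aa_i = \ww_i^{2/p},
\end{equation*}
which is precisely the second form of Definition~\ref{def:lewis}.

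The main delicate step is justifying that the Lagrange condition holds at the maximizer in unconstrained form. Interiority of $\QQ$ inside the PSD cone and tightness of the norm constraint (both argued above) together with a standard constraint qualification (the linear independence of the gradient of the single active scalar constraint, which is nonzero since $\QQ \neq 0$ and $\AA$ has full rank) give this cleanly. Everything else is a calculation: computing two matrix gradients, invoking the trace identity to fix $\lambda$, and rearranging.
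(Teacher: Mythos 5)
Your proof is correct and follows essentially the same route as the paper: compactness of the feasible region for existence, interiority in the PSD cone plus tightness of the norm constraint, then a Lagrange-multiplier condition giving $\QQ^{-1} \propto \AA^T\WW^{1-2/p}\AA$ with $\ww_i = (\aa_i^T\QQ\aa_i)^{p/2}$. The one small difference is how you pin down the proportionality constant: you take the trace of $\QQ^{-1}\QQ = \II$ and use the active constraint to get $\lambda = 2/p$ directly, whereas the paper argues that a suitable rescaling of $\ww$ satisfies the Lewis fixed-point identity and then invokes the fact that Lewis weights (by Foster's identity on leverage scores) and the $\ww_i$ both sum to $d$ to conclude the scaling is trivial. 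Your version is a bit more self-contained at that step; the paper's is a bit more terse. Both are sound.
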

\begin{proof}
First, note that the region in question is compact.  Since the function to be
optimized is continuous, it must attain a maximum.

Now, consider any matrix $\QQ$ that attains the maximum.
Then $\QQ$ will saturate the $\sum_i (a_i^T \MM a_i)^{p/2}$ constraint since otherwise
it could just be scaled up, but will be positive definite (not on the boundary of
the semidefinite cone) because otherwise it would have determinant 0.  We may then
say that $\QQ$ is a local maximum of the determinant subject to the constraint that
\begin{equation*}
\sum_i (\aa_i^T \MM \aa_i)^{p/2} = d.
\end{equation*}
Since all the functions in question are smooth, we may characterize such a local optimum
by Lagrange multipliers.  The gradient of the constraint (note that we are in the vector
space of symmetric matrices, so this is a matrix) at $\QQ$ can be seen to be
\begin{equation*}
p/2 \sum_i (\aa_i^T \QQ \aa_i)^{p/2 - 1} \aa_i \aa_i^T.
\end{equation*}
The gradient of the determinant, on the other hand, at $\QQ$, is $\det(\QQ) \QQ^{-1}$.
Lagrange multipliers then imply that at $\QQ$, $\QQ^{-1}$ must be parallel to
$\sum_i (\aa_i^T \QQ \aa_i)^{p/2 - 1} \aa_i \aa_i^T$.

The claim is that
\begin{equation*}
\ww_i = (\aa_i^T \QQ \aa_i)^{p/2}.
\end{equation*}
then must satisfy the conditions for Lewis weights.  To argue this, we note that
for these $\ww_i$, $\AA \WW^{1 - 2/p} \AA$ is
\begin{equation*}
\sum_i \ww_i^{1-2/p} \aa_i \aa_i^T = \sum_i (\aa_i^T \QQ \aa_i)^{p/2 - 1} \aa_i \aa_i^T.
\end{equation*}
In other words, $\AA \WW^{1 - 2/p} \AA$ is equal to $C \QQ^{-1}$ for some scalar $C$.
That implies that the $\ww_i$ could be defined as
\begin{equation*}
C^{p/2} (\aa_i^T (\AA \WW^{1 - 2/p} \AA)^{-1} \aa_i)^{p/2}.
\end{equation*}
Then some scaling of the $\ww_i$ satisfies
$\ww_i = (\aa_i^T (\AA \WW^{1 - 2/p} \AA)^{-1} \aa_i)^{p/2}$,
so $\ww$ is a multiple of Lewis weights. But since
\begin{equation*}
\sum_i \ww_i = \sum_i (\aa_i^T \MM \aa_i)^{p/2} = d
\end{equation*}
$\ww_i$ defined this way must be precisely the Lewis weights.
\end{proof}

\begin{corollary}
\label{cor:conlewis}
For all $p$, there exists an assignment of weights satisfying Definition~\ref{def:lewis}.
Furthermore, for $p \geq 2$, this assignment is unique.
\end{corollary}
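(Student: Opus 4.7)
The plan is to read the corollary off from the preceding Lemma together with a standard convex-analysis argument.

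For existence (any $p$), the Lemma already does essentially all the work: the feasible region of the optimization problem is compact (it is the intersection of the PSD cone, which is closed, with the compact sublevel set $\sum_i (\aa_i^T \MM \aa_i)^{p/2} \leq d$ truncated by a determinant-lower-bound coming from, say, considering a scaled identity), and $\log\det$ is continuous on its interior, so a maximizer $\QQ^{\star}$ exists. The Lemma then shows that $\ww_i := (\aa_i^T \QQ^{\star} \aa_i)^{p/2}$ satisfies Definition~\ref{def:lewis}. (One small thing I would verify carefully: that the maximizer lies in the interior of the PSD cone, since otherwise $\log \det = -\infty$; this is immediate because any strictly positive multiple of the identity, suitably scaled to satisfy the constraint, has positive determinant.)

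For uniqueness when $p \geq 2$, I would first argue the optimization problem is strictly concave (in $\log\det$, over a convex region). The feasible region is convex for $p \geq 2$ because the constraint $\sum_i (\aa_i^T \MM \aa_i)^{p/2} \leq d$ is an $\ell_{p/2}$-norm inequality (with $p/2 \geq 1$) applied to linear functionals of $\MM$, intersected with the PSD cone. The objective $-\log\det \MM$ is strictly convex on the positive definite cone (a standard fact). Hence the maximizer $\QQ^{\star}$ is unique.

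To convert uniqueness of $\QQ^{\star}$ into uniqueness of Lewis weights, I would show the converse of the Lemma: any solution $\ww$ to Definition~\ref{def:lewis} arises from a maximizer. Concretely, given such $\ww$, set $\QQ := c\,(\AA^T \WW^{1-2/p} \AA)^{-1}$ with $c$ chosen so that $\sum_i (\aa_i^T \QQ \aa_i)^{p/2} = d$; a direct calculation using $\aa_i^T (\AA^T \WW^{1-2/p} \AA)^{-1} \aa_i = \ww_i^{2/p}$ shows $\sum_i \ww_i = d$, so $c = 1$ works, and the Lagrange condition $\QQ^{-1} \parallel \sum_i (\aa_i^T \QQ \aa_i)^{p/2-1} \aa_i \aa_i^T$ holds by construction. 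Since the problem is convex for $p \geq 2$, this critical point is the unique global maximum, and therefore $\ww_i = (\aa_i^T \QQ^{\star} \aa_i)^{p/2}$ is forced.

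The main obstacle I anticipate is the converse direction: carefully verifying that the Lagrange/KKT conditions characterize maxima rather than just critical points, and that the map $\ww \mapsto \QQ$ I wrote down is genuinely inverse to the $\QQ \mapsto \ww$ map of the Lemma (including handling the scaling constant $c$ and confirming the resulting $\QQ$ is feasible and PD). Everything else is direct bookkeeping from the Lemma and the convexity of $-\log\det$.
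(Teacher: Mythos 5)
Your proposal is correct and follows essentially the same route as the paper: existence via a maximizer $\QQ^{\star}$ from the preceding lemma, uniqueness for $p\geq 2$ from strict convexity of $-\log\det$ over the convex feasible set, and the converse map $\ww \mapsto c\,(\AA^T\WW^{1-2/p}\AA)^{-1}$ showing every set of Lewis weights is induced by a maximizer (the paper phrases this as ``applying the argument in reverse''). Your fleshing-out of the converse direction is a welcome addition, since the paper leaves it terse; the only minor elision is that $\sum_i\ww_i=d$ comes from Foster's theorem on leverage scores rather than a ``direct calculation,'' but that is a one-line fix.
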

\begin{proof}
The existence statement just follows from taking any of the $\QQ$ attaining the maximum.
For $p \geq 2$, the uniqueness follows from the fact that a strictly convex function
(which $-\log \det \MM$ is) attains a unique minimum on a convex set, and that applying
the argument in reverse every set of weights satisfying Definition~\ref{def:lewis}
is induced by such a $\QQ$.
\end{proof}
Of course, as mentioned earlier, uniqueness for $p < 2$ follows from Corollary~\ref{cor:itlewis}.

It is worth noting that this optimization problem gives a simple, geometric
characterization of the Lewis weights.  $\sum_i (\aa_i^T \MM \aa_i)^{p/2}$
is proportional to the average value of $\norm{\AA x}_p^p$ inside the ellipsoid
$x^T \MM^{-1} x \leq 1$.  Thus, the quadratic form induced by the Lewis weights
corresponds to the ellipsoid of maximum volume with a limited $p$-moment of
$\norm{\AA x}_p$; it is a ``softer'' version of a John ellipsoid, where the max
(effectively the $\infty$-moment) of $\norm{Ax}_p$ would be limited instead.

To use this algorithmically, we only assume an approximate solution.
We can analyze such a solution using the following lemma:
\begin{lemma}
There exists a constant $C$ such that for any $0 < \epsilon \leq 1$,
and positive semidefinite matrix $\MM$ satisfying
$\sum_i (\aa_i^T \MM \aa_i)^{p/2} = d$ and $\det \MM \geq (1-C \epsilon^2) \det \QQ$,
$\MM \approx_{1+\epsilon} \QQ$.
\end{lemma}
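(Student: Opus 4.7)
I will plan a proof that reduces to a scalar inequality on the eigenvalues of the transformed matrix $\NN = \QQ^{-1/2} \MM \QQ^{-1/2}$, exploiting the optimality of $\QQ$ established in the previous lemma. First I would perform the change of variables $\bb_i = \QQ^{1/2} \aa_i$, so that the constraint $\sum_i (\aa_i^T \MM \aa_i)^{p/2} = d$ becomes $\sum_i (\bb_i^T \NN \bb_i)^{p/2} = d$, the ratio $\det \MM / \det \QQ$ becomes $\det \NN$, and the desired conclusion $\MM \approx_{1+\epsilon} \QQ$ becomes $\NN \approx_{1+\epsilon} \II$. In these coordinates the Lagrangian analysis in the previous lemma gives (after using the normalization $\sum_i \|\bb_i\|_2^p = d$) the key identity $\sum_i \|\bb_i\|_2^{p-2} \bb_i \bb_i^T = \II$.

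The engine of the proof is then a Bregman-divergence-style inequality. Convexity of $u \mapsto u^{p/2}$, which holds because $p \geq 2$, applied tangentially at $u = \|\bb_i\|_2^2$, gives $(\bb_i^T \NN \bb_i)^{p/2} \geq \|\bb_i\|_2^p + (p/2)\|\bb_i\|_2^{p-2}(\bb_i^T \NN \bb_i - \|\bb_i\|_2^2)$. Summing over $i$ and applying the two identities above telescopes to $d \geq d + (p/2)(\trace{\NN} - d)$, hence $\trace{\NN} \leq d$. Combining this with $\log\det\NN \geq \log(1 - C\epsilon^2) \geq -2C\epsilon^2$ and writing $\mu_1,\dots,\mu_d$ for the eigenvalues of $\NN$, I obtain
\[
\sum_{i=1}^{d} \bigl( \mu_i - 1 - \log \mu_i \bigr) \;\leq\; 2 C \epsilon^2.
\]
The summand $\phi(\mu) = \mu - 1 - \log \mu$ is nonnegative, vanishes only at $\mu = 1$, and is strictly convex with $\phi''(\mu) = 1/\mu^2$. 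For $C$ small enough (recalling $\epsilon \leq 1$), each $\phi(\mu_i) \leq 2C$ is small enough to force $\mu_i \in [1/2, 3/2]$; on this interval $\phi(\mu) \geq (2/9)(\mu - 1)^2$ by Taylor expansion at $1$. Thus $\sum_i (\mu_i - 1)^2 \leq 9 C \epsilon^2$, so $|\mu_i - 1| \leq 3 \sqrt{C}\,\epsilon$ for every $i$, and taking $C \leq 1/9$ gives $\NN \approx_{1+\epsilon} \II$ as required.

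The main obstacle is securing the one-sided bound $\trace{\NN} \leq d$: without it the quantity $\sum_i \phi(\mu_i)$ cannot be bounded above, since the hypothesis only controls $\sum_i \log \mu_i$ and the linear term $\sum_i (\mu_i - 1)$ must be shown nonpositive on its own. This in turn uses both the Lagrangian identity $\sum_i \|\bb_i\|_2^{p-2} \bb_i \bb_i^T = \II$ at the optimum and the hypothesis $p \geq 2$ (so $x^{p/2}$ is convex and its tangent line lies below). Once this sign is pinned down, the remainder is a quantitative strong-convexity argument for $-\log\det$ of the kind familiar from Dikin-ellipsoid analyses in interior point methods.
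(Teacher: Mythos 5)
Your proof is correct and takes essentially the same approach as the paper: the paper asserts directly that $\QQ$ maximizes $\trace{\MM\QQ^{-1}}$ over the feasible region (via the local-maximum-plus-convexity argument), yielding $\trace{\MM\QQ^{-1}} \le d$, and then applies the same eigenvalue/AM-GM-style argument on $\mu_i - 1 - \log\mu_i$. You re-derive that same trace bound $\trace{\NN} \le d$ more explicitly via the Lagrangian identity $\sum_i \|\bb_i\|_2^{p-2}\bb_i\bb_i^T = \II$ and the tangent-line inequality for $u \mapsto u^{p/2}$, which is a nice unpacking of the step the paper leaves terse.
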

\begin{proof}
First note that $\QQ$ must be a maximum of
$\trace{\MM \QQ^{-1}}$ within the feasible region (otherwise, there would be a
direction in which the determinant increases around $\QQ$, making it not a local
maximum).  But for any matrix $\MM$ with $\trace{\MM \QQ^{-1}} \leq d$, if
$\MM \QQ^{-1}$ has any eigenvalues further than $\epsilon$ from 1,
$\det(\MM^{-1}) \leq (1 - O(\epsilon^2)) \det(\QQ^{-1})$.  Therefore,
an $O(\epsilon^2)$-approximate solution to the optimization problem implies
an $\epsilon$-approximation of $\QQ$ and of the Lewis weights themselves.
\end{proof}

The algorithmic guarantee is then:
\begin{theorem}
\label{thm:polyslow}
There exists a function $f(p, \epsilon)$ and a constant $C$ such that for any $\AA$, $p \geq 2$,
$(1+\epsilon)$-approximate Lewis weights (or the quadratic form $\QQ$) can be computed in time $O(f(p, \epsilon) n \log(n) \log \log(n) d^C)$.
\end{theorem}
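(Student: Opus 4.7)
The plan is to solve the convex program
\begin{equation*}
\min_{\MM \succeq 0} -\log \det \MM \qquad \text{subject to} \quad \sum_i (\aa_i^T \MM \aa_i)^{p/2} \leq d
\end{equation*}
to within additive error $O(\epsilon^2)$ in the objective by feeding it to a generic convex-optimization engine (ellipsoid method or a cutting-plane method), and then output $\ww_i = (\aa_i^T \MM \aa_i)^{p/2}$ as the approximate Lewis weights, invoking the preceding lemma to certify that these are $(1+\epsilon)$-approximate. Because $p \geq 2$, the constraint set is convex, $-\log\det \MM$ is strictly convex on it, and the set is compact, so a unique optimum exists and any sufficiently good approximate minimizer will do.

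First I would precondition the problem. Replacing $\AA$ by $\AA \RR$ for any full-rank $\RR \in \Re^{d \times d}$ leaves the Lewis weights invariant, and transforms the optimization problem by the bijective change of variables $\MM \mapsto \RR^{-1} \MM \RR^{-T}$. Taking $\RR$ to be the Cholesky factor of $\AA^T \AA$ we may assume $\AA^T \AA = \II$, and a short argument (similar to the one used in the proof of Lemma~\ref{lem:initialization}) then shows that the feasible region lives inside a Frobenius ball of radius $\poly(d)$ and contains one of radius $1/\poly(d)$ in the $O(d^2)$-dimensional space of symmetric matrices. Standard convex-optimization routines over such a body reach relative accuracy $O(\epsilon^2)$ in $\poly(d, \log(1/\epsilon))$ oracle calls.

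The oracle at a proposed iterate $\MM$ computes $\log \det \MM$ and $\MM^{-1}$ in $O(d^\omega)$ time, and evaluates the constraint $\sum_i (\aa_i^T \MM \aa_i)^{p/2}$ together with its subgradient $\frac{p}{2} \sum_i (\aa_i^T \MM \aa_i)^{p/2-1} \aa_i \aa_i^T$ in $O(n d^2)$ arithmetic operations (one quadratic form per row, then a sum). Performing all arithmetic at the $\poly(\epsilon/n)$ precision needed to guarantee a correctly rounded separating hyperplane contributes only an additional $O(\log n \log \log n)$ factor (this is where the stated $\log n \log \log n$ enters). Multiplying the per-iteration cost by the iteration count yields the claimed bound $O(f(p,\epsilon) n \log n \log \log n \cdot d^C)$, with $f(p,\epsilon)$ absorbing the iteration count and the dependence of Lipschitz and smoothness estimates on $p$.

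The main obstacle I anticipate is keeping the dependence on $n$ near-linear rather than paying polynomially in $n$ through the iteration count. The key design choice is to treat $\sum_i (\aa_i^T \MM \aa_i)^{p/2} \leq d$ as a \emph{single} convex constraint rather than $n$ separate constraints, so that iteration counts of the outer optimizer depend only on the ambient dimension $d^2$ and on $\epsilon$, while $n$ enters exclusively through the batched oracle cost per iteration. A secondary point is that the preceding lemma demands an $O(\epsilon^2)$-optimal solution (not merely $O(\epsilon)$); I would accommodate this by running the optimizer to an additional $\log(1/\epsilon)$ digits of precision, which is harmlessly absorbed into $f(p,\epsilon)$. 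Everything else — existence of the optimum, boundedness of the feasible set, extraction of weights from the optimal $\MM$ — is already in place from the preceding lemma and Corollary~\ref{cor:conlewis}.
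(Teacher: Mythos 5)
Your overall framework matches the paper's: formulate the Lewis weights as the maximizer of $\log\det\MM$ over the convex body $\{\MM \succeq 0,\ \sum_i (\aa_i^T \MM \aa_i)^{p/2} \leq d\}$, solve with a generic convex optimizer, and invoke the preceding lemma to convert an $O(\epsilon^2)$-approximate objective value into $(1+\epsilon)$-approximate weights. The paper's only structural difference is that it phrases the search as a feasibility problem (add a third constraint $\det \MM \geq D$ and binary-search $D$) rather than as direct minimization of $-\log\det$; these are close to interchangeable.

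There is, however, a concrete error in your size estimates. After preconditioning to $\AA^T \AA = \II$, the feasible region is \emph{not} contained in a Frobenius ball of radius $\poly(d)$ --- it extends to radius $\poly(n)$. The argument of Lemma~\ref{lem:initialization} that you cite actually yields $\QQ \preceq n^{1-2/p} \AA^T \AA$, and more generally a rank-one $\MM = t\,\uu\uu^T$ with $\uu^T\AA^T\AA\uu = 1$ satisfies the constraint whenever $t^{p/2}\sum_i(\aa_i^T\uu)^p \leq d$; by the power-mean inequality $\sum_i(\aa_i^T\uu)^p$ can be as small as $n^{1-p/2}$, so $t$ can reach $d^{2/p}n^{1-2/p}$. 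The paper correctly uses the ellipsoid $\norm{\QQ(\AA^T\AA)^{-1}}_F^2 \leq n^{2-4/p}d$, whose volume ratio to a small ball inside the feasible set is $n^{O(d^2)}$. This forces the ellipsoid method to run for $\poly(d)\log n$ iterations, not $\poly(d,\log(1/\epsilon))$; that is the real source of the $\log n$ in the runtime, not arithmetic precision. Likewise, the $\log\log n$ factor in the paper comes from binary search over $D \approx \det\QQ$, whose magnitude ranges over $[n^{-O(d)}, n^{O(d)}]$; in your direct-minimization formulation you would encounter the analogous issue because $|\log\det\QQ|$ can be as large as $\Theta(d\log n)$, so some guessing/doubling over the optimal value range (again $\log\log n$ iterations) is still needed before the standard guarantees of an optimality-gap-based stopping rule kick in. With those two corrections your proof becomes essentially the paper's.
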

Note that the $p < 2$ case already follows from Lemma~\ref{lem:computeLewis}.
\begin{proof}
For any positive real number $D$, consider the intersection of the sets $\MM \succeq 0$,
$\sum_i (\aa_i^T \MM \aa_i)^{p/2} \leq d$, and $\det(\MM) \geq D$.  This is a convex set
with a polynomial time separation oracle.

Furthermore, (if $D \leq \det \QQ$) this set always contains $\QQ$, which satisfies
$\QQ \preceq n^{1-2/p} \AA^T \AA$ by the argument from Lemma~\ref{lem:initialization}
(which nowhere assumes that $p < 4$) and is therefore contained in the ellipsoid
(over matrices!) $\norm{\QQ (\AA^T \AA)^{-1}}_F^2 \leq n^{2-4/p} d$.  If $D$ is within a
constant factor of $\det \QQ$, then the entire intersection is contained in a constant multiple
of that ellipsoid, and the volume ratio between it and the ellipsoid is at most $n^{O(d^2)}$.
Thus, for any such $D$, the ellipsoid algorithm can find an element in time $d^C \log n$ iterations,
each of which will take $n d^C$ time.

Finally, one may binary search (over an exponentially spaced set of possible $D$ values) to find
such a $D$ within an appropriate constant factor $1+O(\epsilon^2)$ in $\log \log n + \log d$ iterations.
\end{proof}
This is a polynomial-time algorithm, but not an input-sparsity one.


\section{Properties of Lewis weights}
\label{sec:properties}

Before describing our input-sparsity time algorithm arbitrary $\ell_p$,
we need to state some properties of Lewis weights.
Proofs of these properties are deferred to Appendix~\ref{sec:properties-proofs}.
In order to describe stability, we need to define a generalization of
the concept of Lewis weights: $\alpha$-almost Lewis weights

\begin{definition}
For a matrix $\AA$ and norm $\ell_p$, an assignment of weights $\ww$ is \emph{$\alpha$-almost Lewis} if
\begin{equation*}
\aa_i^T \left( \AA^T \WW^{1 - 2/p} \AA \right)^{-1} \aa_i \approx_{\alpha} \ww^{2/p}_i.
\end{equation*}
\end{definition}

The first property we investigate is \emph{stability}.  Recall that switching to $\ell_2$ via Lewis weights
gives a matrix $\BB = \WWbar^{1/2 - 1/p} \AA$, a reweighted version of $\AA$.  One may then ask how $\BB$ can change if the weights of the rows of $\AA$ change (as by multiplying by constants near 1).  We will give two equivalent definitions of stability, one in terms of $\alpha$-almost Lewis weights and the other in terms of $\BB$:

\begin{definition}
\label{def:stable}
For a value of $p > 0$, we will use the following two definitions of $c$-stability equivalently:
\begin{enumerate}
\item  For any $\AA'$ obtained by multiplying each row of $\AA$ by a number in
$[\alpha^{-1}, \alpha]$, the resulting reweighted $\BB'$ can be obtained by multiplying each row of
$\BB$ by a number in $[\alpha^{-c}, \alpha^c]$.
\label{def:stable1}
\item Any set of $\alpha$-almost Lewis weights $\ww$ for $\AA$ satisfy
\begin{equation*}
\wwbar_i \approx_{\alpha^c} \ww_i.
\end{equation*}
\label{def:stable2}
\end{enumerate}
\end{definition}

We start with stability results under multiplicative perturbations of rows.
Our results are in two regimes: a constant factor one for $p < 4$ and
a much weaker one for arbitrary $p > 2$.

\begin{lemma}
\label{lem:fullstable}
For all $p < 4$, Lewis weights are $\frac{p/2}{1 - |p/2 - 1|}$-stable.
\end{lemma}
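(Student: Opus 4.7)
The plan is to verify Definition~\ref{def:stable} in its second formulation: given any $\alpha$-almost Lewis weights $\ww$ for $\AA$, I will show $\ww \approx_{\alpha^c} \wwbar$, where $c = \frac{p/2}{1-|p/2-1|}$ and $\wwbar$ are the true Lewis weights (which exist and are the unique fixed point of $\textsc{LewisIterate}(\AA, p, 1, \cdot)$ by Corollary~\ref{cor:itlewis}).

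First, I would rewrite the $\alpha$-almost Lewis hypothesis in the language of the iteration map. Unwinding the pseudocode with $\beta = 1$ gives
\[
\textsc{LewisIterate}(\AA, p, 1, \ww)_i = \bigl(\aa_i^T (\AA^T \WW^{1-2/p} \AA)^{-1} \aa_i\bigr)^{p/2},
\]
and raising the $\alpha$-almost-Lewis inequality $\aa_i^T (\AA^T \WW^{1-2/p} \AA)^{-1} \aa_i \approx_\alpha \ww_i^{2/p}$ to the power $p/2$ yields
\[
\textsc{LewisIterate}(\AA, p, 1, \ww) \approx_{\alpha^{p/2}} \ww.
\]
Thus an $\alpha$-almost Lewis weight is a weight moved by at most $\alpha^{p/2}$ under one $\beta = 1$ iteration.

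Next I would let $\rho$ be the tightest factor with $\ww \approx_\rho \wwbar$ and apply the contraction estimate Lemma~\ref{lem:contraction} to the pair $\ww, \wwbar$. Since $\wwbar$ is a fixed point of the $\beta = 1$ iteration, this yields
\[
\textsc{LewisIterate}(\AA, p, 1, \ww) \approx_{\rho^{|p/2-1|}} \wwbar.
\]
Composing with the previous display and using the minimality of $\rho$ gives the self-referential inequality $\rho \leq \alpha^{p/2} \rho^{|p/2-1|}$. The hypothesis $p < 4$ makes $|p/2-1| < 1$, so solving yields $\rho \leq \alpha^{p/2/(1-|p/2-1|)} = \alpha^c$, which is exactly the desired bound.

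The one soft point to tidy up is that the chain of $\approx$-inequalities requires $\rho$ to be a well-defined finite number, i.e., the entries of $\ww$ and $\wwbar$ must be bounded and strictly positive. Since $\AA$ is full rank and the $\aa_i$ are nonzero, all entries of $\wwbar$ are strictly positive; for $\ww$, one can normalize so that the Gram matrix $\AA^T \WWbar^{1-2/p} \AA$ equals $\II$, as in Lemma~\ref{lem:initialization}, and then use the almost-fixed-point condition together with Cauchy--Schwarz to pin every $\ww_i$ into a compact positive interval. Beyond this routine check, the proof is essentially ``approximate fixed point meets contraction mapping,'' the same template that drives Corollary~\ref{cor:convergence}.
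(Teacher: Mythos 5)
Your proof is correct and uses the same two ingredients as the paper's: the observation that one application of $\textsc{LewisIterate}(\AA,p,1,\cdot)$ moves $\alpha$-almost Lewis weights by at most $\alpha^{p/2}$, and the contraction estimate from Lemma~\ref{lem:contraction}. The only difference is how you close the argument: the paper iterates the map to convergence and sums the geometric series $\alpha^{p/2\sum_{t\geq 0}|p/2-1|^t}$, whereas you bootstrap by defining $\rho$ as the minimal approximation ratio and solving the self-consistency inequality $\rho \leq \alpha^{p/2}\rho^{|p/2-1|}$; the two yield the identical exponent $\frac{p/2}{1-|p/2-1|}$. Your version does carry a small additional burden of checking that $\rho$ is finite, which you flag; that is immediate because the $\alpha$-almost condition itself forces each $\ww_i$ into a bounded positive interval, while the paper's telescoping sidesteps this by never referencing a minimal ratio. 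Both are valid; yours is a minor, arguably cleaner, repackaging of the same argument.
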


\begin{lemma}
\label{lem:sqrtstable}
There exists a function $f(p)$ such that for all $p \geq 2$, $\ell_p$ Lewis weights are $O(f(p) \sqrt{d})$-stable.
\end{lemma}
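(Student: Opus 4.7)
The plan is to verify Definition~\ref{def:stable2}: given $\alpha$-almost Lewis weights $\ww$ for $\AA$, show $\wwbar_i \approx_{\alpha^{O(f(p)\sqrt d)}} \ww_i$. The main tool is the convex-optimization characterization from Section~\ref{sec:convex}: the Lewis matrix $\QQ = (\AA^T\WWbar^{1-2/p}\AA)^{-1}$ is the unique maximizer of $\log\det\MM$ on $\mathcal{K} = \{\MM \succeq 0 : \sum_i(\aa_i^T\MM\aa_i)^{p/2} \leq d\}$. Since both the Lewis weights and the almost-Lewis property are invariant under $\AA \mapsto \AA\RR$ for any full-rank $\RR$, I would first renormalize so that $\QQ = \II$; then $\sum_i \wwbar_i^{1-2/p}\aa_i\aa_i^T = \II$, $\|\aa_i\|^2 = \wwbar_i^{2/p}$, and $\sum_i \wwbar_i = d$.

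Setting $\delta_i = \ww_i/\wwbar_i$ and $\MM = (\AA^T\WW^{1-2/p}\AA)^{-1}$, summing the almost-Lewis identity $\aa_i^T\MM\aa_i \approx_{\alpha} \ww_i^{2/p}$ weighted by $\ww_i^{1-2/p}$ against $\trace{\MM\MM^{-1}} = d$ yields $\sum_i \ww_i \approx_\alpha d$, so $\MM$ can be rescaled by an $\alpha^{O(1/p)}$ factor into $\mathcal K$, changing $\log\det\MM$ by at most $O(d\log\alpha)$. The central quantitative step is to bound the suboptimality $\log\det\QQ - \log\det\MM = O(h(p)\,d\,\log\alpha)$ for some $p$-dependent $h(p)$. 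I would establish this by combining the Cauchy--Binet expansion on the Lewis-whitened matrix $\BB$ (with rows $\wwbar_i^{1/2-1/p}\aa_i$, so $\BB^T\BB = \II$),
\begin{equation*}
\det(\MM^{-1}) \;=\; \expec{S}{\prod_{i \in S}\delta_i^{1-2/p}}, \qquad \Pr[i \in S] = \wwbar_i,
\end{equation*}
with the approximate KKT identity $\sum_i(\aa_i^T\MM\aa_i)^{p/2-1}\aa_i\aa_i^T \approx_{\alpha^{p/2-1}} \MM^{-1}$, which follows from raising almost-Lewis to the $(p/2-1)$ power. Jensen on the Cauchy--Binet expansion produces moment bounds involving $\sum_i \wwbar_i \log\delta_i$, and the KKT identity controls these moments in terms of the approximation factor $\alpha$.

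With the suboptimality in hand, strong convexity of $-\log\det$ at $\II$ closes the argument. Feasibility gives the tangent constraint $\trace{\MM} \leq d$, so the eigenvalues $\mu_i$ of $\MM$ satisfy $\sum_i(\mu_i - 1) \leq 0$ and hence
\begin{equation*}
\log\det\II - \log\det\MM \;\geq\; -\sum_i(\mu_i - 1) + c\sum_i(\mu_i - 1)^2 \;\geq\; c\,\|\MM - \II\|_F^2
\end{equation*}
in a neighborhood of $\II$. The suboptimality bound then yields $\|\MM - \II\|_F = O(\sqrt{h(p)\,d\,\log\alpha})$, and since $\|\cdot\|_2 \leq \|\cdot\|_F$ we obtain $\MM \approx_{\alpha^{O(f(p)\sqrt d)}} \QQ$ after folding the square root and constants into $f(p)$. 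Plugging back into almost-Lewis then produces $\wwbar_i \approx_{\alpha^{O(f(p)\sqrt d)}} \ww_i$, as desired.

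The main obstacle is the suboptimality bound in step two: the Cauchy--Binet plus Jensen machinery yields one-sided control of $\log\det\MM$, and one must carefully exploit the approximate KKT condition together with the $\sum_i \ww_i \approx d$ constraint to pin down the matching direction; it is also here that the $p$-dependent $f(p)$ accumulates, since the $\ell_{p/2}$-norm constraint degrades with $p$. A secondary subtlety is that the quadratic strong-convexity estimate is sharp only when $\|\MM - \II\|_2$ is bounded by a small constant, so for $\alpha$ far from $1$ the argument must first be bootstrapped using an a-priori coarse bound (e.g., from the initial-condition analysis of Lemma~\ref{lem:initialization}, which does not assume $p < 4$) before the quadratic estimate is iterated to reach the claimed $\sqrt d$ rate.
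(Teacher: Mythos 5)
Your approach is genuinely different from the paper's.  The paper proves the lemma by an infinitesimal argument: it differentiates the map from row scalings $\rr$ of $\AA$ to the induced row scalings $\ss$ of the Lewis-whitened matrix $\BB$, shows the Jacobian is bounded below by $\frac{2}{p}\II$ on the ``dense'' side and handles the sparse side with a power-mean estimate, and so bounds $\max_i \Delta(\ss)_i/\ss_i$ from below by $\Omega(1/\sqrt d)$ times $\max_i \Delta(\rr)_i/\rr_i$.  Your route instead uses the global $\log\det$ optimization characterization from Section~\ref{sec:convex} and strong convexity of $-\log\det$.  The overall plan is plausible and would give an alternative, more ``global'' proof, but as written there is a scaling gap.

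The gap is in the suboptimality bound of step two.  You claim
$\log\det \QQ - \log\det\MM' = O\bigl(h(p)\, d \log\alpha\bigr)$
and then use the quadratic strong-convexity estimate
$\log\det\II - \log\det\MM' \geq c \|\MM' - \II\|_F^2$
to conclude $\|\MM' - \II\|_F = O\bigl(\sqrt{h(p)\,d\,\log\alpha}\bigr)$.
Write $\alpha = 1+\delta$ with $\delta$ small, so $\log\alpha \approx \delta$.  Your bound gives $\|\MM' - \II\|_2 \leq \|\MM' - \II\|_F = O\bigl(\sqrt{d}\,\sqrt{\delta}\bigr)$.  But $c$-stability requires $\ww_i \approx_{\alpha^c}\wwbar_i$, which after unwinding through the almost-Lewis identity needs $\|\MM'-\II\|_2 = O(\sqrt d\,\delta)$, i.e.\ \emph{linear} in $\delta$.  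With your bound the implied stability constant is $O(\sqrt{d/\delta})$, which blows up as $\alpha \to 1$ — exactly the regime stability is supposed to control.  For the strong-convexity step to close, the suboptimality must be $O\bigl(h(p)\,d\,(\log\alpha)^2\bigr)$, i.e.\ \emph{quadratic} in $\log\alpha$.  This is exactly what one should expect (the first-order term vanishes at the KKT point), but the Cauchy--Binet-plus-Jensen route you sketch does not produce it: Jensen applied to $\det\MM^{-1} = \expec{S}{\prod_{i\in S}\delta_i^{1-2/p}}$ yields a bound linear in $\sum_i \wwbar_i \log\delta_i$, which is essentially the quantity you are trying to bound, not a quadratic in $\log\alpha$.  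The missing ingredient is a genuine second-order argument: extract a bound on the Lagrangian gradient at $\MM$ from the approximate KKT identity $\sum_i(\aa_i^T\MM\aa_i)^{p/2-1}\aa_i\aa_i^T \approx_{\alpha^{p/2-1}} \MM^{-1}$, and then convert that gradient bound into a quadratic suboptimality bound via strong concavity of the Lagrangian.  That is a different calculation than what you wrote, and it is exactly where the factor $\sqrt d$ must be extracted (e.g.\ from passing between operator and Frobenius norms of the residual).  Your closing paragraph flags this as the ``main obstacle,'' which is accurate — but in its current form it is not a minor detail, it is the whole proof.
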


Next, we give limits on the multiplicative factors by which that Lewis weights
can increase when additional rows are added to $\AA$.
Here, the best property applies for $p \leq 2$:

\begin{lemma}
\label{lem:fullmonotone}
For all $p \leq 2$, Lewis weights are \emph{monotonic}: adding an extra row to $\AA$
to yield $\AA'$ can never make the Lewis weights of existing rows go up.
\end{lemma}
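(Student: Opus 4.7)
}
The plan is to argue by contradiction using the row with the largest (relative) increase in Lewis weight. Let $\wwbar$ be the Lewis weights of $\AA$ and $\wwbar'$ those of $\AA'$, where $\AA'$ is obtained by appending a row $\aa_{n+1}$. Define the ratio $\rho = \max_{i \leq n} \wwbar'_i / \wwbar_i$ and let $j$ be a row attaining this maximum. We want to show $\rho \leq 1$. Write $q = 1 - 2/p$, noting that $q \leq 0$ for $p \leq 2$; this sign is what drives the whole argument.

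The main step is a Loewner inequality on the Gram matrix $\AA^T \WWbar^{q} \AA$. Since $\wwbar'_i \leq \rho \wwbar_i$ for all $i \leq n$ and $x \mapsto x^q$ is \emph{non-increasing} on $(0,\infty)$ when $q \leq 0$, we get $(\wwbar'_i)^q \geq \rho^q \wwbar_i^q$ for each such $i$. Combined with the PSD contribution $(\wwbar'_{n+1})^q \aa_{n+1} \aa_{n+1}^T \succeq 0$ from the new row, this yields
\begin{equation*}
(\AA')^T (\WWbar')^{q} \AA' \;=\; \sum_{i=1}^{n+1} (\wwbar'_i)^q \aa_i \aa_i^T \;\succeq\; \sum_{i=1}^{n} (\wwbar'_i)^q \aa_i \aa_i^T \;\succeq\; \rho^q \, \AA^T \WWbar^{q} \AA.
\end{equation*}
Inverting (both sides are positive definite, since $\AA$ and $\AA'$ are full rank) reverses the order:
\begin{equation*}
\bigl((\AA')^T (\WWbar')^{q} \AA'\bigr)^{-1} \;\preceq\; \rho^{-q} \bigl(\AA^T \WWbar^{q} \AA\bigr)^{-1}.
\end{equation*}

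Now instantiate this inequality against $\aa_j$, using the defining identity for Lewis weights (Definition~\ref{def:lewis}) on both sides:
\begin{equation*}
(\wwbar'_j)^{2/p} \;=\; \aa_j^T \bigl((\AA')^T (\WWbar')^{q} \AA'\bigr)^{-1} \aa_j \;\leq\; \rho^{-q}\, \aa_j^T \bigl(\AA^T \WWbar^{q} \AA\bigr)^{-1} \aa_j \;=\; \rho^{-q} \wwbar_j^{2/p}.
\end{equation*}
Dividing by $\wwbar_j^{2/p}$ and using $\wwbar'_j/\wwbar_j = \rho$ gives $\rho^{2/p} \leq \rho^{-q} = \rho^{2/p - 1}$, i.e.\ $\rho \leq 1$, which is exactly the monotonicity claim.

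The step I would expect to require the most care is the sign bookkeeping on the exponent $q$: everything hinges on the fact that for $p \leq 2$ the map $x \mapsto x^q$ is non-increasing, so that an upper bound $\wwbar'_i \leq \rho \wwbar_i$ becomes a \emph{lower} Loewner bound on $(\AA')^T (\WWbar')^{q} \AA'$, which after inversion becomes an upper bound on the quadratic form defining $(\wwbar'_j)^{2/p}$. This sign flips when $p > 2$ and the same argument would only give $\rho \geq 1$, which explains why the lemma cannot extend past $p = 2$. One small housekeeping item is to handle the corner $p = 2$ (where $q = 0$ and the inequality trivializes to $\rho^{2/p} \leq 1$, still giving $\rho \leq 1$) and to note that the argument uses only the defining identity of Lewis weights, so no auxiliary convex-optimization characterization is needed.
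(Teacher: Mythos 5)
Your proof is correct and follows essentially the same approach as the paper's: both compare the Gram matrices via the Loewner bound $(\AA')^T (\WWbar')^{1-2/p} \AA' \succeq r^{1-2/p} \AA^T \WWbar^{1-2/p} \AA$, invert, and plug the defining identity for Lewis weights into the row attaining the maximal ratio. The only cosmetic differences are that you derive $\rho \leq 1$ directly rather than by contradiction, and you fold the $p=2$ case into the same computation where the paper dispatches it separately via ordinary leverage-score monotonicity.
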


Unfortunately, this fails to hold for $p > 2$.  However, the weights still can only increase by a bounded amount:
\begin{lemma}
\label{lem:weakmonotone}
For all $p > 2$, if $\AA'$ is $\AA$ with any number of extra rows added,
\begin{equation*}
\AA'^T \WWbar'^{1 - 2/p} \AA' \succeq d^{2/p - 1} \AA^T \WWbar^{1 - 2/p} \AA
\end{equation*}
and in particular no row in $\AA$ has its weight raised by more than $d^{p/2-1}$.
\end{lemma}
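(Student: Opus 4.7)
The plan is to use the max-determinant characterization of Lewis weights from Section~\ref{sec:convex}. Let $\QQ := (\AA^T \WWbar^{1-2/p} \AA)^{-1}$ be the max-determinant element of $F_{\AA} := \{\MM \succeq 0 : \sum_{i} (\aa_i^T \MM \aa_i)^{p/2} \leq d\}$ (sum over the rows of $\AA$), and let $\QQ'$ play the analogous role for $\AA'$. Inverting, the target inequality $\AA'^T \WWbar'^{1-2/p} \AA' \succeq d^{2/p-1} \AA^T \WWbar^{1-2/p} \AA$ is equivalent to $\QQ' \preceq d^{1-2/p} \QQ$. By invariance under $\AA \mapsto \AA \QQ^{1/2}$, I would normalize WLOG so that $\QQ = I$; then $\|\aa_i\|^2 = \wwbar_i^{2/p}$ for every row $i$ of $\AA$, $\sum_i \wwbar_i = d$, and the goal reduces to $\QQ' \preceq d^{1-2/p} I$, i.e.\ $\lambda_{\max}(\QQ') \leq d^{1-2/p}$.

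Since adding rows only tightens the $p$-moment constraint, $F_{\AA'} \subseteq F_{\AA}$, so $\QQ' \in F_{\AA}$. This immediately yields $\det \QQ' \leq 1$ and, via H\"older with exponents $\tfrac{p}{p-2}$ and $\tfrac{p}{2}$ applied to $\trace{\QQ' \QQ^{-1}} = \sum_i \wwbar_i^{1-2/p} (\aa_i^T \QQ' \aa_i)$, the bound $\trace{\QQ'} \leq d$. To pass from these aggregate bounds to an operator-norm bound, I would take $\vv$ a unit top-eigenvector of $\QQ'$, so $\vv^T \QQ' \vv = \lambda_{\max}(\QQ')$ and $\vv^T \QQ'^{-1} \vv = 1/\lambda_{\max}(\QQ')$. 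The identity $1 = \vv^T \vv = \vv^T \QQ^{-1} \vv = \sum_i \wwbar_i^{1-2/p} (\aa_i^T \vv)^2$, combined with the Cauchy-Schwarz bound $(\aa_i^T \vv)^2 \leq (\aa_i^T \QQ' \aa_i)(\vv^T \QQ'^{-1} \vv) = (\wwbar_i')^{2/p}/\lambda_{\max}(\QQ')$, then yields
\begin{equation*}
\lambda_{\max}(\QQ') \leq \sum_{i \text{ in } \AA} \wwbar_i^{1-2/p} (\wwbar_i')^{2/p}.
\end{equation*}

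The main obstacle is sharpening this right-hand side: a direct application of H\"older using $\sum_i \wwbar_i = d$ and $\sum_{i \text{ in } \AA} \wwbar_i' \leq \sum_{i \text{ in } \AA'} \wwbar_i' = d$ only gives the crude bound $d$, whereas the lemma requires $d^{1-2/p}$. Closing this gap requires the full strength of $\QQ'$ being the max-determinant element of $F_{\AA'}$ (not merely feasible in $F_{\AA}$), via the stationarity identity $\QQ'^{-1} = \AA'^T \WWbar'^{1-2/p} \AA'$ and the saturation $\sum_{i \text{ in } \AA'} \wwbar_i' = d$. I expect the right approach is a variational argument: any direction $\vv$ in which $\QQ'$ is more than $d^{1-2/p}$-eccentric relative to $\QQ$ should admit a rank-one perturbation of $\QQ'$ within $F_{\AA'}$ strictly increasing $\det \QQ'$, contradicting optimality. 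Equivalently, one may attempt to verify the PSD identity $\QQ'^{-1} - d^{2/p-1} \QQ^{-1} \succeq 0$ directly by splitting the sum into rows of $\AA$ (with possibly negative coefficients $(\wwbar_i')^{1-2/p} - d^{2/p-1}\wwbar_i^{1-2/p}$) and rows in $\AA' \setminus \AA$ (with positive coefficients $(\wwbar_i')^{1-2/p}$), and using the Lewis fixed-point constraints for $\AA'$ to show the balance works in every test direction.

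Once the spectral inequality $\QQ' \preceq d^{1-2/p} \QQ$ is established, the ``in particular'' statement is immediate by taking $\vv = \aa_i$: we get $(\wwbar_i')^{2/p} = \aa_i^T \QQ' \aa_i \leq d^{1-2/p} \aa_i^T \QQ \aa_i = d^{1-2/p} \wwbar_i^{2/p}$, which rearranges to $\wwbar_i' \leq d^{p/2-1} \wwbar_i$.
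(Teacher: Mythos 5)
Your setup and the first step are fine — you correctly reduce to a top-eigenvalue bound on $\QQ' = \PP^{-1}$ after normalizing $\QQ = \II$, and the Cauchy--Schwarz estimate $(\aa_i^T \vv)^2 \le (\aa_i^T \QQ' \aa_i)(\vv^T \QQ'^{-1} \vv)$ (equivalently, the rank-one bound $\QQ' \succeq \lambda_{\max}(\QQ') \vv \vv^T$) is exactly the same spectral ingredient that the paper uses. But you then substitute this into the normalization identity $1 = \vv^T \vv = \sum_i \wwbar_i^{1-2/p} (\aa_i^T \vv)^2$, which only yields $\lambda_{\max}(\QQ') \le \sum_i \wwbar_i^{1-2/p}(\wwbar_i')^{2/p} \le d$, and as you honestly note, this is off by a factor of $d^{2/p}$. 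At that point the proof stops: neither of the two speculative routes you sketch (the rank-one perturbation argument, or directly verifying $\PP - d^{2/p-1}\II \succeq 0$ by splitting the sum) is carried out, and both are substantially harder than what is actually needed.

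The missing idea in the paper's proof is to substitute the same CS bound into a \emph{different} identity and then exploit the exponent $p/2 > 1$. Writing $\vv_i := \wwbar_i^{-1/p}\aa_i$ (unit vectors, since $\QQ = \II$), the CS bound rearranges to a lower bound on the new Lewis weights, $\wwbar_i' \ge \wwbar_i \lambda^{-p/2}(\uu^T\vv_i)^p$ where $\lambda := 1/\lambda_{\max}(\QQ') = \lambda_{\min}(\PP)$ and $\uu$ is your $\vv$. Raising to the power $1-2/p > 0$ and substituting into the quadratic form $\PP \succeq \sum_i (\wwbar_i')^{1-2/p}\aa_i\aa_i^T$ evaluated at $\uu$ gives the \emph{self-referential} inequality
\begin{equation*}
\lambda = \uu^T \PP \uu \;\ge\; \lambda^{1-p/2}\sum_i \wwbar_i (\uu^T\vv_i)^p.
\end{equation*}
Because $\sum_i \wwbar_i = d$ and $\sum_i \wwbar_i (\uu^T\vv_i)^2 = 1$, the weighted power-mean inequality (with $p/2 \ge 1$) gives $\sum_i \wwbar_i (\uu^T\vv_i)^p \ge d^{1-p/2}$, so $\lambda^{p/2} \ge d^{1-p/2}$, i.e.\ $\lambda \ge d^{2/p-1}$. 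The key structural difference from your attempt is that the bound is fed back into the eigenvalue equation $\uu^T \PP \uu = \lambda$ rather than into the normalization $\uu^T \QQ^{-1}\uu = 1$, and the nonlinearity $p/2 > 1$ is what converts the resulting $\lambda \ge \lambda^{1-p/2} d^{1-p/2}$ into the desired lower bound via the power-mean step. Your version substitutes into a constraint that is linear in $(\aa_i^T\uu)^2$ and hence can never see this gain; that is where the $d^{2/p}$ factor is lost.

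Your final paragraph (deducing $\wwbar_i' \le d^{p/2-1}\wwbar_i$ from the PSD inequality) is correct once the spectral bound is in hand.
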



\section{An Input-sparsity Time Algorithm for General $\ell_p$}
\label{sec:sparseconvex}

This section gives an input-sparsity time algorithm for general $\ell_p$.
In particular, we claim that

\begin{theorem}
\label{thm:sparseconvex}
There exists an $f(p)$ and a constant $C$ such that given a matrix $\AA$, $p \in [1, \infty)$,
and $\theta < 1$, there is an algorithm that computes $n^\theta$-approximate $\ell_p$ Lewis
weights for $\AA$ in time $O(\frac{p}{\theta} \nnz(\AA) + f(p) d^{p/2 (1 + \theta) + C})$.
\end{theorem}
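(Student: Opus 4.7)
The idea is to bootstrap Theorem~\ref{thm:polyslow}, which is polynomial but not input-sparsity, by only ever calling it on a small sparsifier of $\AA$. The outer loop makes $O(p/\theta)$ rounds; each round performs one input-sparsity-time approximate leverage-score computation on a reweighted version of $\AA$, uses it to build a sparsifier $\AA^{(t)}$ with $O(d^{p/2(1+\theta)+C})$ rows, and applies Theorem~\ref{thm:polyslow} to $\AA^{(t)}$ to refine the weights. This balances the budget: $O((p/\theta)\,\nnz(\AA))$ work on the full matrix across iterations, plus the polynomial term $f(p) d^{p/2(1+\theta)+C}$ paid only on sparsifiers.

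\textbf{Initialization.} Start with crude weights $\ww^{(0)}$ obtained from a single input-sparsity-time $\ell_2$ leverage-score computation on $\AA$. By the proof of Lemma~\ref{lem:initialization} (which, reading carefully, does not actually use $p<4$), these weights are within a multiplicative $n^{O(|p/2-1|)}$ factor of the true Lewis weights $\wwbar$, so the algorithm begins with at worst $\alpha_0 = n^{O(p)}$ multiplicative error.

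\textbf{Iteration.} In round $t$, use $\ww^{(t-1)}$ to sample a submatrix $\AA^{(t)}$ of $\AA$: keep row $i$ with probability proportional to $\ww^{(t-1)}_i$, oversampled by a factor of roughly $d^{\theta p/2}$ so that the total number of rows is the advertised $d^{p/2(1+\theta)+C}$. Apply Theorem~\ref{thm:polyslow} to $\AA^{(t)}$ to obtain nearly-exact Lewis weights on its rows, and translate these back to a weight on each row of $\AA$ via the (inverse) sampling probability, producing $\ww^{(t)}$. Showing that $\ww^{(t)}$ is an $\alpha_t$-approximate Lewis-weight vector of $\AA$ with $\alpha_t$ strictly smaller than $\alpha_{t-1}$ is the core inductive step.

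\textbf{Convergence and main obstacle.} Correctness follows from the properties in Section~\ref{sec:properties}: for $p\leq 2$, monotonicity (Lemma~\ref{lem:fullmonotone}) plus the constant-factor stability of Lemma~\ref{lem:fullstable} already force each round to contract the multiplicative error by a fixed factor, so $O(\log n/\theta)$ rounds suffice. For $p>2$ the situation is harder, and this is the main obstacle: weak monotonicity (Lemma~\ref{lem:weakmonotone}) can lose a factor $d^{p/2-1}$ when passing from $\AA^{(t)}$ back to $\AA$, and stability (Lemma~\ref{lem:sqrtstable}) is only $O(\sqrt{d})$. The delicate part is calibrating the oversampling factor $d^{\theta p/2}$ so that, in combination with these two lemmas, each round strictly shrinks $\alpha_t$, and then arguing that $O(p/\theta)$ rounds drive the $n^{O(p)}$ initial error down to $n^\theta$. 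The final sample size $d^{p/2(1+\theta)+C}$ is exactly $d^{p/2}$ (the natural Lewis sparsifier count, see Section~\ref{sec:ellp}) times $d^{\theta p/2}$ (the oversampling needed for contraction in the $p>2$ regime) times $d^C$ (the polynomial overhead of Theorem~\ref{thm:polyslow}).
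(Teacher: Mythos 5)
Your high-level plan — pay the polynomial cost of Theorem~\ref{thm:polyslow} only on small sparsifiers, and use Lemmas~\ref{lem:fullstable}, \ref{lem:sqrtstable}, \ref{lem:fullmonotone}, \ref{lem:weakmonotone} to control the quality of those sparsifiers — matches the paper. But the organization you propose has a genuine gap, and the mechanism for propagating weights is wrong. First, you cannot ``translate \ldots back to a weight on each row of $\AA$ via the (inverse) sampling probability'': the sample gives you no information about rows that were not sampled. The object that must be passed up is the inverse quadratic form $\QQ \approx (\AA^T \WW^{1-2/p}\AA)^{-1}$, and the weight of an arbitrary row $i$ of $\AA$ is then recovered as $(\aa_i^T \QQ \aa_i)^{p/2}$, estimated in input-sparsity time via Johnson--Lindenstrauss. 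This is the key device in \textsc{ApproxLewisForm} (Figure~\ref{fig:recurse}).

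Second, the iterative-refinement structure you propose does not control the intermediate sample size. If you begin with $\ww^{(0)}\approx_{n^{O(p)}}\wwbar$ from Lemma~\ref{lem:initialization} and sample proportionally to $\ww^{(0)}_i$ with oversampling $\beta$, the expected number of rows is $\beta\sum_i \ww^{(0)}_i$, which can be as large as $n^{O(p)}\beta d$; with any $\beta=\poly(d)$ this is nowhere near $d^{p/2(1+\theta)+C}$, so you cannot run Theorem~\ref{thm:polyslow} on the result, and the inductive step never gets off the ground. Your proposed fix — calibrating $\beta=d^{\theta p/2}$ so that each round ``strictly shrinks $\alpha_t$'' — is exactly the step that does not hold; with a fixed $\beta$ and $\alpha_{t-1}$ polynomially large, the resulting sample either is too large or fails matrix Chernoff. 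The paper avoids crude initial weights entirely by using a uniform-halving recursion: at each level, a uniform $n/2$-subsample $\widehat\AA$ is handed to the recursive call, which (inductively) returns a $2$-approximate $\widehat\QQ$; $\widehat\QQ$ is then used, together with Lemma~\ref{lem:weakmonotone} (to show $\uu_i^{p/2}$ dominates $\wwbar_i$ up to $d^{p/2-1}$) and Lemma~\ref{lem:sqrtstable} (to upgrade almost-Lewis weights on the nonuniform sample to near-exact ones), to build a sparsifier $\AA'$ of $\AA$ with $O(n^{\theta/2}d^{p/2+1}\log d)$ rows on which Theorem~\ref{thm:polyslow} is run. Crucially, the $2$-approximation quality is \emph{re-established} at every level rather than contracted gradually, so there is no $n^{O(p)}$ error to burn off and no dependence of the sample size on an approximation factor. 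The $p/\theta$ in the $\nnz$ term comes from the JL accuracy needed across the $O(\log n)$ levels, not from a number of refinement rounds.
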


Just as with the previous algorithms, this can be combined with sampling by Lewis
weights to obtain an actual approximation for the matrix.

The core of the algorithm is a fairly simple recursion of the same style as
those in \cite{CohenLMMPS14:arxiv}.  For simplicity, we will write it in a form
that assumes the extremely weak condition that $\log n = O(\poly(d))$, although
even this mild constraint is unnecessary.  The recursive reduction ensures
that the algorithm from Theorem~\ref{thm:polyslow} only ever needs to run on
roughly $d^{p/2+1}$-sized samples.  We describe it as passing up a 2-approximation
of the inverse quadratic form $(\AA^T \WW^{1-2/p} \AA)^{-1}$.
Its pseudocode is given in Figure~\ref{fig:recurse}.

\begin{figure}[ht]

\begin{algbox}

$ \QQ = \textsc{ApproxLewisForm}(\AA, p, \theta)$

\begin{enumerate}
\item
If $n \leq d$, obtain $\QQ$ for $\AA$ (by Theorem~\ref{thm:polyslow}) and return $\QQ$ immediately.
\item
Uniformly sample $n / 2$ rows of $\AA$, producing $\widehat \AA$.
\item
Let $\widehat \QQ = \textsc{ApproxLewisForm}(\widehat \AA, p, \theta)$
\item
Set $\uu_i$ to $n^{\theta / p}$-approximate values of $\aa_i^T \widehat \QQ \aa_i$, computed using
the Johnson-Lindenstrauss lemma.
\item
Nonuniformly sample rows of $\AA$, taking expected $\pp_i = \min(1, f(p) n^{\theta / 2} d^{p/2} \log d \uu_i^{p/2})$
copies of row $i$ (each scaled down by $\pp_i^{-1/p}$), producing $\AA'$
\item
Obtain approximate inverse quadratic form $\QQ$ with Theorem~\ref{thm:polyslow}.
\item
Return $\QQ$
\end{enumerate}
\end{algbox}

\caption{Recursive procedure for approximating Lewis quadratic form}
\label{fig:recurse}

\end{figure}

The guarantees of this algorithm can be stated as
\begin{lemma}
\label{lem:approxlewisform}
With high probability, $\textsc{ApproxLewisForm}(\AA, p, \theta)$
\begin{enumerate}
\item returns $\QQ$ that's a 2-approximation of the true inverse quadratic
form $(\AA^T \WW^{1-2/p} \AA)^{-1}$, and
\label{part:givesapprox}
\item obtained $\QQ$ by invoking Theorem~\ref{thm:polyslow} on a matrix whose
expected row count is
\begin{equation*}
O\left(f(p) n^{\theta /2} d^{p/2 + 1} \log d\right).
\end{equation*}
\label{part:boundedrows}
\end{enumerate}
\end{lemma}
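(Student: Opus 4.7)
The plan is to proceed by induction on the recursion depth, which is $O(\log n)$. The base case $n \leq d$ is immediate from Theorem~\ref{thm:polyslow} applied directly to $\AA$, so the work is in the inductive step. Assume the recursive call returned $\widehat{\QQ}$ satisfying $\widehat{\QQ} \approx_2 (\widehat{\AA}^T \widehat{\WW}^{1-2/p} \widehat{\AA})^{-1}$, where $\widehat{\WW}$ denotes the diagonal matrix of Lewis weights of the uniform half-sample $\widehat{\AA}$. The key claim to establish is that $\pp_i \geq C_s \wwbar_i \log d \, \epsilon^{-2}$ for all $i$, where $\wwbar_i$ are the true Lewis weights of $\AA$. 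Granted this, the $\ell_p$ Lewis weight sampling results of Section~\ref{sec:ellp} ensure that $\norm{\AA' \xx}_p \approx_{1+\epsilon} \norm{\AA \xx}_p$ for all $\xx$, so by the stability lemmas of Section~\ref{sec:properties} the Lewis inverse form of $\AA'$ is a constant-factor approximation of $(\AA^T \WWbar^{1-2/p} \AA)^{-1}$, and running Theorem~\ref{thm:polyslow} on $\AA'$ produces a $\QQ$ satisfying part~\ref{part:givesapprox}.

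To prove the sampling-probability claim, I would relate $\widehat{\QQ}$ to $(\AA^T \WWbar^{1-2/p} \AA)^{-1}$ via the monotonicity lemmas. For $p \leq 2$, Lemma~\ref{lem:fullmonotone} gives $\wwbar_i(\AA) \leq \wwbar_i(\widehat{\AA})$ for any row present in the subsample, which via Definition~\ref{def:lewis} translates to $\aa_i^T (\AA^T \WWbar^{1-2/p} \AA)^{-1} \aa_i \leq \aa_i^T (\widehat{\AA}^T \widehat{\WW}^{1-2/p} \widehat{\AA})^{-1} \aa_i$. For $p > 2$, inverting the PSD inequality in Lemma~\ref{lem:weakmonotone} yields the same kind of upper bound losing only a $d^{1-2/p}$ factor. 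Raising to the $p/2$ power and absorbing the $n^{\theta/p}$ Johnson-Lindenstrauss distortion (which becomes $n^{\theta/2}$ under the exponent), the factor $f(p) n^{\theta/2} d^{p/2} \log d$ built into $\pp_i$ covers all losses along with $C_s$ and the $\log d$ overhead required by the concentration bound.

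For part~\ref{part:boundedrows}, the expected row count is
\begin{equation*}
\E\left[\sum_{i \in \AA} \pp_i\right] \leq f(p) n^{\theta/2} d^{p/2} \log d \cdot \E\left[\sum_{i \in \AA} \left(\aa_i^T \widehat{\QQ} \aa_i\right)^{p/2}\right].
\end{equation*}
The inner sum splits into contributions from $i \in \widehat{\AA}$, which by Definition~\ref{def:lewis} and the $\approx_2$ guarantee are (up to constants) the Lewis weights of $\widehat{\AA}$ and therefore sum to $O(d)$, and from $i \in \AA \setminus \widehat{\AA}$. For the latter I plan an exchangeability argument: the uniform halving is symmetric in the rows of $\AA$, so for a given row the expected value of the pseudo-weight $(\aa_i^T \widehat{\QQ} \aa_i)^{p/2}$ when the row is dropped is (up to small losses from Lemma~\ref{lem:fullstable} or~\ref{lem:sqrtstable}) comparable to its value when the row is retained. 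This yields an overall expected sum of $O(d)$, and hence the claimed $O(f(p) n^{\theta/2} d^{p/2+1} \log d)$ row count.

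The main obstacle will be the exchangeability step. Pairing a dropped row $i$ with a kept row $j$ and bounding the change in $\widehat{\QQ}$ under the swap requires applying one of the stability lemmas, but since $\widehat{\QQ}$ is a recursively produced approximate Lewis form rather than the exact one, the stability argument has to be propagated carefully so that the $O(\log n)$ levels of recursion do not accumulate more than a constant-factor distortion in the relevant sum. The remaining ingredients---the concentration bound for Lewis-weight sampling from Section~\ref{sec:ellp} and the deterministic convex-optimization guarantee of Theorem~\ref{thm:polyslow}---are already in place, so once the row-count bound is established the two parts of the lemma follow.
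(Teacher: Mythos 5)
Your reduction of the sampling probabilities to $\pp_i \gtrsim \wwbar_i d \log d$ via the monotonicity lemmas matches the paper's opening step, and your high-level plan for Part~\ref{part:boundedrows} (exchangeability in the style of \cite{CohenLMMPS14:arxiv}) is the right idea. But Part~\ref{part:givesapprox} has a genuine gap, and the details of Part~\ref{part:boundedrows} do not quite close.

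For Part~\ref{part:givesapprox}, you argue that $\pp_i \geq C_s \wwbar_i \log d \, \epsilon^{-2}$ gives $\norm{\AA'\xx}_p \approx_{1+\epsilon} \norm{\AA\xx}_p$, and then invoke ``the stability lemmas'' to conclude the Lewis quadratic form of $\AA'$ is a constant-factor approximation of $\AA^T \WWbar^{1-2/p}\AA$. But Lemmas~\ref{lem:fullstable} and~\ref{lem:sqrtstable} say nothing of the kind: they control the Lewis weights under \emph{multiplicative perturbations of the rows of a fixed matrix}, or equivalently under replacing exact Lewis weights by $\alpha$-almost Lewis weights. They do not let you compare the Lewis forms of two matrices that merely have close $\ell_p$ norms, and in fact that implication would cost at least a $\sqrt{d}$ factor (the unit $\ell_p$ balls are $(1+\epsilon)$-close, but the extremal ellipsoids of nearby bodies can differ by $\sqrt{d}$). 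The paper avoids this by staying entirely at the $\ell_2$ level: it sets $\ww'_j = \wwbar_{i_j}/\pp_{i_j}$ for the sampled rows, observes that these satisfy $\aa'^T_j(\AA^T\WWbar^{1-2/p}\AA)^{-1}\aa'_j = {\ww'}_j^{2/p}$ exactly, applies the ordinary $\ell_2$ matrix Chernoff bound (valid because the sampling probabilities dominate $f(p) d\log d\,\wwbar_i$) to get ${\AA'}^T{\WW'}^{1-2/p}\AA' \approx_{1+O(1/\sqrt{f(p)d})} \AA^T\WWbar^{1-2/p}\AA$, concludes that $\ww'$ are $(1+O(1/\sqrt{f(p)d}))$-almost Lewis weights for $\AA'$, and only then invokes Lemma~\ref{lem:sqrtstable}. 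No $\ell_p$ concentration theorem from Section~\ref{sec:ellp} is needed, and the argument lands exactly within the scope of the stability lemma.

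For Part~\ref{part:boundedrows}, your plan of pairing a dropped row with a kept one and bounding the change in $\widehat{\QQ}$ under a swap is not the argument in the paper, and your worry about distortion accumulating over $O(\log n)$ recursion levels is a sign that the framing is off. The paper defines $\vv_i = \aa_i^T \widehat{\QQ}_i \aa_i$, where $\widehat{\QQ}_i$ is the exact Lewis inverse form of $\widehat{\AA}$ with row $i$ appended, so that $\vv_i^{p/2}$ is precisely that row's Lewis weight in $\widehat{\AA}_i$. The exchangeability step is then purely combinatorial and lives at a single recursion level: picking a uniform $(n/2)$-subset and a random excluded row is the same as picking a uniform $(n/2+1)$-subset and a random row of it, so since Lewis weights of any matrix sum to at most $d$, the expected $\sum_i \vv_i^{p/2}$ is $O(d)$. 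What remains is to relate $\uu_i^{p/2}$ to $\vv_i^{p/2}$, which needs a case split you did not account for: if $\vv_i^{p/2} < 1/\sqrt{d}$, Lemma~\ref{lem:sqrtstable} shows the Lewis form of $\widehat{\AA}_i$ and $\widehat{\AA}$ differ by $O(1)$, so $\uu_i^{p/2} = O(\vv_i^{p/2})$; if $\vv_i^{p/2} \geq 1/d^{p/2}$, then $\pp_i \leq 1 \leq d^{p/2}\vv_i^{p/2}$. The $d^{p/2}$ from the large-$\vv_i$ case is exactly where the extra $d^{p/2}$ in the row count comes from; without the split you cannot handle the $\min(1,\cdot)$ truncation.
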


We begin by proving guarantees on the quadratic form proved.

\begin{proof}[Proof of Lemma~\ref{lem:approxlewisform}~Part~\ref{part:givesapprox}]
By Lemma~\ref{lem:weakmonotone}, the quadratic form of $\AA'$ is not bigger in any
direction by more than $d^{1-2/p}$ than that of $\AA$. Thus, the sampling probability
(when less than 1) of row $i$ is at least $f(p) d \log d \wwbar_i$.

Now, define a set of weights $\ww'$ in the sample as appropriate rescalings of the Lewis weights
of the original rows they are derived from (i.e. a row deriving from original row $i$ is assigned
weight $\frac{\wwbar_i}{\pp_i}$).  These rows then satisfy
\begin{equation*}
{\aa'}_j^T \left( \AA^T \WWbar^{1 - 2/p} \AA \right )^{-1} {\aa'}_j = {\ww'}_j^{2/p}.
\end{equation*}
Furthermore, by the ordinary ($\ell_2$) matrix Chernoff bound, with high probability
\begin{equation*}
{\AA'}^T {\WW'}^{1 - 2/p} {\AA'} \approx_{1+C / \sqrt{f(p) d}} \AA^T \WWbar^{1-2/p} \AA.
\end{equation*}
This implies that
\begin{equation*}
{\aa'_j}^T \left( {\AA'}^T {\WW'}^{1 - 2/p} {\AA'} \right )^{-1} \widehat \aa_j \approx_{1+C / \sqrt{f(p) d}} {\ww'}_j^{2/p}.
\end{equation*}
That is, the $\ww'$ are $(1+C / \sqrt{f(p) d})$-almost Lewis weights for $\AA'$.  Then by Lemma~\ref{lem:sqrtstable},
the true Lewis weights for $\AA'$, $\wwbar'$ are within a constant factor (of our choosing) of the $\ww'$
(with a correct setting of $f(p)$, which should be proportional to the \emph{square} of the $f(p)$
in Lemma~\ref{lem:sqrtstable}).  We finally see that ${\AA'}^T {\WWbar'}^{1-2/p} \AA'$ is within
a constant factor of ${\AA'}^T {\WW'}^{1-2/p} {\AA'}$ and thus within a constant factor
(which we can set to 2) of $\AA^T \WWbar^{1-2/p} \AA$.
\end{proof}

Next, we prove the bounds on the expected number of rows in the
intermediate matrix that we produce the final quadratic form.
This argument is similar to the proof of
Theorem 1 of~\cite{CohenLMMPS14:arxiv}.

\begin{proof}[Proof of Lemma~\ref{lem:approxlewisform}~Part~\ref{part:boundedrows}]

First, we consider a quantity similar to $\uu_i$: $\vv_i = \aa_i^T \widehat \QQ_i \aa_i$, where $\widehat \QQ_i$ is the
analogous Lewis quadratic form, but defined for $\widehat \AA_i$, $\widehat \AA$ with the row $\aa_i$ concatenated (if it was not already included).

Then $\vv_i^{p/2}$ is just the Lewis weight of $\aa_i$ in $\widehat \AA_i$.  Then analogous to \cite{CohenLMMPS14:arxiv}, we may argue
that the sum of the $\vv_i^{p/2}$ is the sum of the Lewis weights of the rows of $\widehat \AA$, plus $n/2$ times the average weight
of unincluded rows.  But the random process picking a subset of $n/2$ rows and then a random unincluded row is the same as picking
a random subset of $n/2+1$ rows and then a random ``special'' row from that subset.  We thus want to consider picking a random
subset of $n/2+1$ rows and then taking the Lewis weight of a random row within that subset.  But since the Lewis weights
for any matrix sum to at most $d$, the expected value of a random $\vv_i^{p/2}$ is at most $\frac{2d}{n+2}$, and the expected
sum of all the $\vv_i^{p/2}$ (including the rows included in $\widehat \AA$) is $\frac{2n+2}{n+2} d \leq 2d$.

So far, the argument has proceeded identically to \cite{CohenLMMPS14:arxiv}.  Now, however, we have to deal with the fact
that our weights are not defined based on $\vv_i^{p/2}$, but rather $\uu_i^{p/2}$.  The argument now is that
the Lewis weights for $\widehat \AA_i$, restricted to the $\widehat \AA$, are in fact $\frac{1}{1-\vv_i^{p/2}}$-almost
Lewis weights for $\widehat \AA$ (since the quadratic form can only shift by that factor).  Then if $\vv_i^{p/2}$ is smaller
than $\frac{1}{d^{p/2}}$ (in fact, even just for $\frac{1}{\sqrt{d}}$), Lemma~\ref{lem:sqrtstable} implies that
the true Lewis weights and true Lewis quadratic form are within $O(1)$ of what they are for $\widehat \AA$, and thus
that $\uu_i^{p/2}$ is at most a constant multiple of $\vv_i^{p/2}$, and $n^{\theta / 2} d^{p/2} \log d \uu_i^{p/2}$
is at most $O(n^{\theta / 2} d^{p/2} \log d) \vv_i^{p/2}$.  On the other hand, if $\vv_i^{p/2}$ is larger than
$\frac{1}{d^{p/2}}$, then 1 (which is also an upper bound on the sampling probability) is only
$d^{p/2}$ times $\vv_i^{p/2}$.  Thus, the expected sum of the actual sampling probabilities is at most
$O(n^{\theta / 2} d^{p/2} \log d)$ times larger than the expected sum of $\vv_i^{p/2}$, and is therefore
$O(n^{\theta /2} d^{p/2 + 1} \log d)$, up to a dependence on $p$.
\end{proof}

This, combined with a final Johnson-Lindenstrauss stage, satisfies the requirements of Theorem~\ref{thm:sparseconvex}.

It is worth noting that for $p < 2$, we may modify this analysis, replacing the use of Lemma~\ref{lem:weakmonotone} with
Lemma~\ref{lem:fullmonotone}; for $p < 4$, we may replace Lemma~\ref{lem:sqrtstable} with
Lemma~\ref{lem:fullstable}.  Using these, we can, for instance, for $p < 2$, run the algorithm with no $d^{p/2}$ factor
multiplying the weights, so that the samples $\widehat \AA$ are size $O(d \log d)$.  We may further use the iterative scheme
from Lemma~\ref{lem:computeLewis} (here, you can just compute leverage scores directly and exactly) instead of
Theorem~\ref{thm:polyslow}.  This can give an alternative input-sparsity time algorithm with a polynomial dependence $d^{\omega}$
up to polylog factors and tradeoffs with $\theta$, which can have better tradeoffs between the input-sparsity term
and polynomial term than directly running \ref{lem:computeLewis} with a fast leverage score approximation algorithm.


\section{$\ell_p$ Matrix Concentration Bounds}
\label{sec:ellp}

In this section we sketch proofs of the various matrix concentration
bounds that we utilize in our algorithms.

We get the core results from several earlier papers on $\ell_p$ approximation bounds.
To understand how these papers match our use of them, one should realize that they are
written in the language of approximate isomorphisms between Banach spaces.  Effectively,
they attempt to embed the column span of our matrix $\AA$ as a subspace of $L_p$.

\begin{theorem}
\label{thm:ellpsample}
Given an $n$ by $d$ matrix $\AA$
with $\ell_p$ Lewis weights $\wwbar$, for any set of sampling values $\pp_i$,
$\sum_i \pp_i = N$,
\begin{equation*}
\pp_i \geq f(d, N, p, \epsilon, \delta) \wwbar_i,
\end{equation*}
if we generate a matrix $\SS$ with $N$ rows, each chosen
independently as the $i$th standard basis vector,
times $\frac{1}{\pp_i^{1/p}}$, with probability $\frac{\pp_i}{N}$,
then with probability at least $1-\delta$ we have
\begin{equation*}
\norm{\SS \AA \xx}_1
\approx_{1 + \epsilon} \norm{\AA \xx}_1
\end{equation*}
for all vectors $\xx \in \Re^{d}$.

Valid asymptotic bounds on $d f(d, N, p, \epsilon, \delta)$ (i.e. the resulting row count,
since Lewis weights sum to $d$; here, the resulting row count itself
is plugged in as $N$) are given in the table below:

\begin{tabular}{| l | l | l |}
\hline
$p$ & $\delta$ & Sufficient row count \\ \hline
$p = 1$ & $\frac{1}{d^C}$ & $d \log(d / \epsilon) / \epsilon^2$ \\ \hline
$p = 1$ & $\frac{1}{C}$ & $d \log d / \epsilon^2$ \\ \hline
$1 < p < 2$ & $\frac{1}{C}$ & $d \log(d / \epsilon) \log(\log d / \epsilon)^2 / \epsilon^2$ \\ \hline
$p > 2$ & $\frac{1}{d^C}$ & $d^{p / 2} \log d \log(1 / \epsilon) / \epsilon^5$ \\ \hline
\end{tabular}
\end{theorem}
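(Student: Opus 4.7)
The plan is to reduce Theorem~\ref{thm:ellpsample} to known ``uniform sampling'' results for subspaces of $L_p$, which are written in the Banach-space literature in the form: if the $\ell_p$ Lewis weights of $\AA$ are uniformly bounded by $1/n$, then keeping a uniformly random subset of the rows (appropriately rescaled) preserves $\norm{\AA\xx}_p$ up to $1+\epsilon$ with high probability. The papers we cite (\cite{Talagrand90} for $p=1$, \cite{BourgainLM89} for $1<p<2$, and \cite{TalagrandLp} for $p>2$) all establish such statements; the task is to convert them into the non-uniform form of the theorem, and to track the resulting row counts in each range of $p$.

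First I would set up the row-splitting reduction. Given $\AA$ with Lewis weights $\wwbar$ and sampling values $\pp_i \geq f(d,N,p,\epsilon,\delta)\wwbar_i$, I build an enlarged matrix $\AA^*$ by replacing each row $\aa_i$ with $k_i$ copies of $k_i^{-1/p}\aa_i$, where $k_i$ is chosen proportional to $\pp_i$. For any $\xx$ this preserves $\norm{\AA^*\xx}_p = \norm{\AA\xx}_p$, and a direct check from Definition~\ref{def:lewis} shows that the Lewis weights of $\AA^*$ are $\wwbar^*_{(i,\ell)} = \wwbar_i / k_i$, so the choice of $k_i$ makes all of them a uniform $O(1/N^*)$ where $N^*$ is the number of rows of $\AA^*$. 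With this setup, drawing $N$ rows of $\AA^*$ uniformly (with the standard $1/\pp^{1/p}$ rescaling) has exactly the same distribution as the non-uniform procedure described in the theorem statement, so it suffices to prove the uniform version on $\AA^*$.

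Next I would invoke the cited ``halving'' results on $\AA^*$ and iterate them. Each invocation replaces $n^*$ uniformly-weighted rows by $n^*/2$ rescaled rows, and crucially the resulting matrix still has uniformly small Lewis weights (this uses the monotonicity / stability properties of Section~\ref{sec:properties}, specifically Lemma~\ref{lem:fullmonotone} for $p\leq 2$ and the weaker Lemma~\ref{lem:weakmonotone} together with Lemma~\ref{lem:sqrtstable} for $p>2$). Iterating $O(\log(N^*/N))$ levels and union-bounding the failure probabilities gives the desired $\norm{\SS\AA\xx}_p \approx_{1+\epsilon} \norm{\AA\xx}_p$ bound, and the row count stated in the table is exactly what each cited result produces at the final level: $d\log(d/\epsilon)/\epsilon^2$ for $p=1$ with $\delta = 1/d^C$ (using the sharper tails from Section~\ref{sec:concentration}); $d\log d/\epsilon^2$ for $p=1$ with constant $\delta$ from \cite{Talagrand90}; the extra $\log\log$ factor for $1<p<2$ from the Bourgain--Lindenstrauss--Milman bound; and $d^{p/2}\log d\log(1/\epsilon)/\epsilon^5$ for $p>2$ from \cite{TalagrandLp}.

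The main obstacle is making the splitting reduction rigorous when $k_i$ is not an integer and ensuring that at every halving level the Lewis weights of the intermediate matrices really are uniform enough to feed into the next iteration. The cleanest route is to allow fractional ``multiplicities'' (equivalently, to work in a weighted $\ell_p$ space throughout) and to round only at the very end, using the stability lemmas of Section~\ref{sec:properties} to absorb the resulting $O(1)$ distortion of the weights into the constant $f$. A secondary subtlety is that the published statements of some of the halving lemmas are for constant success probability, and obtaining $\delta = 1/d^C$ in the $p=1$ case requires the sharper tail bound proved in Section~\ref{sec:concentration} (or, in the $p>2$ case, directly iterating the high-probability Gaussian-process arguments of \cite{TalagrandLp}). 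The full bookkeeping of the reduction — including the union bound over halving levels and the rounding of $k_i$ — is the content of Appendix~\ref{sec:reduct}.
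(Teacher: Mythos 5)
Your route --- split rows to uniformize Lewis weights, then iterate halving lemmas and union bound over levels --- is \emph{not} what the paper does, and the claim that ``the full bookkeeping \ldots is the content of Appendix~\ref{sec:reduct}'' is incorrect. The paper's Lemma~\ref{lem:momentreduct} is a one-shot \emph{symmetrization} argument: since $\norm{\SS\AA\xx}_p^p - 1$ is mean zero, one compares against an independent copy $\SS'$, introduces Rademacher signs to swap row choices between the two copies, and thereby lands directly on the random-sign process that the Banach-space results bound; a fixed auxiliary matrix $\AA'$ with $O(d^2)$ rows (Lemma~\ref{lem:weakbound}) is appended so that the intermediate matrices have uniformly small Lewis weights. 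There is no iteration and no union bound over halving levels. Your plan is essentially Talagrand's original proof strategy, which the paper explicitly describes --- and declines to use --- in the closing remark of Section~\ref{sec:concentration}; so it is in principle viable, but it is a genuinely different argument that would need to be carried out from scratch rather than outsourced to the appendix.

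Two concrete gaps in the sketch. First, iterated halving produces a uniformly random \emph{subset} of the split rows, while the theorem requires $N$ independent draws with replacement; these distributions differ, and you would need an explicit coupling or Poissonization step to pass between them, which the paper's symmetrization avoids by working with the with-replacement process directly. Second, you invoke Lemma~\ref{lem:fullmonotone} to argue the intermediate matrices keep uniformly small Lewis weights, but monotonicity runs the \emph{wrong} way for this: removing rows can only \emph{increase} Lewis weights. What you actually need is a high-probability upper bound on that increase at each level, which requires a matrix Chernoff bound on the quadratic form plus a stability lemma (Lemma~\ref{lem:fullstable} or Lemma~\ref{lem:sqrtstable}), not monotonicity. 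Finally, your citations are crossed: the $1<p<2$ bound with the $\log\log$ factors is from \cite{TalagrandLp}, and the $p>2$ bound is from \cite{BourgainLM89} (indeed that one is already proved in the sampled form, so no reduction is needed there).
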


The last entry was proved directly in~\cite{BourgainLM89}.  For the others (the $p < 2$ cases),
these statements (that this sampling procedure is valid) are not proved directly.  Rather,
they look at the case where the Lewis weights are \emph{uniformly} small,
examining a random process choosing $\sigma_i$ independent Rademacher variables
(i.e. independently $\pm 1$ with probability $\frac{1}{2}$ each) to give
\begin{equation*}
\max_{\norm{\AA \xx}_p = 1} \left | \sum_i \sigma_i |\aa_i^T \xx_i|^p \right |.
\end{equation*}
It is worth noting that this is equivalent to taking the error of of sampling
about half the rows by unbiased coin flips.

We may, specifically, get the following statements from the existing results.

The following was shown in~\cite{Talagrand90}
\begin{lemma}[\cite{Talagrand90}]
\label{lem:talagrandL1}
There exists a constant $C$ such that if every row of $\AA$ has
Lewis weight at most $C \frac{\epsilon^2}{\log d}$, we have
\begin{equation*}
\expec{\sigma}{\max_{\norm{\AA \xx}_1 = 1}
\left | \sum_i \sigma_i \left|\aa_i^T \xx \right| \right|} \leq \epsilon.
\end{equation*}
\end{lemma}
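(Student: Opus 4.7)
The plan is to view the expression as the expected supremum of a Rademacher process whose increments are controlled by the Lewis weights, and then bound this supremum by a Dudley/chaining argument that benefits from the Lewis change of density. First I would pass to a normalized form via Lewis: by a linear change of variables on $\xx$, assume $\AA^T \WWbar^{-1} \AA = \II$, and set $\bb_i = \wwbar_i^{-1/2} \aa_i$ so that $\BB^T \BB = \II$ and $\norm{\bb_i}_2^2 = \wwbar_i$. The quantity to bound is then the expected supremum of the Rademacher process $F(\xx) = \sum_i \sigma_i \wwbar_i^{1/2}|\bb_i^T \xx|$ over the index set $K = \{\xx : \sum_i \wwbar_i^{1/2}|\bb_i^T \xx| \leq 1\}$. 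A one-line H\"older bound shows $K \subseteq B_2^d$: from $\sum_i(\bb_i^T \xx)^2 \leq (\max_i \wwbar_i^{-1/2}|\bb_i^T \xx|)\sum_i \wwbar_i^{1/2}|\bb_i^T \xx|$ together with $|\bb_i^T \xx| \leq \norm{\bb_i}_2 \norm{\xx}_2 = \wwbar_i^{1/2}\norm{\xx}_2$, one gets $\norm{\xx}_2^2 \leq \norm{\xx}_2$.

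Next I would verify that $F$ has sub-Gaussian increments in the Euclidean metric. Hoeffding's inequality for Rademacher sums together with $\bigl||a|-|b|\bigr| \leq |a-b|$ gives
\[
\bigl\|F(\xx)-F(\xx')\bigr\|_{\psi_2}^2 \;\lesssim\; \sum_i \wwbar_i\bigl(\bb_i^T(\xx-\xx')\bigr)^2 \;\leq\; \max_i \wwbar_i \cdot \norm{\BB(\xx-\xx')}_2^2 \;=\; \max_i \wwbar_i \cdot \norm{\xx-\xx'}_2^2.
\]
Under the hypothesis $\max_i \wwbar_i \leq C\epsilon^2/\log d$ the sub-Gaussian parameter is thus $O(\epsilon/\sqrt{\log d})$ times the ambient $\ell_2$ distance, and Dudley's entropy bound yields
\[
\expec{\sigma}{\sup_{\xx \in K} F(\xx)} \;\lesssim\; \frac{\epsilon}{\sqrt{\log d}} \int_0^1 \sqrt{\log N(K,\norm{\cdot}_2,t)}\, dt.
\]

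The main obstacle, and the step where the specific Lewis structure of $\AA$ must genuinely enter, is to show that the entropy integral is $O(\sqrt{\log d})$. A naive estimate using only $K \subseteq B_2^d$ yields $O(\sqrt{d})$, which is far too weak for $n, d$ large. The sharpening uses that $K$ is the unit ball of the norm $\xx \mapsto \norm{\AA\xx}_1$, so its polar is (in the normalized coordinates) the symmetric convex hull $\mathrm{conv}\{\pm \aa_i\}$, whose generators have Euclidean norms equal to $\norm{\aa_i}_2 = \wwbar_i \leq C\epsilon^2/\log d$ by hypothesis. Maurey's empirical-method bound on covering numbers of such hulls, transferred to $K$ itself through entropy duality in the style of Bourgain--Pajor--Tomczak-Jaegermann, yields $\log N(K,\norm{\cdot}_2,t) \lesssim \log d \cdot \log(1/t)$. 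Taking square roots and integrating on $[0,1]$ gives $O(\sqrt{\log d})$, so the Dudley bound collapses to the claimed $O(\epsilon)$; absorbing the universal constant into $C$ completes the proof. A finer generic-chaining argument ($\gamma_2$-functional) could avoid the explicit entropy-duality detour, but the essential input remains the combination of smallness of the Lewis weights with the Maurey-type estimate on $K^\circ$.
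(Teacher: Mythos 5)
There is a genuine gap at the step where the specific structure of $K$ enters, namely your identification of the polar of $K = \{\xx : \norm{\AA\xx}_1 \leq 1\}$ and the resulting covering estimate. The polar of $K$ is \emph{not} $\mathrm{conv}\{\pm\aa_i\}$; it is the zonotope $\AA^T B_\infty^n = \sum_i [-\aa_i, \aa_i]$. (The symmetric convex hull $\mathrm{conv}\{\pm\aa_i\}$ is instead the polar of the much smaller set $\{\xx : \norm{\AA\xx}_\infty \leq 1\}$.) A one-dimensional sanity check already exposes this: with $n=2$ and $\aa_1 = \aa_2 = 1$, one has $K = [-1/2, 1/2]$ and hence $K^\circ = [-2,2]$, while $\mathrm{conv}\{\pm\aa_i\} = [-1,1]$. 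Because of this, Maurey's empirical-method bound on covering numbers of convex hulls of a few points with small Euclidean norms does not apply to $K^\circ$, and the stated estimate $\log N(K,\norm{\cdot}_2,t) \lesssim \log d \cdot \log(1/t)$ does not follow; it is also suspect on dimensional grounds, since covering numbers of a full-dimensional body necessarily grow like $(1/t)^d$ at small $t$. Even repaired, Maurey's bound has the form $(R/t)^2 \log n$, whose square-root integrates to $R\sqrt{\log n} \cdot \log(1/\delta)$ and requires truncation plus duality in a way that does not collapse to the single $\sqrt{\log d}$ you want.

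Your reductions before this point are correct and match part of what is used: the normalization $\AA^T \WWbar^{-1}\AA = \II$, the inclusion $K \subseteq B_2^d$, and the sub-Gaussian increment bound $\smallnorm{F(\xx)-F(\xx')}_{\psi_2}^2 \lesssim (\max_i \wwbar_i)\norm{\xx-\xx'}_2^2$ are all fine. Where the paper diverges is precisely the step you find hard: rather than a Dudley/chaining estimate, the cited Talagrand argument (and the paper's own elementary proof in Section~\ref{sec:concentration}) first removes the absolute values via the comparison/contraction lemma for Rademacher processes, then bounds the linear process pointwise by $\norm{\PPi^T\ssigma}_\infty$ for the $\WWbar^{-1}$-orthogonal projection $\PPi$, and finally applies Khintchine's inequality coordinatewise, using the Lewis-weight identity $\aa_i^T(\AA^T\WWbar^{-1}\AA)^{-1}\aa_i = \wwbar_i^2$ to show each coordinate has variance proxy $O(U)$. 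That sidesteps entropy estimates entirely and is why the argument closes cleanly at $O(\sqrt{U \log n})$. To salvage your route you would need a substantially more careful chaining argument (as in Talagrand's original), not the Maurey-plus-duality shortcut as written.
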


\begin{proof}
This is implicit in \cite{Talagrand90}.  First, the proof of Proposition 1 in that paper includes a proof that this quantity is dominated by a constant times
\begin{equation*}
\expec{g}{\max_{\norm{\AA \xx}_1 = 1}
\left | \sum_i g_i ( \aa_i^T \xx ) \right |},
\end{equation*}
where the $g_i$ are independent standard Gaussian variables.  Note that this differs in two ways: the Rademacher variables are replaced with Gaussians, and
the inner absolute values have been removed (so that it is now just the max of a linear function). 
The first step is done via the comparison lemma for Radamacher processes,
which we state formally at the start of Section~\ref{sec:concentration}.
The second step is by the contraction principle, which in turn relies on the
convexity of the quantity being bounded.

This latter quantity is then bounded, under this assumption of bounded Lewis weights,
in the proof of Proposition 2.
The probability measure $\nu$, as defined on the bottom of page 366
corresponds to the probability distribution of rows proportional to
their Lewis weights.
It is then split into atoms with small Lewis weights, and shows that randomly
sampling a subset of these atoms gives the bound above.
Our setup in Lemma~\ref{lem:talagrandL1} is equivalent to this
situation after splitting.
This means our result follows from the same proof as on page 367.
This leads to a result as stated in Proposition $2$ with $n = d$,
$K(X) = O(\sqrt{\log{d}})$
and $C \frac{\epsilon^2}{\log d}$.
\end{proof}

The following was shown in~\cite{TalagrandLp}
\begin{lemma}[\cite{TalagrandLp}]
For any $p < 2$, there exists a constant $C$ such that if every row of $\AA$ has
Lewis weight at most
$C \frac{\epsilon^2}{\log (n) (\log (\log n / \epsilon))^2}$, we have
\begin{equation*}
\expec{\sigma}{\max_{\norm{\AA \xx}_p = 1}
\left | \sum_i \sigma_i \left|\aa_i^T \xx \right|^p \right|} \leq \epsilon.
\end{equation*}
\end{lemma}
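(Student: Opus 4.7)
The plan is to mirror the three-step structure of the proof of Lemma~\ref{lem:talagrandL1}: (a) replace the Rademacher variables $\sigma_i$ with Gaussians $g_i$ via the comparison lemma for Rademacher processes; (b) strip off the $|t|^p$ nonlinearity to reduce to a linear Gaussian supremum; and (c) bound that supremum by transferring to $\BB = \WWbar^{1/2 - 1/p}\AA$, whose leverage scores are exactly the Lewis weights $\wwbar_i$, and applying Gaussian chaining.

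Step (a) carries over essentially verbatim from the $p=1$ case: the Rademacher/Gaussian comparison principle gives
\[
\expec{\sigma}{\sup_{\norm{\AA\xx}_p = 1} \left| \sum_i \sigma_i |\aa_i^T \xx|^p \right|}
\leq C \cdot \expec{g}{\sup_{\norm{\AA\xx}_p = 1} \left| \sum_i g_i |\aa_i^T \xx|^p \right|}
\]
with $g_i$ i.i.d.\ standard normal; there is no dependence on $p$ in this step.

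Step (b) is the substantive new difficulty relative to $p=1$. For $1 < p < 2$ the map $t \mapsto |t|^p$ is not globally Lipschitz, since its derivative behaves like $|t|^{p-1}$, so the standard contraction principle cannot be applied in one shot. Following the approach of \cite{TalagrandLp}, I would partition the coordinates into $O(\log(n/\epsilon))$ dyadic magnitude levels of the form $\{ i : |\aa_i^T \xx| \in [2^{-k-1}, 2^{-k}]\}$ and apply contraction on each level using the locally-valid Lipschitz constant $p \cdot 2^{-k(p-1)}$. Summing across levels via a chaining-style argument costs two extra $\log(\log(n/\epsilon))$ factors: one from controlling the $\gamma_2$ increments between consecutive levels, and one from balancing the weights in the chaining sum. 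This is precisely the source of the $(\log\log(n/\epsilon))^2$ penalty in the hypothesis, and it is what distinguishes this bound from the cleaner $p=1$ version. After this step one is left with the linear Gaussian supremum $\expec{g}{\sup_{\norm{\AA\xx}_p=1} |\sum_i g_i \aa_i^T \xx|}$.

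Step (c) is standard. Substituting $\aa_i = \wwbar_i^{1/p - 1/2}\bb_i$ rewrites this as a Gaussian supremum over a subset of an $\ell_2$-ball of $\BB$; since the leverage scores of $\BB$ are the $\wwbar_i$ and are uniformly small by hypothesis, Dudley's entropy integral (or equivalently Talagrand's $\gamma_2$ functional applied to the column space of $\BB$) bounds the supremum by $\epsilon$, as required. The main obstacle is Step (b): calibrating the multi-scale contraction tightly enough to end up with only a $(\log\log(n/\epsilon))^2$ loss, and not a larger polylogarithmic factor, is the technical heart of \cite{TalagrandLp}; Steps (a) and (c) are then routine applications of the Gaussian process toolbox.
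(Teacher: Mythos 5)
The paper's proof of this lemma is not a reconstruction of Talagrand's argument at all; it is a citation plus a translation of notation. It points to Proposition~2.3 of \cite{TalagrandLp}, identifies Talagrand's $\lambda_i$ with $\wwbar_i/d$ and Talagrand's $n, M$ with our $d, n$, and then makes the key observation that Talagrand's ratio $n/M$ enters the proof only as a uniform upper bound on the Lewis weights, so it may be replaced throughout by a general bound $U$. That substitution is what turns Talagrand's uniform-sampling statement into the Lewis-weight-bound statement we need. Your proposal never addresses this reduction, which is essentially the entire content of the paper's proof; instead you attempt to sketch the internals of Proposition~2.3 itself, something the paper deliberately declines to do (it explicitly calls those internals ``deep results from Gaussian processes'' and flags an elementary proof as open).

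Beyond being a different kind of argument, your Step~(b) rests on a claim that does not hold. On the constraint set $\norm{\AA\xx}_p = 1$ with $p \geq 1$, each coordinate satisfies $|\aa_i^T\xx| \leq \norm{\AA\xx}_p = 1$, so $t \mapsto |t|^p$ is $p$-Lipschitz on the whole relevant domain; there is no failure of the contraction principle for that reason, and the dyadic decomposition of magnitudes you describe has no obvious motivation. The genuine source of the $(\log\log)^2$ loss in \cite{TalagrandLp} is the majorizing-measure / generic-chaining bookkeeping needed to control the $\ell_p$-ball geometry, not a local-Lipschitz-constant-per-scale accounting. As written, your sketch would not reconstruct Talagrand's bound and should not be presented as an outline of it; the honest move, which the paper takes, is to cite Proposition~2.3 and carry out the parameter translation carefully.
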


\begin{proof}
This was proven in Proposition 2.3 in~\cite{TalagrandLp}.
It gives a bound of the form
\begin{equation*}
\Lambda_{F} \leq K \sqrt{ \frac{n}{M} \log{M} } \left( \log\log{M} + \log \left( \frac{M}{n} \right) \right).
\end{equation*}
where $\Lambda{F}$ was defined in Proposition 2.2 as
\begin{equation*}
\Lambda_{F} 
= \expct{\sigma} {\max_{\xx \in F_1, \norm{\xx}_p \leq 1}
\left | \sum_i \lambda_i \sigma_i \left| \xx(i) \right|^p \right|}.
\end{equation*}
Here $\lambda_i$ is $\wwbar_i / d$ where $\wwbar_i$ are the Lewis weights as defined in this paper, and $x(i)$
has been normalized (divided by $\lambda_i^{1/p}$).

Talagrand's $n$ is our $d$ and Talagrand's $M$ is our $n$.
However, Talagrand's proof is only using the $\frac{n}{M}$ as an upper bound on the Lewis weights;
thus, we can get our claim.  With a Lewis weight (by our definition) upper bound of $U$, expressions such as
$\sqrt{\frac{n \log M}{M}}$ can be replaced with $\sqrt{U \log M}$, and $\frac{1}{M \log^4 M}$ with
$\frac{U}{n \log^4 M}$.
\end{proof}
\cite{TalagrandLp} conjectures, immediately after its core claim Theorem 1.1,
that the extra $\log \log n$ factors are unnecessary for this result
(it describes them as ``truly parasitic'').  It also points out the difficulty of adapting its approach to remove them.

We will give a general (for $p \leq 2$) reduction, Lemma~\ref{lem:momentreduct}, of these moment bounds to moment bounds on the error of our described sampling procedure.


Now we may give the general reduction.  We describe a version with an extra
restriction that the required Lewis weight bound is larger than $O(1/d)$;
this is not actually necessary (and we will sketch how to avoid this).
\begin{restatable}{lemma}{momentreduct}
\label{lem:momentreduct}
There exist constants $C_1$, $C_2$ such that for any $p < 2$, if a uniform bound of $\frac{1}{g(p, n, d, \epsilon, \delta)} = \Omega(1/d)$ on
Lewis weights implies
\begin{equation*}
\expec{\sigma}{\left (\max_{\norm{\AA \xx}_p = 1} \left | \sum_i \sigma_i |\aa_i^T \xx_i|^p \right | \right )^l} \leq \epsilon^l \delta,
\end{equation*}
then sampling as described in Theorem~\ref{thm:ellpsample} with $\pp_i \geq g(p, N + C_1 d^2, d, \epsilon / C_2, \delta) \wwbar_i$ will satisfy
\begin{equation*}
\expec{\SS}{\left ( \max_{\norm{\AA \xx}_p = 1} | \norm{\SS \AA \xx}_p^p - 1 | \right )^l} \leq \epsilon^l \delta
\end{equation*}
and in particular, $\norm{\SS \AA \xx}_p \approx_{1 + \epsilon} \norm{\AA \xx}_p$ with probability at least $1-\delta$.
\end{restatable}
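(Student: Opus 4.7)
The plan is to reduce the nonuniform sampling error bound to the Rademacher moment bound in the hypothesis via three ingredients: (i) a Lewis-weight-preserving splitting of the rows of $\AA$ into a matrix $\AA'$ with uniformly small Lewis weights, (ii) a symmetrization of the centered sample sum, and (iii) a contraction/transfer step relating the symmetrized process to the Rademacher process over the rows of $\AA'$.

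First, I would form $\AA'$ by replacing each row $\aa_i$ with $k_i = \max(1, \lceil \pp_i \rceil)$ equal copies, each equal to $k_i^{-1/p} \aa_i$. This preserves $\norm{\AA \xx}_p$ exactly, and a direct check of Definition~\ref{def:lewis} (splitting a row into equal copies splits its Lewis weight evenly) shows every copy has Lewis weight $\wwbar_i / k_i \leq 1/g$. The row count is at most $N + C_1 d^2$: rows with $\pp_i \geq 1$ contribute at most $2N$ copies in total, while by Foster's theorem the count of rows with nontrivial Lewis weight is $O(d^2)$ after absorbing very-low-weight rows into a negligible deterministic perturbation of the $p$-norm.

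Next, setting $Z_j(\xx) = |(\SS\AA)_j \cdot \xx|^p$ gives $\norm{\SS\AA\xx}_p^p - \norm{\AA\xx}_p^p = \sum_j (Z_j(\xx) - \E Z_j(\xx))$, and the standard symmetrization with an independent copy $\SS^\dagger$ produces
\begin{equation*}
\expec{\SS}{\max_\xx \left| \norm{\SS\AA\xx}_p^p - \norm{\AA\xx}_p^p \right|^l}
\leq 2^l \expec{\SS,\sigma}{\max_\xx \left| \sum_j \sigma_j Z_j(\xx) \right|^l}.
\end{equation*}
Up to the rounding absorbed by the $C_1 d^2$ cushion, the sampling $\SS$ coincides with drawing $N$ iid uniform rows of $\AA'$, so regrouping samples by their row of origin gives $\sum_j \sigma_j Z_j(\xx) = \sum_i S_i \abs{\aa'_i \cdot \xx}^p$ with signed counts $S_i$ that are conditionally symmetric and independent across $i$. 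Writing $S_i = |S_i| \eta_i$ for fresh Rademachers $\eta_i$ and applying the Rademacher contraction principle together with moment control on $|S_i|$ (via Poissonization of the multinomial sample) bounds this by a constant times $\expec{\eta}{\max_\xx \abs{\sum_i \eta_i \abs{\aa'_i \cdot \xx}^p}^l}$, to which the hypothesis applies with $(n, \epsilon, \delta)$ replaced by $(N + C_1 d^2, \epsilon/C_2, \delta)$. Markov's inequality together with the estimate $|u^{1/p} - 1| = O(|u-1|)$ for $u$ near $1$ then converts the $p$-th-power moment bound to the desired multiplicative approximation.

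The main obstacle is the transfer step: the signed multiplicities $S_i$ are not bounded by $1$, so the Rademacher contraction principle does not directly apply. Handling them rigorously---either by Poissonizing the multinomial sampler so that the $S_i$ become sums of independent near-Bernoulli Poisson counts with controlled first moments, or by decomposing the sampler into two independent coin-flip rounds---is where both the absolute constant $C_2$ and the $d^2$ cushion in the required hypothesis parameter originate.
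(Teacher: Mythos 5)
Your high-level plan (symmetrize the centered sum, then feed the resulting sign process into the hypothesized Rademacher bound) matches the paper's, but the middle of the argument diverges and has real gaps.

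The central issue is your regrouping step. After symmetrization you naturally have $N$ independent signs $\sigma_k$, one per sample index $k$, applied to $|\aa_{i_k}^T\xx|^p/\pp_{i_k}$. You then collapse these by row of origin into $\sum_i S_i |\aa'_i\cdot\xx|^p$ with signed multiplicities $S_i$, and try to recover a Rademacher process via contraction. You correctly flag this as the main obstacle, but it is not merely a technicality: the multiplicities $|S_i|$ are unbounded, the pairs $(|S_i|,\eta_i)$ are not independent across $i$, and the Rademacher contraction principle requires fixed coefficients in $[-1,1]$. Poissonization does not obviously fix this (Poisson counts are also unbounded), and you do not carry it through. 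The paper sidesteps this entirely: it never regroups. It keeps one sign per sample, so the sign process is already over the $N$ rows of $\SS\AA$, and the issue becomes ensuring that those rows have uniformly small Lewis weights in some matrix to which the hypothesis can be applied.

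That is what the paper's auxiliary matrix $\AA'$ is for, and it is a different object from yours. Your $\AA'$ is a split-up version of $\AA$ used as the target of the uniform-weight hypothesis; its row count is not controlled (rows with $\pp_i<1$ can be as many as $n$, and the claim that low-weight rows can be ``absorbed into a negligible deterministic perturbation of the $p$-norm'' is unjustified --- low Lewis weight does not mean small contribution to $\norm{\AA\xx}_p$). The paper's $\AA'$ (from Lemma~\ref{lem:weakbound}) is a fixed small matrix of $O(d^2)$ rows obtained by sampling, with $\AA'^T\WWbar'^{1-2/p}\AA' \succeq \AA^T\WWbar^{1-2/p}\AA$ and $\norm{\AA'\xx}_p^p \leq C\norm{\AA\xx}_p^p$. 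It is \emph{appended} to the sample to form $\AA''=[\SS\AA;\AA']$. Monotonicity (Lemma~\ref{lem:fullmonotone}, which is where $p<2$ is used) then gives the uniform Lewis weight bound on every row of $\AA''$, so the hypothesis applies directly to the sign process over all $N+C_1 d^2$ rows of $\AA''$. Finally, because $\norm{\AA''\xx}_p^p$ is only controlled up to a factor $(1+C+F)$ where $F$ is itself the random error you are trying to bound, the paper closes with a self-bounding inequality $M \leq 2^{2l-1}((1+C)^l + M)\epsilon^l\delta$; this self-bounding step is absent from your argument but is essential to transfer the normalization from $\norm{\AA\xx}_p=1$ to $\norm{\AA''\xx}_p=1$. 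In short: your symmetrization is right, but the regrouping/contraction route runs into an unbounded-multiplier obstruction that the paper avoids by keeping one sign per sample, appending a small well-conditioned cushion to enforce the Lewis weight bound, and closing with a self-bounding argument.
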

This is proved in Appendix~\ref{sec:reduct}.

The following was shown in~\cite{BourgainLM89}
\begin{lemma}[\cite{BourgainLM89}]
For any $p > 2$, there exists a constant $C$ such that if
if we sample $O(d^{p / 2} \log d \log(1 / \epsilon) / \epsilon^5)$ rows
with probability proportional to Lewis weights to give $\AA'$, we have
\begin{equation*}
\max_{\norm{\AA \xx}_p = 1}
\left | \norm{\AA' \xx}_p^p - 1 \right |  \leq \epsilon
\end{equation*}
with high probability.
\end{lemma}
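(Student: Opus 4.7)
My plan is to follow the Bourgain--Lindenstrauss--Milman strategy, combining symmetrization, the Lewis change of density, and a chaining argument over the unit ball $\{\xx : \norm{\AA \xx}_p = 1\}$. First I would symmetrize: by introducing an independent sampling copy $\SS'$ and using Jensen's inequality, one reduces bounding $\mathbb{E}_{\SS} \sup_{\xx} \left| \norm{\SS \AA \xx}_p^p - \norm{\AA \xx}_p^p \right|$ to controlling a conditional Rademacher process
\[
\mathbb{E}_{\SS,\sigma} \sup_{\norm{\AA \xx}_p = 1} \left| \frac{1}{N} \sum_{j=1}^N \frac{\sigma_j}{\pp_{i_j}} \left|\aa_{i_j}^T \xx\right|^p \right|,
\]
where $i_j$ are the sampled row indices and $\sigma_j$ are i.i.d. Rademacher signs.

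Next I would use Lewis weights to normalize. By the definition of Lewis weights, after the change of density one has $\left|\aa_i^T \xx\right| \leq \wwbar_i^{1/p} \cdot \norm{\AA\xx}_p$ (via Cauchy--Schwarz applied to $\bb_i = \wwbar_i^{1/2-1/p}\aa_i$ against $(\AA^T \WWbar^{1-2/p}\AA)^{1/2} \xx$). Thus on the unit ball each sampled term $\left|\aa_{i_j}^T \xx\right|^p / \pp_{i_j}$ is uniformly bounded by $\wwbar_{i_j}/\pp_{i_j} = O(d/N)$. In parallel, the $\ell_2$ matrix Chernoff bound (Fact~\ref{fact:tau}) applied to $\WWbar^{1/2-1/p}\AA$ shows that with $N = \Omega(d \log d)$ samples the sampled Gram matrix spectrally approximates $\AA^T \WWbar^{1-2/p}\AA$ up to a constant.

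With these uniform pointwise and spectral bounds in hand, the final step is to bound the supremum of the Rademacher process by chaining. For $p>2$ I would write $|\aa^T \xx|^p = |\aa^T \xx|^{p-2} \cdot |\aa^T \xx|^2$, use the pointwise Lewis bound to absorb the $(p-2)$-factor at a cost of $d^{p/2-1}$ (since the worst case of $|\aa^T\xx|^{p-2}$ on the unit ball contributes this volumetric penalty), and reduce to bounding a Rademacher quadratic process $\sup_\xx |\sum_j \sigma_j (\aa_{i_j}^T \xx)^2 / \pp_{i_j}|$. This quadratic process is then handled by a Rosenthal-type moment inequality and a Dudley-style entropy integral over the appropriately scaled $\ell_p$ unit ball, whose metric entropy contributes the $\log d$ and an extra power of $d$.

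The main obstacle is exactly the chaining step. Pointwise, Rosenthal's inequality gives sharp moments for each fixed $\xx$, but transporting these to a uniform bound over the continuous $d$-dimensional unit ball of $\ell_p$ requires either a careful entropy integral or generic chaining. The factor $d^{p/2}$ reflects the intrinsic volumetric dimension of $\ell_p^d$ for $p > 2$ (the unit ball has effective $\ell_2$-diameter $d^{1/2 - 1/p}$), and the $\epsilon^{-5}\log(1/\epsilon)$ dependence is a well-known artifact of the BLM-style $\epsilon$-net reduction that, as the excerpt notes, seems hard to improve without invoking the substantially deeper Gaussian process machinery developed by Talagrand.
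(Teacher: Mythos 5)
The paper does not reprove this result: its ``proof'' is a pointer to Theorem 7.3 of Bourgain--Lindenstrauss--Milman, explaining that their line 7.27 is a Bernstein-type sufficient condition, that their sampling measure $\nu$ is exactly the Lewis weight distribution, and that raising sampling probabilities only helps the condition. Your proposal, by contrast, attempts a from-scratch sketch of the BLM argument (symmetrization, change of density, chaining). That is a genuinely different and much more ambitious route, so it is worth flagging where it goes wrong and where it is incomplete.

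There is an error in the central pointwise estimate. You claim $|\aa_i^T \xx| \leq \wwbar_i^{1/p} \norm{\AA\xx}_p$ and hence that each sampled term is $O(d/N)$ on the unit ball. Cauchy--Schwarz against the Lewis Gram matrix actually gives $|\aa_i^T \xx| \leq \wwbar_i^{1/p} \norm{\WWbar^{1/2-1/p}\AA\xx}_2$, and for $p > 2$ the comparison between the $\ell_2$ and $\ell_p$ norms goes the \emph{other} way: one has
\begin{equation*}
\norm{\AA\xx}_p \;\leq\; \norm{\WWbar^{1/2-1/p}\AA\xx}_2 \;\leq\; d^{1/2-1/p}\,\norm{\AA\xx}_p,
\end{equation*}
with the upper inequality coming from H\"older and $\sum_i \wwbar_i = d$, and the lower from the leverage-score pointwise bound. (For $p < 2$ the two inequalities swap direction.) So on the unit $\ell_p$ ball the per-row bound is $\wwbar_i^{1/p} d^{1/2-1/p}$, not $\wwbar_i^{1/p}$, and each sampled term is $O(d^{p/2}/N)$, not $O(d/N)$. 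This is in fact exactly where the $d^{p/2}$ in the final sample count comes from --- you arrive at the right exponent in your closing discussion, but the explicit inequality you use to get there is stated backwards for the range $p > 2$ that this lemma is about.

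Beyond that, the chaining/entropy-integral step --- which you correctly identify as ``the main obstacle'' --- is left entirely open, and it is precisely where all of the work in BLM lives. As written, the proposal is an outline of a proof strategy rather than a proof, and it does not substitute for the citation the paper uses. If you want a self-contained argument, you would need to actually carry out the Dudley-type bound over the renormalized body $\{\xx : \norm{\AA\xx}_p \leq 1\}$, control the entropy numbers with respect to the induced pseudometric, and track how the uniform bound $d^{p/2}/N$ and the $\ell_2$-matrix-Chernoff variance bound combine; the $\epsilon^{-5}\log(1/\epsilon)$ dependence falls out of that calculation, not from a generic $\epsilon$-net reduction.
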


\begin{proof}
This is implicitly shown in the proof of Theorem 7.3 in~\cite{BourgainLM89}.  In particular, line
7.27 is giving sufficient conditions for a Bernstein inequality, applied to random samples,
to imply approximation.  These samples are chosen according to the probability measure $\nu$
used in the paper, which is proportional to the Lewis weights.  We are only assuming the weights
are lower bounds for the sampling probabilities, but increasing the probabilities further can
only improve the condition of 7.27.
\end{proof}

There are several additional proofs of results similar to those we refer to here, including in Chapter 15 of
\cite{LedouxTBook} and in \cite{GuedonR07}.


\section{An Elementary Proof of $\ell_1$ Matrix Concentration Bound}
\label{sec:concentration}

In this section, we give an elementary proof of the concentration bound
described in Theorem~\ref{thm:l1chernoff}.

\lonechernoff*

By Lemma~\ref{lem:momentreduct}, it suffices to prove that there exists a $C_r$ such that
if \emph{every} Lewis weight is at most $C_r \epsilon^2 / \log(n / \delta)$, there exists an $l$ such that
\begin{equation*}
\expec{\sigma}{\left (\max_{\norm{\AA \xx}_1 = 1} \left | \sum_i \sigma_i |\aa_i^T \xx_i| \right | \right )^l} \leq \epsilon^l \delta
\end{equation*}
where the $\sigma_i$ are independent Rademacher variables.

We begin by invoking the comparison theorem for Radamacher processes,
as stated in Proposition 1 of~\cite{LedouxT89}.

\begin{lemma}
\label{lem:comparison}
For any positive, monotonic function $f$ we have:
\begin{equation*}
\expct{\sigma}{ f \left( \max_{\xx, \norm{\AA \xx}_1 = 1 } \abs{\sum_i \sigma_i \abs{\aa_{i}^T \xx }} \right)}
\leq 2 \expct{\sigma}{  f \left( \max_{\xx, \norm{\AA \xx}_1 = 1 } \sum_i \sigma_i \aa_{i}^T \xx \right) }.
\end{equation*}
\end{lemma}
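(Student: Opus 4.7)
The plan is to deduce the lemma from the classical Rademacher contraction principle applied to the $1$-Lipschitz function $\phi(t)=|t|$ (which vanishes at $0$), together with a symmetry argument. I would first put the two sides of the lemma into a common form, then invoke a single-variable contraction inductively over the rows.

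First, I would exploit that the constraint set $T=\{\xx:\norm{\AA\xx}_1=1\}$ is invariant under $\xx\mapsto-\xx$ and that $\xx\mapsto\sum_i\sigma_i\aa_i^T\xx$ is linear in $\xx$. These two observations together imply
\begin{equation*}
\max_{\xx\in T}\sum_i \sigma_i\aa_i^T \xx \;=\;\max_{\xx\in T}\left|\sum_i \sigma_i\aa_i^T \xx\right|,
\end{equation*}
so the right-hand side can be placed in the same structural form as the left (an expected $f$-value of an absolute Rademacher sum over $T$). This reduces the lemma to a direct comparison of two absolute maxima.

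Next, I would apply the contraction principle for $\phi(t)=|t|$, whose standard inductive proof (over indices, conditioning on all but one $\sigma_i$ at a time and using the single-variable inequality $\bigl|\,|a|-|b|\,\bigr|\le|a-b|$) shows that the Rademacher process $\{\sum_i\sigma_i|\aa_i^T\xx|\}_{\xx\in T}$ is dominated by the linear process $\{\sum_i\sigma_i\aa_i^T\xx\}_{\xx\in T}$ up to a factor of $2$. The main obstacle is that the lemma is stated for a general positive monotonic $f$ with the factor $2$ \emph{outside} $f$, whereas the classical formulation typically places a factor $\frac{1}{2}$ \emph{inside} a convex increasing $F$. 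To bridge this gap, I would prove a stronger distributional domination directly: for each fixed $\xx$, the random variable $\sum_i\sigma_i|\aa_i^T\xx|$ has the same distribution as $\sum_i\sigma_i\aa_i^T\xx$ via the reflection $\sigma_i\mapsto\sigma_i\,\mathrm{sign}(\aa_i^T\xx)$, and a coupling of the two maxima over $T$, together with a union bound over the ``positive'' and ``negative'' halves of the outer absolute value (which have identical $\sigma$-distributions under $\sigma\mapsto-\sigma$), recovers the factor of $2$ outside $f$ for any monotonic $f$.
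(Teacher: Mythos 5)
The paper does not actually prove Lemma~\ref{lem:comparison}; it cites it as Proposition~1 of Ledoux--Talagrand~\cite{LedouxT89}. So you are attempting to supply a proof of a result the paper treats as a black box, and the attempt has a real gap.

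The problematic step is the claimed ``stronger distributional domination'' via the $\xx$-dependent reflection $\sigma_i \mapsto \sigma_i \,\mathrm{sign}(\aa_i^T\xx)$. That reflection shows, for each \emph{fixed} $\xx$, that $\sum_i \sigma_i |\aa_i^T\xx|$ and $\sum_i\sigma_i\aa_i^T\xx$ agree in distribution; but because the reflection depends on $\xx$, there is no single measure-preserving transformation of $\sigma$ that identifies the two \emph{processes} indexed by $\xx \in T$, and hence no coupling of the two maxima over $T$. Indeed the two processes are not equal in distribution as processes --- if they were, the comparison theorem would be a triviality with constant $1$ and no contraction argument would be needed. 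Your first paragraph (symmetry of $T$ to put an absolute value on the right) and your identification of the induction/conditioning structure are fine, and the plan to get the factor $2$ from splitting the outer absolute value into its positive and negative parts and using the global $\sigma \mapsto -\sigma$ symmetry is the right idea. What is missing is the actual contraction step that then bounds $\expct{\sigma}{f(\max_{\xx} \sum_i \sigma_i |\aa_i^T\xx|)}$ by $\expct{\sigma}{f(\max_{\xx} \sum_i \sigma_i \aa_i^T\xx)}$ (with \emph{no} outer absolute value and $f$ merely nondecreasing). That step proceeds by conditioning on all but one $\sigma_i$ and proving a one-variable comparison of the form $\expct{\sigma}{F(\sup_t (h(t) + \sigma \varphi(g(t))))} \le \expct{\sigma}{F(\sup_t (h(t) + \sigma g(t)))}$ for a contraction $\varphi$ vanishing at $0$; the inequality $\bigl||a|-|b|\bigr|\le|a-b|$ by itself does not give it, since the maximizing $\xx$ on the two sides need not coincide. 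In short: the skeleton of the argument is sound, but the crucial bridge is asserted via a coupling that does not exist, and the one-dimensional lemma that actually does the work is not proved.
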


It remains to bound this new random process (with the absolute values removed).
Here note that duality of norms gives that
\begin{equation*}
\max_{\norm{\yy}_{1} \leq 1} \xx^T \yy
= \norm{\xx}_{\infty}.
\end{equation*}

\begin{lemma}
\label{lem:lewlinf}

\begin{equation*}
\left( \max_{\xx, \norm{\AA \xx}_1 = 1} \sum_i \sigma_i \aa_i^T \xx \right)^{l}
\leq \sum_i \left | \sum_j \sigma_j
\wwbar_i^{-1} \aa_i^T ( \AA^T \WWbar^{-1} \AA )^{-1} \aa_j \right |^l .
\end{equation*}

\end{lemma}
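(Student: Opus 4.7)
The plan is to exploit the identity $\AA^T \sigma = \MM \zz$, where $\MM = \AA^T \WWbar^{-1} \AA$ and $\zz = \MM^{-1} \AA^T \sigma$, apply Hölder's inequality, and then convert the resulting maximum into a sum by a trivial $(\max_i y_i)^l \leq \sum_i y_i^l$.

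First I would rewrite, for any $\xx$ with $\norm{\AA \xx}_1 = 1$,
\begin{equation*}
\sum_i \sigma_i \aa_i^T \xx
\;=\; \xx^T \AA^T \sigma
\;=\; \xx^T \MM \zz
\;=\; \sum_i \wwbar_i^{-1}(\AA\xx)_i (\AA\zz)_i,
\end{equation*}
expanding $\MM = \sum_i \wwbar_i^{-1} \aa_i \aa_i^T$ in the last step. This is the pivot of the whole argument: the linear-in-$\sigma$ object has been absorbed into $\zz$, and the remaining expression is a bilinear form in $\AA\xx$ and $\AA\zz$ whose two sides now decouple.

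Next, the $\ell_1/\ell_\infty$ case of Hölder's inequality gives
\begin{equation*}
\sum_i \wwbar_i^{-1}(\AA\xx)_i(\AA\zz)_i
\;\leq\; \norm{\AA\xx}_1 \cdot \max_i \wwbar_i^{-1}|(\AA\zz)_i|
\;=\; \max_i \wwbar_i^{-1}|(\AA\zz)_i|,
\end{equation*}
using the constraint $\norm{\AA\xx}_1 = 1$. Since the right-hand side no longer depends on $\xx$, the bound survives taking the maximum over $\xx$. Raising to the $l$-th power, applying $(\max_i y_i)^l \leq \sum_i y_i^l$ for $y_i \geq 0$, and finally expanding $(\AA\zz)_i = \aa_i^T \MM^{-1} \AA^T \sigma = \sum_j \sigma_j \aa_i^T \MM^{-1} \aa_j$ produces the right-hand side of the lemma verbatim.

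There is no real obstacle: the entire content is the identity in the first display. Notably, the argument never invokes the Lewis property; \emph{any} positive weights $\wwbar$ would make this inequality go through. What the Lewis property $\aa_i^T \MM^{-1} \aa_i = \wwbar_i^2$ will be used for is the next step in Section~\ref{sec:concentration}, controlling $\expec{\sigma}{\cdot}$ of the right-hand side as a sum of $l$-th moments of Rademacher sums; at the present algebraic step it plays no role. The only thing to be careful about is pairing $(\AA\xx)_i$ with the $\ell_1$ factor in Hölder so that the $\xx$-dependence disappears under the constraint $\norm{\AA\xx}_1 = 1$.
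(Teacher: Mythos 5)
Your proof is correct and is essentially the paper's argument in lightly different notation: your insertion of $\MM\zz$ with $\zz = \MM^{-1}\AA^T\ssigma$ is exactly the paper's insertion of the weighted projection $\PPi = \AA(\AA^T\WWbar^{-1}\AA)^{-1}\AA^T\WWbar^{-1}$ (satisfying $\PPi\AA = \AA$), and your Hölder step matches the paper's duality step $\max_{\norm{\yy}_1\le 1}\ssigma^T\PPi\yy = \norm{\PPi^T\ssigma}_\infty$. Your side remark that the Lewis property is not used at this step is also accurate.
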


\begin{proof}

Consider the expression
\begin{equation*}
\max_{\xx, \norm{\AA \xx}_1 = 1 } \sum_i \sigma_i \aa_i^T \xx.
\end{equation*}
Direct algebraic manipulation gives that it is equivalent to
\begin{equation*}
= \max_{\xx, \norm{ \AA \xx}_1 \leq 1} \ssigma^T \AA \xx
\end{equation*}
with $\ssigma$ representing a vector of all the $\sigma_i$.

Now, let $\PPi = \AA ( \AA^T \WWbar^{-1} \AA )^{-1} \AA^T \WWbar^{-1}$.  $\PPi$ is a projection matrix to
the column space of $\AA$, and in particular $\PPi \AA = \AA$ (this can be shown simply by
multiplying it out and cancelling $\AA^T \WWbar^{-1} \AA$ with its inverse).  Note that $\PPi$ is
specifically the \emph{orthogonal} projection to this subspace with respect to the
$\wwbar^{-1}$-weighted inner product: the natural inner product induced by the Lewis weights.

Thus we have $\PPi \AA \xx = \AA \xx$, so the quantity is equal to
\begin{equation*}
\max_{\xx, \norm{ \AA \xx}_1 \leq 1} \ssigma^T \PPi \AA \xx.
\end{equation*}
This is upper bounded by
\begin{equation*}
\max_{\yy, \norm{\yy}_1 \leq 1} \ssigma^T \PPi \yy.
\end{equation*}
as any $\AA \xx$ can be plugged in as $\yy$.  But by duality of norms, as mentioned above,
that quantity is at most
\begin{equation*}
\norm{\PPi^T \ssigma}_\infty = \max_i \left | \sum_j \sigma_j \wwbar_i^{-1} \aa_i^T ( \AA^T \WWbar^{-1} \AA )^{-1} \aa_j \right |.
\end{equation*}

The $l$\textsuperscript{th} power of this max is at most the sum of the
$l$\textsuperscript{th} powers of the entries.
This gives
\begin{equation*}
\left( \max_i \left | \sum_j \sigma_j \wwbar_i^{-1} \aa_i^T ( \AA^T \WWbar^{-1} \AA )^{-1} \aa_j \right | \right)^l
\leq \sum_i \left | \sum_j \sigma_j \wwbar_i^{-1} \aa_i^T ( \AA^T \WWbar^{-1} \AA )^{-1} \aa_j \right |^{l}.
\end{equation*}
\end{proof}

To bound these terms, we will use the Khintchine inequality and the definition of Lewis weights.

\begin{lemma}[Khintchine]
Let $\sigma_i$ be independent Radamacher random variables.
Let $1 \leq l < \infty$  and let $\xx_i \in \Re$. Then there exists an absolute constant $C$ such that
\begin{equation*}
\expct{\sigma}{ \left | \sum_{i} \sigma_i \xx_i \right |^{l}}
\leq \left(Cl \sum_{i} \left| x_i \right|^2 \right)^{l/2}.
\end{equation*}
\end{lemma}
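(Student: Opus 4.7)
The plan is to prove Khintchine via the classical three-step chain: a moment-generating-function (MGF) bound, a sub-Gaussian tail estimate, and then a moment bound obtained by integrating the tail. By homogeneity it suffices to prove the inequality in the normalized case $\sigma^2 := \sum_i |\xx_i|^2 = 1$, and to work with the random variable $Z := \sum_i \sigma_i \xx_i$.

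For the MGF bound, note that for each $i$ and each $t \in \Re$, $\expct{}{\exp(t \sigma_i \xx_i)} = \cosh(t \xx_i)$. Comparing the Taylor series of $\cosh$ and $\exp$ term by term and using $(2k)! \geq 2^k k!$ gives $\cosh(y) \leq \exp(y^2/2)$. Independence of the $\sigma_i$ then yields the sub-Gaussian estimate
\begin{equation*}
\expct{\sigma}{\exp(tZ)} \leq \prod_i \exp(t^2 \xx_i^2 / 2) = \exp(t^2/2).
\end{equation*}
A standard Markov/Chernoff argument with the optimal choice $t = s$ gives $\Pr[Z > s] \leq \exp(-s^2/2)$, and symmetry of the Rademacher distribution upgrades this to $\Pr[|Z| > s] \leq 2\exp(-s^2/2)$.

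For the moment bound, the layer-cake formula
\begin{equation*}
\expct{\sigma}{|Z|^l} = \int_0^\infty l s^{l-1} \Pr[|Z| > s]\, ds
\end{equation*}
combined with the tail bound reduces, after the substitution $u = s^2/2$, to a Gamma-function integral: the right-hand side equals $l \cdot 2^{l/2} \Gamma(l/2)$. Using the crude Stirling-type bound $\Gamma(l/2) \leq (l/2)^{l/2}$ (valid for $l \geq 2$) and picking $C$ large enough that $l \leq C^{l/2}$ holds for every $l \geq 2$ gives $\expct{\sigma}{|Z|^l} \leq (Cl)^{l/2}$ in the normalized case. Restoring $\sigma^2$ by homogeneity then yields the stated bound. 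The range $1 \leq l < 2$ is handled separately by Jensen's inequality, which reduces it to the direct calculation $\expct{\sigma}{Z^2} = 1$.

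All three steps are routine, and there is no real obstacle. The only bookkeeping care is picking the absolute constant $C$ uniformly in $l$: the inequality $l \leq C^{l/2}$ rearranges to $\log C \geq 2(\log l)/l$, and the right-hand side is maximized at $l = e$, so any $C \geq e^{2/e}$ (for instance $C = 3$) works for all $l \geq 2$, and the $l < 2$ case imposes no additional constraint since $Cl \geq 1$ already.
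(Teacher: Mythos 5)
Your proof is correct, and since the paper invokes Khintchine as a classical fact without supplying any proof, there is nothing in the paper to compare it against; your MGF-to-subgaussian-tail-to-moment chain (with the $1 \leq l < 2$ case handled by Jensen) is the standard self-contained derivation, and the constant bookkeeping, including $l \cdot 2^{l/2}\Gamma(l/2) \leq l \cdot l^{l/2} \leq (Cl)^{l/2}$ for $C \geq e^{2/e}$, checks out.
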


\begin{lemma}
\label{lem:momBound}
There exists an absolute constant $C$ such that, for any $\AA$ having all Lewis weights bounded by $U$, and $l \geq 1$
\begin{equation*}
\expec{\sigma}{\left (\max_{\norm{\AA \xx}_1 = 1} \left | \sum_i \sigma_i |\aa_i^T \xx_i| \right | \right )^l} \leq 2n (Clu)^{l/2}.
\end{equation*}
\end{lemma}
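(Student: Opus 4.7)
The plan is to chain together the three tools already at hand: the comparison theorem (Lemma~\ref{lem:comparison}), the duality/projection bound (Lemma~\ref{lem:lewlinf}), and Khintchine's inequality, with the definition of $\ell_1$ Lewis weights providing the final algebraic identity that makes everything work.

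First I would apply Lemma~\ref{lem:comparison} with the monotone function $f(x) = x^l$ to pay a factor of $2$ and replace the Radamacher process with absolute values by the corresponding linear one. Then I would apply Lemma~\ref{lem:lewlinf} to bound the $l$-th power of that linear Radamacher process by
\begin{equation*}
\sum_i \left| \sum_j \sigma_j \, \wwbar_i^{-1} \aa_i^T \MM \aa_j \right|^l,
\end{equation*}
where I set $\MM = (\AA^T \WWbar^{-1} \AA)^{-1}$ for brevity. Taking expectations over $\sigma$ and applying Khintchine to each $i$ gives
\begin{equation*}
\expec{\sigma}{\left| \sum_j \sigma_j \, \wwbar_i^{-1} \aa_i^T \MM \aa_j \right|^l} \leq \left( Cl \sum_j \left( \wwbar_i^{-1} \aa_i^T \MM \aa_j \right)^2 \right)^{l/2}.
\end{equation*}

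Next I would carry out the key algebraic step, which bounds the inner sum using the Lewis weight identity. Observe that $\sum_j \wwbar_j^{-1} \aa_j \aa_j^T = \AA^T \WWbar^{-1} \AA = \MM^{-1}$, so
\begin{equation*}
\sum_j \wwbar_j^{-1} \left( \aa_i^T \MM \aa_j \right)^2 = \aa_i^T \MM \left( \sum_j \wwbar_j^{-1} \aa_j \aa_j^T \right) \MM \aa_i = \aa_i^T \MM \aa_i = \wwbar_i^2,
\end{equation*}
by Definition~\ref{def:lewis} with $p=1$. Using $\wwbar_j \leq U$ to remove the $\wwbar_j^{-1}$ weighting costs at most a factor of $U$, yielding
\begin{equation*}
\sum_j \left( \aa_i^T \MM \aa_j \right)^2 \leq U \wwbar_i^2, \qquad \text{hence} \qquad \sum_j \left( \wwbar_i^{-1} \aa_i^T \MM \aa_j \right)^2 \leq U.
\end{equation*}

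Finally I would combine these: each of the $n$ terms contributes at most $(ClU)^{l/2}$, so summing over $i$ gives $n (ClU)^{l/2}$, and the factor of $2$ from Lemma~\ref{lem:comparison} produces the claimed $2n (ClU)^{l/2}$. The main obstacle here is really just spotting the right algebraic rearrangement in the middle step; once one sees that $\MM^{-1}$ is the weighted sum of outer products appearing in the squared entries, the Lewis weight equation collapses everything. Everything else is a mechanical application of the named inequalities.
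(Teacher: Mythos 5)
Your proposal is correct and follows essentially the same path as the paper's proof: Lemma~\ref{lem:comparison} for the factor of $2$, Lemma~\ref{lem:lewlinf} to reduce to a sum of signed Radamacher sums, Khintchine's inequality, and the identity $\sum_j \wwbar_j^{-1} (\aa_i^T \MM \aa_j)^2 = \aa_i^T \MM \aa_i = \wwbar_i^2$ combined with the uniform bound $\wwbar_j \leq U$. The only difference is cosmetic ordering (you evaluate the weighted sum first, then apply the weight bound; the paper does the reverse), and you correctly read the $u$ in the stated bound as $U$.
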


\begin{proof}
We apply Khintchine's inequality to the terms
$\expec{\sigma}{\left | \sum_j \sigma_j \wwbar_i^{-1} \aa_i^T ( \AA^T \WWbar^{-1} \AA )^{-1} \aa_j \right |^{l}}$,
which gives an upper bound for each term of
\begin{equation*}
\left ( Cl \sum_j ( \wwbar_i^{-1} \aa_i^T ( \AA^T \WWbar^{-1} \AA )^{-1} \aa_j )^2 \right )^{l/2}.
\end{equation*}

Now, we apply our Lewis weight upper bound to show that
\begin{equation*}
\sum_j ( \wwbar_i^{-1} \aa_i^T ( \AA^T \WWbar^{-1} \AA )^{-1} \aa_j )^2 \leq U \sum_j ( \wwbar_i^{-1} \aa_i^T ( \AA^T \WWbar^{-1} \AA )^{-1} \aa_j \wwbar_j^{-1/2} )^2.
\end{equation*}

We may then rewrite that expression as
\begin{equation*}
U \wwbar_i^{-2} \sum_j  \aa_i^T ( \AA^T \WWbar^{-1} \AA )^{-1} \aa_j \wwbar_j^{-1} \aa_j^T ( \AA^T \WWbar^{-1} \AA )^{-1} \aa_i.
\end{equation*}

Only the middle, $\aa_j \wwbar_j^{-1} \aa_j$ depends on $j$, so the whole sum is
\begin{equation*}
U \wwbar_i^{-2} \aa_i^T ( \AA^T \WWbar^{-1} \AA )^{-1} \left ( \sum_j \aa_j \wwbar_j^{-1} \aa_j^T \right ) ( \AA^T \WWbar^{-1} \AA )^{-1} \aa_i.
\end{equation*}
Since $\sum_j \aa_j \wwbar_j^{-1} \aa_j^T = \AA^T \WWbar^{-1} \AA$, this is just
\begin{equation*}
U \wwbar^{-2} \aa_i^T ( \AA^T \WWbar^{-1} \AA )^{-1} \aa_i = U \wwbar^{-2} \wwbar^2
\end{equation*}
or just $U$.  Note that the last equation used the definition of Lewis weights.  Thus, for all $i$,
\begin{equation*}
\expec{\sigma}{\left( \sum_j \sigma_j \wwbar_i^{-1} \aa_i^T ( \AA^T \WWbar^{-1} \AA )^{-1} \aa_j \right )^{l}}
\leq (ClU)^{l/2},
\end{equation*}
so their sum
\begin{equation*}
\expec{\sigma}{\sum_i \left( \sum_j \sigma_j \wwbar_i^{-1} \aa_i^T ( \AA^T \WWbar^{-1} \AA )^{-1} \aa_j \right )^{l}} \leq n (ClU)^{l/2}.
\end{equation*}
Combining this with Lemma~\ref{lem:comparison} and Lemma~\ref{lem:lewlinf} gives the desired bound.
\end{proof}

We can now prove the main theorem.

\begin{proof}[Proof of Theorem~\ref{thm:l1chernoff}]
First, we show that there exists an absolute constant $C$ such that with any $\AA$ having each Lewis weight
bounded by $\frac{C_r \epsilon^2}{\log(2n/\delta)}$, there exists an $l$ such that,
\begin{equation*}
\expec{\sigma}{\left (\max_{\norm{\AA \xx}_1 = 1} \left | \sum_i \sigma_i |\aa_i^T \xx_i| \right | \right )^l} \leq \epsilon^l \delta.
\end{equation*}

To obtain this, we simply plug $l = \log(2n/\delta)$ and $U = \frac{1}{Ce^2} \epsilon^2 / \log(2n / \delta)$ into
Lemma~\ref{lem:momBound}, giving $(ClU)^{1/2} = \frac{\epsilon}{e}$,
and $(ClU)^{l/2} = \epsilon^l \frac{\delta}{2n}$.

Plugging this into Lemma~\ref{lem:momentreduct} and applying Markov's inequality gives Theorem~\ref{thm:l1chernoff}.
\end{proof}

We remark that Talagrand used a slightly different method to go from bounds for this Rademacher process to
the existence of good subspace approximations.  Essentially, he pointed out that this process shows that
$\AA$ with large $n$ are well-approximated by ones with about $\frac{3}{4} n$ rows, with the approximation
quality improving as $n$ gets larger.  Iteratively replacing $\AA$ with a smaller approximation eventually leaves
one of size $O(d \log d / \epsilon^2)$ (in Talagrand's case; with our approach we would again get $d \log(d/\epsilon) / \epsilon^2$).

\section*{Acknowledgements}
We thank Jon Kelner and James Lee for guiding us through prior
works on this subject, and advice during the writing of this manuscript.
We also acknowledge Yin-Tat Lee, Jelani Nelson, Ilya Razenshteyn, Aaron Sidford and David Woodruff
for helpful discussions.

\bibliographystyle{alpha}
\bibliography{ref}

\begin{appendix}


\section{Properties of Lewis weights - proofs}
\label{sec:properties-proofs}
This appendix contains proofs of the properties stated in Section~\ref{sec:properties}.

\begin{lemma}
\label{lem:equivstable}
Parts \ref{def:stable1} and \ref{def:stable2} of Definition~\ref{def:stable} are equivalent.
\end{lemma}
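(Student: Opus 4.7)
The plan is to set up a correspondence between ``almost-Lewis structures'' on $\AA$ and exact Lewis structures on row-rescalings $\AA' = \RR \AA$, and use it to transport the exponent $c$ cleanly from one definition to the other. The engine of the whole calculation will be the elementary identity
\begin{equation*}
\left(\frac{2p}{p-2}\right)\left(\frac{1}{2} - \frac{1}{p}\right) = 1,
\end{equation*}
valid for all $p \neq 2$; the case $p = 2$ is degenerate since then $\BB = \AA$ and Lewis weights coincide with leverage scores, so both parts of Definition~\ref{def:stable} collapse to the same trivial statement. The setup is to restate $\alpha$-almost-Lewis in terms of $\BB_\ww := \WW^{1/2-1/p}\AA$: since $\BB_\ww^T\BB_\ww = \AA^T\WW^{1-2/p}\AA$, the leverage score of row $i$ in $\BB_\ww$ is $\ww_i^{1-2/p}\aa_i^T(\AA^T\WW^{1-2/p}\AA)^{-1}\aa_i$, and so $\ww$ is $\alpha$-almost Lewis iff $\ttau_i(\BB_\ww) \approx_\alpha \ww_i$.

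For $(2) \Rightarrow (1)$, I will fix any $\rr_i \in [\alpha^{-1}, \alpha]$ and define $\ww_i := \wwbar_i' \rr_i^{2p/(p-2)}$ on $\AA$, where $\wwbar'$ are the Lewis weights of $\AA' = \RR\AA$. A short calculation, checking that $\sum_j \wwbar_j'^{1-2/p}\rr_j^2 \aa_j\aa_j^T = \sum_j \ww_j^{1-2/p}\aa_j\aa_j^T$ via the identity, shows that $\BB_\ww$ equals the Lewis $\BB$-matrix of $\AA'$. Hence $\ttau_i(\BB_\ww) = \wwbar_i'$, making $\ww$ an $\alpha^{|2p/(p-2)|}$-almost-Lewis assignment for $\AA$, to which Part 2 applies and yields $\ww_i \approx_{\alpha^{c|2p/(p-2)|}} \wwbar_i$. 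The row ratio $\bb_i'/\bb_i = (\wwbar_i'/\wwbar_i)^{1/2-1/p}\rr_i$ then simplifies to $(\ww_i/\wwbar_i)^{1/2-1/p}$ (the $\rr_i$ factor cancels via the identity), and raising to the power $1/2 - 1/p$ shrinks the exponent from $c|2p/(p-2)|$ back down to $c$, giving Part 1 with exactly the same $c$.

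For $(1) \Rightarrow (2)$, I run the correspondence backwards. Given $\ww$ that is $\alpha$-almost Lewis for $\AA$, set $c_i := \aa_i^T(\AA^T\WW^{1-2/p}\AA)^{-1}\aa_i/\ww_i^{2/p} \in [\alpha^{-1}, \alpha]$ and $\rr_i := c_i^{-(1/2-1/p)} \in [\alpha^{-|1/2-1/p|}, \alpha^{|1/2-1/p|}]$. A direct substitution shows that $\AA' = \RR\AA$ has Lewis weights $\wwbar_i' = c_i \ww_i$ (because the induced middle matrix $\AA'^T\WWbar'^{1-2/p}\AA'$ collapses to $\AA^T\WW^{1-2/p}\AA$) and its Lewis-$\BB$ row is exactly $\bb_{\ww,i}$. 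Applying Part 1 at the tighter scale $\alpha^{|1/2-1/p|}$ in place of $\alpha$ (Part 1 is ``scale-invariant'': substituting $\alpha \leftarrow \alpha^s$ just scales the output exponent by $s$) bounds $(\ww_i/\wwbar_i)^{1/2-1/p}$ inside $[\alpha^{-c|1/2-1/p|}, \alpha^{c|1/2-1/p|}]$, and raising to the reciprocal power recovers $\ww_i \approx_{\alpha^c}\wwbar_i$.

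The main obstacle is exponent bookkeeping: each direction picks up a factor of either $|1/2 - 1/p|$ or its reciprocal $|2p/(p-2)|$, and it is precisely because these are reciprocals that the identity forces them to cancel so that the same $c$ works on both sides of the equivalence. Once the correspondence and the identity are in place, the rest is a routine algebraic verification.
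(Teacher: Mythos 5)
Your proposal is correct and follows the same underlying approach as the paper: setting up an explicit correspondence between $\alpha$-almost Lewis weights on $\AA$ and exact Lewis weights of row-rescalings $\AA' = \RR\AA$, then transporting the $\alpha^c$ bound via the exponent $1/2 - 1/p$ and its reciprocal. The paper's proof only works out the $(1)\Rightarrow(2)$ direction in detail (your $c_i$ are its $\ee_i$, your $\rr_i = c_i^{-(1/2-1/p)}$ is its $\EE^{1/p-1/2}$) and merely asserts that ``this argument works in reverse''; you have made that reverse direction explicit, including the observation that $\BB_\ww$ is literally the Lewis-$\BB$ of $\AA'$, which is the key point.
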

\begin{proof}
Given $\alpha$-almost Lewis weights $\ww$, define the errors $\ee$ as
\begin{equation*}
\ee_i = \aa_i^T \left( \AA^T \WW^{1 - 2/p} \AA \right)^{-1} \aa_i \ww_i^{-2/p}.
\end{equation*}
Then define $\AA'$ as $\EE^{1/p-1/2} \AA$ and $\BB'$ as $\WW^{1/2-1/p} \AA$. $\AA'$ has
Lewis weights $\ee_i \ww_i$, which reweight it to $\BB'$.  $\AA$,
when reweighted by the true weights $\wwbar$, will give $\BB = \WWbar^{1-2/p} \AA$.
$\AA$ is within a factor of $\alpha^{|1/2-1/p|}$ of $\AA'$; part \ref{def:stable1} then would imply
that $\BB$ is within a $\alpha^{c |1/2-1/p|}$ factor of $\BB$ and equivalently that
\begin{equation*}
\wwbar_i \approx_{\alpha^c} \ww_i
\end{equation*}
giving \ref{def:stable2}.
This argument works in reverse as well (interpreting the Lewis weights for any $\AA'$ as scaled
approximate Lewis weights for $\AA$), so $c$-stability is equivalent to the statement that
$\alpha$-almost Lewis weights are always within $\alpha^c$ of the true Lewis weights.  Note
that this implies that $(1+O(1/c))$-almost Lewis weights are within $O(1)$ of the true Lewis
weights.
\end{proof}

Now, we prove all the results from Section~\ref{sec:properties}.
\begin{proof}[Proof of Lemma~\ref{lem:fullstable}]
Here, we prove the $\alpha$-almost weight notion of stability, part~\ref{def:stable1} of Definition~\ref{def:stable}.
First, note that for $\alpha$-almost Lewis weights $\ww$,
\begin{equation*}
\textsc{LewisIterate}(\AA, p, 1, \ww) \approx_{\alpha^{p/2}} \ww.
\end{equation*}

In other words, the first step can only change the weights by a power of $p/2$.
But then since $\textsc{LewisIterate}$ is a contraction mapping by $|p/2 - 1|$,
the $t$\textsuperscript{th} step after this, when iterating, can change by only $\alpha^{p/2 |p/2 - 1|^t}$.
The iteration will converge to $\wwbar$, while changing by a total of at most
\begin{equation*}
\alpha^{p/2 \sum_{t=0}^\infty |p/2 - 1|^t} = \alpha^{\frac{p/2}{1 - |p/2 - 1|}}.
\end{equation*}
\end{proof}

\begin{proof}[Proof of Lemma~\ref{lem:sqrtstable}]
Here, we will find it more convenient to use the definition of stability in terms of a reweighted matrix,
part~\ref{def:stable2} of Definition~\ref{def:stable}.

We bound the stability by bounding it infinitesimally.  We let $\rr_i$ and $\ss_i$ be factors that can be
multiplied by the weights of the row, constrained so that $\SS \AA$, when multiplied by
$\WWbar^{1/2-1/p}$ for its Lewis weights $\WWbar$, gives $\RR \AA$.  We will bound the derivative
of $\rr$ with respect to $\ss$.  For any direction vector $\Delta(\ss)$, let the derivative
in that direction by $\Delta(\rr)$.  Then the stability claim is that the max of $\frac{\Delta(\ss)_i}{\ss_i}$
is at most $c$ times the max of $\frac{\Delta(\rr)_i}{\rr_i}$.

It is not apparent how to bound this directly.  However, note that saying that the derivative of $\rr$ with
respect to $\ss$ in the direction of $\Delta(\ss)$ equals $\Delta(\rr)$ is equivalent to saying that the derivative of $\ss$ with
respect to $\rr$ in the direction of $\Delta(\rr)$ equals $\Delta(\ss)$.  Thus, what we want is to show that
that the ratio can't \emph{decrease} too much when differentiating with respect to $\rr$.

For convenience, we will assume that $\AA^T \RR^2 \AA = \II$ (by a change of basis)
and $\norm{\aa_i}^2 = 1$ (by rescaling the rows of $\AA$ simultaneously with $\RR$).

The $\ss$ in terms of the $\rr$ are
\begin{equation*}
\ss_i = \rr_i^{2/p} (\aa_i^T (\AA^T \RR^2 \AA)^{-1} \aa_i)^{1/p - 1/2}.
\end{equation*}

The initial $\ss_i$ are just $\rr_i^{2/p}$.

Then given that $\norm{\aa_i}^2 = 1$ and $\AA^T \AA = \II$, the partial derivative of $\ss_i$ with respect to $\rr_j$ is
\begin{equation*}
\rr_i^{2/p-1} ( (1-2/p) \rr_i \rr_j (\aa_i^T \aa_j)^2 + 2/p \delta_{i,j} ).
\end{equation*}

We may define the symmetric matrix $\MM_{i,j} = ( (1-2/p) \rr_i \rr_j (\aa_i^T \aa_j)^2 + 2/p \delta_{i,j} )$.  Then the partial derivatives gives us
\begin{equation*}
\Delta(\ss) = \RR^{2/p-1} \MM \Delta(\rr).
\end{equation*}

The matrix of $\rr_i \rr_j \rr_j (\aa_i^T \aa_j)^2$ is positive semidefinite--we can write it as $\CC^T \CC$,
where the rows of $\CC$ are scaled tensor squared rows of $\AA$: $\cc_i = \rr_i (\aa_i \otimes \aa_i)$.
Thus, $\MM \succeq 2/p \II$, or equivalently, all eigenvalues of $\MM$ are at least $\frac{2}{p}$.
Thus,
\begin{equation*}
\norm{\MM \Delta(\rr)}_2 \geq \frac{2}{p} \norm{\Delta(\rr)}_2.
\end{equation*}

In other words,
\begin{equation*}
\norm{\RR^{1-2/p} \Delta(\ss)}_2 \geq \frac{2}{p} \norm{\Delta(\rr)}_2.
\end{equation*}

The elements of $\RR^{1-2/p} \Delta(\ss)$ are $\rr_i \frac{\Delta(\ss)_i}{\rr_i^{2/p}}$
$\rr_i \frac{\Delta(\ss)_i}{\ss_i}$.  Finally, since $\sum_i \rr_i^2 = d$,
\begin{equation*}
\norm{\RR^{1-2/p} \Delta(\ss)}_2 \leq \sqrt{d} \max_i \frac{\Delta(\ss)_i}{\ss_i},
\end{equation*}
so that
\begin{equation}
\label{eqn:bigl2}
\max_i \frac{\Delta(\ss)_i}{\ss_i} \geq \frac{2}{p \sqrt{d}} \norm{\Delta(\rr)}_2.
\end{equation}

However, we still need to address the possibility that $\norm{\Delta(\rr)}_2$ is much smaller
than $\max_i \frac{\Delta(\rr)_i}{\rr_i}$.  To do this, consider the $i$ that maximizes that
ratio (and assume by scaling that the ratio is 1).  Then, for that particular $i$,
\begin{equation*}
\frac{\Delta(\ss)_i}{\ss_i} = \frac{2}{p} - (1-2/p) \sum_j \rr_j^2 (\aa_i^T \aa_j)^2 \left ( \frac{\Delta(\rr)_j}{\rr_j} \right ).
\end{equation*}
Note that we can also express $\norm{\Delta(\rr)}_2^2$ as
\begin{equation*}
\sum_j \rr_j^2 \left ( \frac{\Delta(\rr)_j}{\rr_j} \right )^2.
\end{equation*}
Now, $0 \leq \rr_j^2 (\aa_i^T \aa_j)^2 \leq \rr_j^2$, while $\sum_j \rr_j^2 (\aa_i^T \aa_j)^2 = 1$
(since $\AA^T \RR^2 \AA = \II$), so
$\sum_j \rr_j^2 (\aa_i^T \aa_j)^2 \left ( \frac{\Delta(\rr)_j}{\rr_j} \right )$ is at most the average
of $\left ( \frac{\Delta(\rr)_j}{\rr_j} \right )$ in the largest mass-1 set of $j$s.  Thus
it is at most $\norm{\Delta(\rr)}_2$ (since weighted $\ell_2$ norms dominate $\ell_1$ norms when the total
mass sums to 1).
Plugging that in we get
\begin{equation}
\label{eqn:smalll2}
\max_i \frac{\Delta(\ss)_i}{\ss_i} \geq 2/p - (1-2/p) \norm{\Delta(\rr)}_2.
\end{equation}

With $\max_i \frac{\Delta(\rr)_i}{\rr_i} = 1$, we have both
Equation~\ref{eqn:bigl2} and Equation~\ref{eqn:smalll2}.
The former is increasing in $\norm{\Delta(\rr)}_2$ and the latter decreasing,
so the worst case for the minimum of the two of these is when they are equal,
which occurs at $\norm{\Delta(\rr)}_2 = \frac{\frac{2}{p}}{\frac{2}{p \sqrt{d}} + (1-2/p)}$.
This bound is
\begin{align*}
\max_i \frac{\Delta(\ss)_i}{\ss_i} &\geq \frac{\frac{4}{p^2 \sqrt{d}}}{\frac{2}{p \sqrt{d}} + (1-2/p)} \\
&= \frac{4}{2p + (p^2-2p) \sqrt{d}} \\
&= \Omega \left ( \frac{1}{\sqrt{d}} \right ).
\end{align*}
\end{proof}

\begin{proof}[Proof of Lemma~\ref{lem:fullmonotone}]
Assume $p < 2$ (the case for $p=2$ follows from the fact
that $\ell_2$ Lewis weights are just leverage scores).
Let $r$ be the maximum ratio (over $i$) of $\wwbar'_i$ to $\wwbar_i$, which occurs at some row $i^*$.
We want to show than $r \leq 1$; we will suppose that $r > 1$ and obtain a contradiction.

Now, we have
\begin{equation*}
\AA'^T \WWbar'^{1-2/p} \AA' \succeq r^{1-2/p} \AA^T \WWbar^{1-p/2} \AA,
\end{equation*}
since each row of $\AA$ contributes at least $r^{1-2/p}$ as much to the left hand side as to the right.  But the
$\wwbar'_i$ are equal to $(\aa_i^T (\AA'^T \WWbar'^{1-2/p} \AA')^{-1} \aa_i)^{p/2}$, and
\begin{align*}
(\aa_i^T (\AA'^T \WWbar'^{1-2/p} \AA')^{-1} \aa_i)^{p/2} &\leq r^{-p/2 (1-2/p)} (\aa_i^T (\AA^T \WWbar^{1-2/p} \AA)^{-1} \aa_i)^{p/2} \\
&= r^{1-p/2} \wwbar_i
\end{align*}
Thus, in particular, no weight increases by more than $r^{1-p/2}$.  But with $p < 2$, if $r > 1$,
$r^{1-p/2} < r$.  Thus,
\begin{equation*}
\wwbar'_{i^*} \leq r^{1-p/2} \wwbar_{i^*} < r \wwbar_{i^*}
\end{equation*}
This contradicts the fact that $r = \frac{\wwbar'_{i^*}}{\wwbar_{i^*}}$.
\end{proof}

\begin{proof}[Proof of Lemma~\ref{lem:weakmonotone}]
First, assume without loss of generality that $\AA^T \WWbar^{1 - 2/p} \AA$ is equal to the identity matrix.
Then the claim is that all eigenvalues of $\AA'^T \WWbar'^{1 - 2/p} \AA'$ are at least $d^{2/p - 1}$.

For convenience, define $\PP = \AA'^T \WWbar'^{1 - 2/p} \AA'$, and let $\uu$ be an eigenvector of
$\PP$ of minimum eigenvalue, normalized so that $\norm{u}_2 = 1$.  Let $\lambda$ be the eigenvalue
of $\uu$, $\uu^T \PP \uu$.  Then one may see by looking at an orthogonal eigenbasis that
\begin{equation*}
\PP^{-1} \succeq \lambda^{-1} \uu \uu^T.
\end{equation*}

We further define normalized rows $\vv_i = \ww_i^{-1/p} \aa_i$, so that $\AA^T \WWbar^{1 - 2/p} \AA$,
the identity matrix, may be written as
\begin{equation*}
\sum_i \wwbar_i \vv_i \vv_i^T
\end{equation*}
with the $\vv_i$ each being unit vectors.  $\PP$ is lower bounded by its contributions from the original rows
($\AA^T \WWbar'^{1-2/p} \AA$); expressing the latter in terms of the $\vv$ gives
\begin{align*}
\PP &\succeq \sum_i \wwbar_i \frac{(\wwbar'_i)^{1-2/p}}{\wwbar_i^{1-2/p}} \vv_i \vv_i^T \\
&= \sum_i \wwbar_i (\vv^T \PP^{-1} \vv)^{p/2-1} \vv_i \vv_i^T.
\end{align*}

Since $\PP^{-1} \succeq \lambda^{-1} \uu \uu^T$ this further gives
\begin{equation*}
\PP \succeq \sum_i \ww_i \lambda^{1-p/2} (\uu^T \vv)^{p-2} \vv_i \vv_i^T.
\end{equation*}
What we have done is provide a lower bound for the quadratic form $\PP$ given that it has
a small eigenvalue, using the fact that relative leverage scores cannot have decreased
below the leverage scores of the restriction to the eigenspace.  This is not affected by
any increases in the other eigenvalues,

Plugging $\uu$ into the quadratic form implies
\begin{equation*}
\uu^T \PP \uu \geq \lambda^{1-p/2} \wwbar_i (\uu^T \vv)^p.
\end{equation*}

Now, the original quadratic form was the identity, so $\sum_i \ww_i = d$ and $\sum_i \wwbar_i (\uu^T \vv)^2 = 1$.
By a weighted power-mean inequality, $\sum_i \wwbar_i (\uu^T \vv)^p \geq d^{1-p/2}$.

Plugging in $\lambda = \uu^T \PP \uu$ finally gives
\begin{align*}
\lambda &\geq \lambda^{1-p/2} d^{1-p/2} \\
\lambda^{p/2} &\geq d^{1-p/2} \\
\lambda &\geq d^{2/p - 1}.
\end{align*}
\end{proof}

\section{Reduction to Uniform Weight Sampling}
\label{sec:reduct}
Here, we prove Lemma~\ref{lem:momentreduct}:

\momentreduct*

Before proving the general reduction, we first make a much weaker claim:
\begin{lemma}
\label{lem:weakbound}
There exist constants $C_1$, $C_2$, $C_3$ such that for any $p < 2$, for any matrix $\AA$, there
exists a matrix $\AA'$ with at most $C_1 d^2$ rows such that
\begin{equation*}
\AA'^T {\WWbar'}^{1-2/p} \AA' \succeq \AA^T \WWbar^{1-2/p} \AA
\end{equation*}
and for all $\xx$,
\begin{equation*}
\norm{\AA' \xx}_p^p \preceq C_2 \norm{\AA \xx}_p^p.
\end{equation*}
Furthermore the Lewis weight of each row of $\AA'$ is at most $\frac{C_3}{d}$.
\end{lemma}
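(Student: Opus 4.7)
The core idea is a row-splitting construction: for each row $\aa_i$ of $\AA$ with Lewis weight $\wwbar_i$, replace $\aa_i$ with $k_i \defeq \max(1, \lceil d\wwbar_i/C_3\rceil)$ identical copies of the rescaled vector $k_i^{-1/p}\aa_i$. Each block of $k_i$ copies contributes exactly $k_i \cdot k_i^{-1}|\aa_i^T\xx|^p = |\aa_i^T\xx|^p$ to $\|\AA'\xx\|_p^p$ and exactly $k_i\cdot(\wwbar_i/k_i)^{1-2/p}\cdot k_i^{-2/p}\aa_i\aa_i^T = \wwbar_i^{1-2/p}\aa_i\aa_i^T$ to the Lewis Gram under candidate weights $\wwbar_i/k_i$, so both $\|\AA\xx\|_p$ and the Lewis Gram matrix are preserved exactly. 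Substituting these candidate weights and the unchanged Gram into Definition~\ref{def:lewis} then verifies that they are the true Lewis weights of the split matrix, and each is at most $C_3/d$. Thus all three conditions of the lemma hold with $C_2 = 1$, modulo the row-count bound.

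The total row count after splitting is $\sum_i k_i \le n + d^2/C_3$, using $\sum_i\wwbar_i \le d$ from Fact~\ref{fact:tau}. When $n = O(d^2)$ this is already $O(d^2)$ and we are done. The obstacle is the regime $n \gg d^2$: most rows of $\AA$ have $\wwbar_i \ll 1/d$, get $k_i = 1$, and are carried over unchanged, so the count stays near $n$ rather than dropping to $O(d^2)$.

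To handle the large-$n$ regime I would insert a preliminary row-reduction step before splitting. Applying the $\ell_2$ matrix Chernoff bound (Fact~\ref{fact:tau}, slightly oversampled for one-sided $\succeq$-domination) to $\BB \defeq \WWbar^{1/2-1/p}\AA$, whose leverage scores are exactly the Lewis weights, and taking the corresponding $O(d\log d)$ rows of $\AA$ rescaled by $\pp_i^{-1/p}$ (with $\pp_i \propto \wwbar_i$), produces an $\AAhat$ whose Lewis Gram dominates the original up to a constant factor. The main obstacle is then showing $\|\AAhat\xx\|_p \le O(1)\|\AA\xx\|_p$ uniformly in $\xx$ without circularly invoking full Lewis-weight $\ell_p$ sampling. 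The plan is to first split $\AA$ into a matrix with uniform Lewis-weight bound $C_3/d$, then iteratively coin-flip subsample, justifying each halving by the uniform-weight moment bound (Lemma~\ref{lem:talagrandL1} and its $p<2$ analog, both proved directly in Section~\ref{sec:ellp} via Rademacher comparison and Khintchine's inequality, and so not depending on Lemma~\ref{lem:weakbound}), with geometrically decreasing error parameters so that the telescoping product of distortions stays a bounded constant $C_2$. Between halvings I would re-split as needed to maintain the uniform bound, appealing to the stability result of Lemma~\ref{lem:fullstable}. After $O(\log(n/d^2))$ iterations the row count is $O(d^2)$, and a final splitting pass yields the desired $\AA'$.
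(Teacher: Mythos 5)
Your construction in the first paragraph (splitting rows to force uniform Lewis weights $\le C_3/d$) is correct and you correctly observe it handles the $n = O(d^2)$ case, but the bulk of the work is indeed the large-$n$ regime, and there your route diverges substantially from the paper's. The paper's proof is much terser: sample $O(d^2)$ rows of $\AA$ proportional to Lewis weights and rescale by a constant. The Gram-domination condition follows from $\ell_2$ matrix Chernoff applied to $\BB = \WWbar^{1/2-1/p}\AA$ plus the stability Lemma~\ref{lem:fullstable} (which you also correctly identify). But for the $\ell_p$ upper bound, the paper does \emph{not} iterate; it cites Schechtman (Proposition 4), which establishes the needed one-sided bound $\norm{\AA'\xx}_p^p \le C_2 \norm{\AA\xx}_p^p$ directly via a union bound over a net. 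The crucial observation is that a one-sided constant-factor upper bound on the $\ell_p$ norm is a much weaker requirement than a two-sided $(1\pm\epsilon)$ approximation, and can be had non-circularly from an external, elementary result. Your proposal instead rebuilds the full two-sided approximation machinery via Talagrand-style iterative halving, which is indeed the method the paper notes Talagrand used at the end of Section~\ref{sec:concentration}, but it is considerably heavier than what the lemma requires.

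There is also a concrete gap in your iterative halving. You propose to ``re-split as needed to maintain the uniform bound $C_3/d$'' between halvings, and then claim ``geometrically decreasing error parameters.'' These two statements are inconsistent. With the Lewis-weight ceiling held fixed at $C_3/d$, the moment bound of Lemma~\ref{lem:momBound} forces a per-halving distortion of order $\sqrt{(C_3/d)\log m}$ at a stage with $m$ rows; as $m$ halves, this quantity decreases only logarithmically, not geometrically, so the product of distortions over the $\log(n/d^2)$ halvings is not bounded by a constant for all $n$. To get a genuinely geometric error series (and this is what Talagrand's argument actually exploits), you must let the uniform bound $U$ scale inversely with the current row count, i.e.\ maintain $U \approx d/m$ rather than $U = C_3/d$. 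Then each error is $\sqrt{(d/m)\log m}$, which grows by roughly $\sqrt 2$ per step and is therefore dominated by the final $m = \Theta(d^2)$ term, summing to $O(\sqrt{\log d / d}) = o(1)$. As written, your procedure does not achieve this, and the claimed telescoping bound does not follow.
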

Note that we are only requiring the \emph{existence} of such an approximation, and that it is of size $\theta(d^2)$.
Also, we only will need this result to reduce the number of rows (and not the Lewis weight upper bound) subject to
random splitting; arguments that do not depend on the number of rows, such as that in~\cite{Talagrand90}, can simply
use split up versions of $\AA$.

\begin{proof}
Such an $\AA'$ simply obtained by sampling rows from $\AA$ proportional to Lewis weights, then scaling the result up
by a constant; then, with appropriate setting of constants, both conditions will hold with high probability.
The first condition follows from matrix Chernoff bounds plus Lemma~\ref{lem:fullstable}.  The second can be shown
with a simple union bound argument over a net, and was first shown in \cite{Schechtman}, Proposition 4.
\end{proof}

Now, we prove the actual reduction.
\begin{proof}[Proof of Lemma~\ref{lem:momentreduct}]
The argument proceeds by a standard symmetrization argument, as is used, for example, in the $\ell_2$ case in
\cite{RudelsonV07}.

For convenience, we let
\begin{equation*}
U = g(p, N + C_1 d^2, d, \epsilon / C_2, \delta)
\end{equation*}
and
\begin{equation*}
M = \expec{\SS}{\left ( \max_{\norm{\AA \xx}_p = 1} | \norm{\SS \AA \xx}_p^p - 1 | \right )^l}.
\end{equation*}

First, consider the functional taking $l$th power of the maximum absolute value taken by a function
(the expectation is of a quantity of this form).  This functional is convex, and
$\norm{\SS \AA \xx}_p^p - 1$ has mean 0.  Then our expectation satisfies
\begin{equation*}
M \leq \expec{\SS,\SS'}{\left ( \max_{\norm{\AA \xx}_p = 1} | \norm{\SS \AA \xx}_p^p - \norm{\SS' \AA \xx}_p^p | \right )^l}
\end{equation*}
where $\SS$ and $\SS'$ are two independent copies of the sampling process--this holds because subtracting the second copy is
adding a mean 0 random variable, which can only increase the expectation of any convex function.

We refer to the indices of the specific rows chosen for $\SS$ as $i_k$ (with $i'_k$ for $\SS'$).

We may explicitly write the inner expression here as a sum:
\begin{equation*}
\norm{\SS \AA \xx}_p^p - \norm{\SS' \AA \xx}_p^p = \left ( \sum_{k=1}^N \frac{|\aa_{i_k}^T \xx|^p}{\pp_{i_k}} \right ) - \left ( \sum_{k=1}^N \frac{|\aa_{i'_k}^T \xx|^p}{\pp_{i'_k}} \right ).
\end{equation*}

Since $i_k$ and $i'_k$ have identical distributions, we may independently randomly swap each of these pairs according to a random sign variable $\sigma_k$ without changing the distribution.
This swap would cause the first row to be subtracted and the second added, rather than vice versa
That random process is
\begin{equation*}
\left ( \sum_{k=1}^N \sigma_k \frac{|\aa_{i_k}^T \xx|^p}{\pp_{i_k}} \right ) - \left ( \sum_{k=1}^N \sigma_k \frac{|\aa_{i'_k}^T \xx|^p}{\pp_{i'_k}} \right ).
\end{equation*}

This is a sum of two (non-independent) copies of the same process: $\sum_{k=1}^N \sigma_k \frac{|\aa_{i_k}^T \xx|^p}{\pp_{i_k}}$.
Having two copies can only multiply an $l$-moment by $2^l$.  Thus we have
\begin{equation*}
M \leq 2^l \expec{i, \sigma}{\left ( \max_{\norm{\AA \xx}_1 = 1} \left | \sum_{k=1}^N \sigma_k \frac{|\aa_{i_k}^T \xx|^p}{\pp_{i_k}} \right | \right )^l}.
\end{equation*}

Now, we may consider the expected value of this with the rows taken ($i_k$) fixed, varying the $\sigma_k$.

First, we apply Lemma~\ref{lem:weakbound}, getting such an $\AA'$ with $C_1 d^2$ rows.  Now, we consider that
adding an additional $\sigma_i | (\aa')_i^T \xx |^p$ term can only increase the energy.  We define $\AA''$ as
$\SS \AA$ with the rows of $\AA$ appended.  The expected value is then at most
\begin{equation*}
\expec{\sigma}{\left ( \max_{\norm{\AA \xx}_1 = 1} \left | \sum_{k=1}^N \sigma_k |(\aa'')_k^T \xx|^T \right | \right )^l}.
\end{equation*}
Lemma~\ref{lem:fullmonotone} implies that the Lewis weight
of each row of $\AA''$ from $\AA'$ is at most $\frac{C_3}{d}$ (since adding the other rows can't bring them down)
while the quadratic ${\AA''}^T \WWbar'' \AA'' \succeq \AA^T \WWbar^{1-2/p} \AA$.  The latter implies that the
Lewis weight of each row from $\AA$ is at most $\frac{1}{U}$ (since it is weighted to have
$(\yy^T (\AA^T \WWbar^{1-2/p} \AA)^{-1} \yy)^{p/2} = \frac{1}{U})$.  Thus, assuming that the Lewis weight
 upper bounds are larger than $O(1/d)$, each of these has a Lewis weight of at most $\frac{1}{U}$,
as will be needed. To modify the proof to remove the need for this assumption, one can add potentially
multiple, downscaled copies of $\AA'$ depending on the bound on the remaining rows.

For a particular set of rows taken defining a matrix $\SS$, we let
\begin{equation*}
F = \max_{\norm{\AA \xx}_p = 1} | \norm{\SS \AA \xx}_p^p - 1 |.
\end{equation*}

Then for the corresponding matrix $\AA''$, we have, for all $\xx$
\begin{equation*}
\norm{\AA'' \xx}_p^p \leq (1+C+F) \norm{\AA \xx}_p^p.
\end{equation*}

This means that
\begin{equation*}
\max_{\norm{\AA \xx}_1 = 1} \left | \sum_{k=1}^N \sigma_k |(\aa'')_k^T \xx|^T \right |
\leq (1+C+F) \max_{\norm{\AA'' \xx}_1 = 1} \left | \sum_{k=1}^N \sigma_k |(\aa'')_k^T \xx|^T \right |
\end{equation*}
so that, applying the given moment bound on the random sign process,
\begin{align*}
\expec{\sigma}{\left ( \max_{\norm{\AA \xx}_1 = 1} \left | \sum_{k=1}^N \sigma_k |(\aa'')_k^T \xx|^T \right | \right )^l} &\leq (1+C+F)^l \expec{\sigma}{\left ( \max_{\norm{\AA'' \xx}_1 = 1} \left | \sum_{k=1}^N \sigma_k |(\aa'')_k^T \xx|^T \right | \right )^l} \\
&\leq (1+C+F)^l \epsilon^l \delta.
\end{align*}

$(1+C+F)^l \leq 2^{l-1} ((1+C)^l + F^l)$.  By definition, the expected value of $F^l$ under a random choice of $\SS$ is precisely $M$, the moment we are trying to bound.  We thus have
\begin{align*}
M &\leq 2^{2l-1} ((1+C)^l + M) \epsilon^l \delta \\
M &\leq \frac{2^{2l-1} (1+C)^l \epsilon^l \delta}{1 - 2^{2l-1} \epsilon^l \delta}.
\end{align*}

Then for some $\epsilon = O(1)$, the denominator is at least $\frac{1}{2}$, and $M \leq ((4+4C) \epsilon)^l \delta$.  Thus for sufficiently small $\epsilon$, obtaining the result for the random sign process with $\frac{\epsilon}{4+4C}$ gives what is needed.
\end{proof}

\end{appendix}

\end{document}